\newtheorem{lem}{Lemma}
\newtheorem{defn}[lem]{Definition}
\newtheorem{theor}[lem]{Theorem}
\newtheorem{prop}[lem]{Proposition}
\newtheorem{fact}[lem]{Fact}
\newtheorem{obs}[lem]{Observation}
\newtheorem{quest}[lem]{Question}
\newcommand{\defeq}{\vcentcolon=}
\newcommand{\abs}[1]{|#1|}
\newcommand{\cat}[1]{\mathbf{#1}}
\newcommand{\logic}[1]{#1}
\newcommand{\sheaf}[1]{\mathcal{#1}}
\newcommand{\bigO}{\mathcal{O}}
\title{Cohomology in Constraint Satisfaction and Structure Isomorphism} 
\author{Adam {Ó Conghaile}}{Computer Laboratory, Cambridge University, United Kingdom \and The Alan Turing Institute, United Kingdom \and \url{aoconghaile.github.io} }{ac891@cam.ac.uk}{https://orcid.org/
0000-0002-3032-5514}{}
\authorrunning{A. Ó Conghaile} 
\keywords{constraint satisfaction problems, finite model theory, descriptive complexity, rank logic, Weisfeiler-Leman algorithm, cohomology}
\begin{document}

\maketitle

\begin{abstract}
%

Constraint satisfaction (CSP) and structure isomorphism (SI) are among the most well-studied computational problems in Computer Science. While neither problem is thought to be in \texttt{PTIME}, much work is done on \texttt{PTIME} approximations to both problems. Two such historically important approximations are the $k$-consistency algorithm for CSP and the $k$-Weisfeiler-Leman algorithm for SI, both of which are based on propagating local partial solutions. The limitations of these algorithms are well-known – $k$-consistency can solve precisely those CSPs of bounded width and $k$-Weisfeiler-Leman can only distinguish structures which differ on properties definable in $C^k$.
In this paper, we introduce a novel sheaf-theoretic approach to CSP and SI and their approximations. We show that both problems can be viewed as deciding the existence of global sections of presheaves, $\sheaf{H}_k(A,B)$ and $\sheaf{I}_k(A,B)$ and that the success of the $k$-consistency and $k$-Weisfeiler-Leman algorithms correspond to the existence of certain efficiently computable subpresheaves of these. Furthermore, building on work of Abramsky and others in quantum foundations, we show how to use Čech cohomology in $\sheaf{H}_k(A,B)$ and $\sheaf{I}_k(A,B)$ to detect obstructions to the existence of the desired global sections and derive new efficient cohomological algorithms extending $k$-consistency and $k$-Weisfeiler-Leman. We show that cohomological $k$-consistency can solve systems of
equations over all finite rings and that cohomological Weisfeiler-Leman can
distinguish positive and negative instances of the Cai-F\"urer-Immerman property
over several important classes of structures.

\end{abstract}

\section{Introduction}

Constraint satisfaction problems (CSP) and structure isomorphism (SI) are two of the most
well-studied problems in complexity theory. Mathematically speaking, an instance of
one of these problems takes a pair of structures $(A,B)$ as input and asks
whether there is a homomorphism
$A \rightarrow B$ for CSP or an isomorphism $A \cong B$ for
SI. These problems are not in general thought to be tractable. Indeed the general
case of CSP is \texttt{NP-Complete} and restricting our structures to graphs the
best known algorithm for SI is Babai's quasi-polynomial time algorithm~\cite{babai}.
As a result, it is common in complexity and finite model theory to study approximations
of the relations $\rightarrow$ and $\cong$.

The $k$-consistency and $k$-Weisfeiler-Leman\footnote{The algorithm we call ``$k$-Weisfeiler-Leman'' is more commonly called ``$(k-1)$-Weisfeiler-Leman'' in the literature, see for example \cite{cfi}. We prefer ``$k$-Weisfeiler-Leman'' to emphasise its relationship to $k$-variable logic and sets of $k$-local isomorphisms.} algorithms efficiently determine two such approximations
to $\rightarrow$ and $\cong$ which we call $\rightarrow_k$ and $\equiv_k$. These relations have many characterisations
in logic and finite model theory, for example in \cite{federvardi} and \cite{cfi}.
One that is particularly useful is that of the existence of
winning strategies for Duplicator in certain Spoiler-Duplicator games with $k$ pebbles~\cite{Kolaitis1995, IL}.
For both of these games Duplicator's winning strategies can be represented as non-empty
sets $S \subset \mathbf{Hom}_k(A,B)$ of $k$-local partial homomorphisms which satisfy
some extension properties and connections between these games have been studied before.
For example, a joint comonadic semantics is given by the pebbling comonad of Abramsky,
Dawar and Wang~\cite{pebb}.

The limitations of these approximations are well-known.
In particular, it is known that $k$-consistency only solves CSPs of \emph{bounded width} and
$k$-Weisfeiler-Leman can only distinguish structures which differ on properties expressible in the infinitary counting
logic $\logic{C}^k$. Feder and Vardi~\cite{federvardi} showed that CSP encoding linear
equations over the finite fields do not have bounded width, while Cai, F\"urer, and Immerman~\cite{cfi}
demonstrated an efficiently decidable graph property which is not expressible in
 $\logic{C}^k$ for any $k$.

In the present paper, we introduce a novel approach to the CSP and SI problems
based on presheaves of $k$-local partial homomorphisms and isomorphisms, showing
that the problems can be reframed as deciding whether certain presheaves admit global sections.
We show that the classic $k$-consistency and $k$-Weisfeiler-Leman algorithms can
be derived by computing greatest fixpoints of presheaf operators which remove some
efficiently computable obstacles to global sections. Furthermore, we show how
invariants from sheaf cohomology can be used to find further obstacles to combining
local homomorphisms and isomorphisms into global ones. We use these to construct new efficient extensions
to the $k$-consistency and $k$-Weisfeiler-Leman algorithms computing relations
$\rightarrow_k^{\mathbb{Z}}$ and $\equiv^{\mathbb{Z}}_k$ which refine $\rightarrow_k$ and $\equiv_k$.

The application of presheaves has been particularly successful in computer science
in recent decades with applications in semantics~\cite{staton, statonfiori}, information theory~\cite{bennequin}
and quantum contextuality~\cite{ab_brand, ab_man_barb, abramsky_et_al}.
This work draws in particular on the application
of sheaf theory to quantum contextuality, pioneered by Abramsky and Brandenburger~\cite{ab_brand}
and developed by Abramsky and others for example in \cite{ab_man_barb} and \cite{abramsky_et_al}.

Using this work, we prove that these new cohomological
algorithms are strictly stronger than $k$-consistency and $k$-Weisfeiler-Leman. In
particular, we show that cohomological $k$-consistency decides solvability of linear equations
with $k$ variables per equation over all finite rings and that there is a fixed $k$
such that $\equiv^{\mathbb{Z}}_k$ distinguishes structures which differ on Cai, F\"urer and Immerman's property.

It is also interesting to compare $\rightarrow^{\mathbb{Z}}_k$ and $\equiv^{\mathbb{Z}}_k$ with
other well-studied refinements of $\rightarrow_k$ and $\equiv_k$. For $\rightarrow_k$,
such refinements include the algorithms of Bulatov~\cite{Bulatov} and Zhuk~\cite{Zhuk}
which decide all tractable CSPs and the algorithms of Brakensiek, Guruswami, Wrochna
and \v{Z}ivn\'{y}~\cite{aip} and Ciardo and \v{Z}ivn\'{y}~\cite{clap}
for Promise CSPs. For $\equiv_k$, comparable approximations to $\cong$ include linear
Diophantine equation methods employed by Berkholz and Grohe~\cite{BerkholzGrohe}
and the invertible-map equivalence of Dawar and Holm~\cite{anuj_bjarki} which bounds
the expressive power of rank logic. The latter was recently used by Lichter~\cite{lich} to demonstrate a property which is decidable in \texttt{PTIME} but not expressible in rank logic. In our paper, we show that
$\equiv^{\mathbb{Z}}_k$, for some fixed $k$, can distinguish structures which differ on this property.
Comparing $\rightarrow^{\mathbb{Z}}_k$ to the Bulatov-Zhuk algorithm and algorithms for PCSPs remains a direction
for future work.

The rest of the paper proceeds as follows. Section \ref{sec:back} establishes some
background and notation.
Section \ref{sec:presheaf} introduces the presheaf formulation of CSP and SI and
new formulations of $k$-consistency and $k$-Weisfeiler-Leman in this framework.
Section \ref{sec:cohom} demonstrates how to apply aspects of sheaf cohomology to CSP and
SI and defines new algorithms along these lines.
Section \ref{sec:effect} surveys the strength of these new cohomological algorithms.
Section \ref{sec:conclusion} concludes
with some open questions and directions for future work.
Major proofs and additional background are left to the appendices.

\section{Background and definitions}\label{sec:back}


In this section, we record some definitions and background which are necessary for
our work.

\subsection{Relational structures \& finite model theory}
Throughout this paper we use the word \emph{structure} to mean a relational
structure over some finite relational signature $\sigma$. A structure $A$ consists of
an underlying set (which we also call $A$) and for each relational symbol $R$
of arity $r$ in $\sigma$ a subset $R^A \subset A^{r}$ or tuples related by $R$.
A \emph{homomorphism} of structures $A, B$ over a common signature is a function  between
the underlying sets $f \colon A \rightarrow B$ which preserves related tuples. An
\emph{isomorphism} of structures is a bijection between the underlying sets which
both preserves and reflects related tuples. A partial function $s \colon A \rightharpoonup B$
(seen as a set $s \subset A\times B$) is a \emph{partial homomorphisms} if it
preserves the related tuples in $\mathbf{dom}(s)$.
$s$ is a \emph{partial isomorphism} if it is a bijection onto its image and both
preserves and reflects related tuples. A partial homomorphism or isomorphism is
said to be \emph{$k$-local} if $\abs{\mathbf{dom}(s)} \leq k$. For two structures
over the same signature we write $\mathbf{Hom}_k(A,B)$ and $\mathbf{Isom}_k(A,B)$
respectively for the sets of $k$-local homomorphisms and isomorphisms from
$A$ to $B$.

In the paper, we make reference to several important logics from finite model theory
and descriptive complexity theory. The logics we make reference to in this paper are as follows.
\begin{itemize}
  \item Fixed-point logic with
  counting (written FPC) is first-order logic extended with operators for inflationary fixed-points and
  counting, for example see \cite{fpc}.
  \item For any natural number $k$, $\logic{C}^k$ is infinitary first-order logic extended with
  counting quantifiers with at most $k$ variables. This logic bounds the expressive power of FPC in the sense that, for each $k'$ there exists $k$ such that any FPC formula in $k'$ variables is equivalent to one in $\logic{C}^{k}$. We write $\logic{C}^{\omega}$ for the union of these logics.
  \item Rank logic is first-order logic extended with operators for inflationary fixed-points and
  computing ranks of matrices over finite fields, see \cite{pakusa}.
  \item Linear algebraic logic is first-order infinitary logic extended with quantifiers for computing \textit{all} linear algebraic functions over finite fields, see \cite{dgp}. This logic bounds rank logic in the sense described above.
\end{itemize}

At different points in the history of descriptive complexity theory,
both FPC and rank logic were considered as candidates for ``capturing \texttt{PTIME}''
and thus refuting a well-known conjecture of Gurevich~\cite{gure}. Each has since been proven not
to capture \texttt{PTIME}, for FPC see Cai, F\"urer and Immerman~\cite{cfi}, for rank logic
see Lichter~\cite{lich}. Infinitary logics such as $\logic{C}^{\omega}$ and linear algebraic logic are capable of expressing properties which are not decidable in \texttt{PTIME} but have been shown not to contain any logic which does not capture \texttt{PTIME}. For $\logic{C}^{\omega}$, see Cai, F\"urer and Immerman~\cite{cfi} and for linear algebraic logic, see Dawar, Gr\"adel, and Lichter~\cite{dgl}.

\subsection{Constraint satisfaction problems \& Structure Isomorphism}
Assuming a fixed relational signature $\sigma$, we write $CSP$ for the
set of all pairs of $\sigma$-structures $(A,B)$ such that there is a homomorphism
witnessing $A \rightarrow B$. We use $CSP(B)$ to denote the set of
relational structures $A$ such that $(A,B) \in CSP$. We also use $CSP$ and $CSP(B)$
to denote the decision problem on these sets. For general $B$, $CSP(B)$ is well-known to be
\texttt{NP-complete}. However for certain structures $B$ the problem is in \texttt{PTIME}.
Indeed, the Bulatov-Zhuk Dichotomy Theorem (formerly the Feder-Vardi Dichotomy
Conjecture) states that, for any $B$, $CSP(B)$ is either \texttt{NP}-complete
or it is \texttt{PTIME}. Working out efficient algorithms which decide
$CSP(B)$ for larger and larger classes of $B$ was an active area of research
which culminated in Bulatov and Zhuk's exhaustive classes of algorithms \cite{Bulatov,Zhuk}.

Similarly, we write $SI$ for the
set of all pairs of $\sigma$-structures $(A,B)$ such that there is an isomorphism
witnessing $A \cong B$. The decision problem for this set is also thought not to be
in \texttt{PTIME} however there are no general hardness results known for this. The
best known algorithm (in the case where $\sigma$ is the signature of graphs) is
Babai's~\cite{babai} which is quasi-polynomial.

There are many efficient algorithms which approximate the decision problems of
$CSP$ and $SI$. Two such examples, which are of particular importance to this paper,
are the $k$-consistency and $k$-Weisfeiler-Leman algorithms. Explicit modern presentations of
these algorithms can be seen, for example, in \cite{AtseriasBulatovDalmau} and \cite{KieferThesis}.
We instead focus on equivalent formulations in terms of positional Duplicator winning strategies.
These are given by Kolaitis and Vardi~\cite{Kolaitis1995} for $k$-consistency and Hella~\cite{lhipt}
for $k$-Weisfeiler-Leman. In the case of $k$-consistency,
a pair $(A,B)$ is accepted by the algorithm if and only if there is a non-empty subset
$S \subset \mathbf{Hom}_k(A,B)$ which is downward-closed and satisfies the so-called forth
property. This means any $s \in S$ with $\abs{\mathbf{dom}(s)}< k$ satisfies the property
$\mathbf{Forth}(S,s)$ which is defined as  $$\forall a \in A, \
        \exists b \in B \text{ s.t. } s \cup \{(a,b)\} \in S.$$ If such an $S$ exists
we write $A \rightarrow_k B$. The similar strategy-based characterisation of $k$-Weisfeiler-Leman is
captured by non-empty downward-closed sets $S \subset \mathbf{Isom}_k(A,B)$ where each
element satisfies the bijective forth property $ \mathbf{BijForth}(S,s)$ defined by
$$\exists b_s\colon A \rightarrow B \text{ a bijection s.t. }
\forall a \in A \ s \cup \{(a,b_s(a)) \} \in S.$$
If such an $S$ exists
we write $A \equiv_k B$. For more details, see Appendix \ref{sec:app_algs}.

\subsection{Presheaves \& cohomology}

Here we give a brief account of the category-theoretic preliminaries for this paper.
For a more comprehensive introduction to category theory we refer to Chapter 1 of Leinster's textbook~\cite{Leinster}
and for a complete account of presheaves we refer to Chapter 2 of MacLane and Moerdijk~\cite{MacLane}.

Given two categories $\cat{C}$ and $\cat{S}$, an $\cat{S}$-valued presheaf over $\cat{C}$ is a contravariant
functor $\mathcal{F} \colon \cat{C}^{op} \rightarrow \cat{S}$. We will assume that
$\cat{C}$ is some subset of the powerset of some set $X$ with subset inclusion as
the morphisms. We call $X$ the underlying space of $\cat{C}$.
For this reason, when $U' \subset U$ in $\cat{C}$ we write $(\cdot)_{\mid_{U'}}$
for the restriction map $\sheaf{F}(U' \subset U)\colon \sheaf{F}(U) \rightarrow \sheaf{F}(U')$.
We assume $\cat{S}$ is either the category $\mathbf{Set}$ of sets or
the category $\mathbf{AbGrp}$ of abelian groups. We call $\mathbf{AbGrp}$-valued presheaves,
abelian presheaves. $\mathbf{Set}$-valued presheaves are just called presheaves or presheaves
of sets where there is ambiguity.

For any $\cat{C}$ and $\cat{S}$ as above, the category of presheaves $\cat{PrSh}(\cat{C}, \cat{S})$ has as objects the presheaves $\sheaf{F}\colon \cat{C}^{op} \rightarrow \cat{S}$ and, as morphisms, natural transformations between these functors. If $\cat{S}$ has a terminal object $1$ (as both $\cat{Set}$ and $\cat{AbGp}$ do) then the presheaf $\mathbb{I} \in \cat{PrSh}(\cat{C}, \cat{S})$ which sends all elements of $\cat{C}$ to $1$ is a terminal object in
$\cat{PrSh}(\cat{C}, \cat{S})$. For any $\sheaf{F} \in \cat{PrSh}(\cat{C}, \cat{S})$, a \emph{global section} of $\sheaf{F}$ is a natural transformation $S \colon \mathbb{I} \implies \sheaf{F}$.

\section{Presheaves of local homomorphisms and isomorphisms}\label{sec:presheaf}


Some important efficient algorithms for CSP and SI involve working with sets of $k$-local
homomorphisms between the two structures in a given instance. These sets of partial
homomorphisms of domain size $\leq k$ are useful for constructing efficient algorithms
because computing the sets $\mathbf{Hom}_k(A, B)$ and $\mathbf{Isom}_k(A,B)$ can be
done in polynomial time in $\abs{A}\cdot \abs{B}$. In this section, we see that
these sets can naturally be given the structure of sheaves, that the CSP and SI
problems can be seen as the search for global sections of these sheaves and that
the $k$-consistency and $k$-Weisfeiler-Leman algorithms can both be seen as
determining the existence of certain special subpresheaves. The framework of considering
sheaves of local homomorphisms and isomorphisms is novel in this work and essential
for the main cohomological algorithms later. The results in Section \ref{sec:samsons}
are from a technical report of Samson Abramsky~\cite{samsonnote} and we thank
him for his permission to include them here.

\subsection{Defining presheaves of homomorphisms and isomorphisms}
Let $A$ and $B$ be relational structures over the same signature.
A \emph{partial homomorphism} is a partial function $s \colon A \rightharpoonup B$ that preserves related tuples in $\mathbf{dom}(s)$.
A \emph{partial isomorphism} is a partial homomorphism $s \colon A \rightharpoonup B$
which is injective and reflects related tuples from $\mathbf{im}(s)$.
A \emph{$k$-local homomorphism (resp.\ isomorphism)} is a partial homomorphism (resp.\ isomorphism) $s$ such that $\abs{\mathbf{dom}(s)} \leq k$.
We write $\mathbf{Hom}_k(A,B)$ (resp.\ $\mathbf{Isom}_k(A,B)$) for the sets of $k$-local homomorphisms (resp.\ isomorphisms).
We write $\mathbf{Hom}(A,B)$ for the union $\bigcup_{1\leq k \leq \abs{A}} \mathbf{Hom}_k(A,B)$ and $\mathbf{Isom}(A,B)$ for the union $\bigcup_{1\leq k \leq \abs{A}} \mathbf{Isom}_k(A,B)$.\\

It is not hard to see that these sets can be given the structure of presheaves
on the underlying space $A$. Indeed, we define the \emph{presheaf of homomorphisms
from $A$ to $B$} $\sheaf{H}(A,B) \colon \cat{P(A)}^{op} \rightarrow \cat{Set}$ as
$\sheaf{H}(A,B)(U) = \{ s \in \mathbf{Hom}(A,B) \ \mid \ \mathbf{dom}(s) = U \}$
with restriction maps $\sheaf{H}(A,B)(U' \subset U)$ given by the restriction
of partial homomorphisms $(\cdot)_{\mid_{U'}}$. Similarly, let $\sheaf{I}(A,B)$
be the subpresheaf of $\sheaf{H}(A,B)$ containing only partial isomorphisms.
Now, consider the cover of $A$ by subsets of size at most $k$, written $A^{\leq k} \subset P(A)$.
We define the \emph{presheaves of $k$-local homomorphisms and isomorphisms}
$\sheaf{H}_k(A,B)$ and $\sheaf{I}_k(A,B)$ as the functors $\sheaf{H}(A,B)$ and $\sheaf{I}(A,B)$
restricted to the subcategory  $(\cat{A^{\leq k}})^{op} \subset \cat{P(A)}^{op}$.

We now see how these presheaves and their global sections encode the CSP and SI problems for the instance $(A,B)$.

\subsection{CSP and SI as search for global sections}

Fix an instance $(A,B)$ for the CSP or SI problem and let $\sheaf{H}$ and $\sheaf{I}$ stand for the presheaves of all partial homomorphisms and isomorphisms between $A$ and $B$ defined in the last section. For either of these presheaves $\sheaf{S}$ a global section $s \colon \mathbb{I} \implies \sheaf{S}$ is a collection $ \{ s_{U} \in \sheaf{S}(U) \}_{U \in P(A)}$ where naturality implies that for any subsets $U$ and $U'$ of $A$ $(s_U)_{\mid_{U\cap U'}} = (s_{U'})_{\mid_{U\cap U'}}$. As the poset $P(A)$ has a maximal element, namely $A$, any such global section is determined by a choice of $s_A \in \sheaf{S}(A)$. This leads us to the following observation.

\begin{obs}
  Given a pair $(A,B)$ relational structures over the same signature then
  $$ (A,B) \in CSP \iff \sheaf{H} \text{ has a global section} $$
  and if $\abs{A} = \abs{B}$ then
  $$ (A,B) \in SI \iff \sheaf{I} \text{ has a global section.} $$
\end{obs}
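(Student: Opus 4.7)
The plan is to unpack the definition of a global section on each presheaf and observe that, because $P(A)$ has $A$ itself as its maximal element, such a section is entirely determined by its component at $A$. Concretely, a global section $s\colon \mathbb{I} \implies \sheaf{S}$ of a presheaf $\sheaf{S}$ over $P(A)^{op}$ is a family $\{s_U \in \sheaf{S}(U)\}_{U \in P(A)}$ satisfying $(s_U)_{\mid_{U'}} = s_{U'}$ whenever $U' \subset U$. Choosing $U = A$, every $s_U$ is recovered as $(s_A)_{\mid_U}$, so giving a global section is the same as giving a single element $s_A \in \sheaf{S}(A)$ (the remaining family is then defined by restriction and the naturality squares commute automatically).

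For the CSP equivalence, I would then observe that by construction $\sheaf{H}(A)$ is exactly the set of partial homomorphisms with domain equal to $A$, i.e.\ total homomorphisms $A \rightarrow B$. Thus a global section of $\sheaf{H}$ is, literally, a homomorphism witnessing $(A,B) \in CSP$, and conversely any such homomorphism defines a global section by restriction to each $U \in P(A)$. For the SI equivalence, $\sheaf{I}(A)$ consists of partial isomorphisms with domain $A$, which are injective maps $A \rightarrow B$ that preserve and reflect related tuples. Under the hypothesis $\abs{A} = \abs{B}$, such an injection is automatically a bijection onto $B$, hence an isomorphism $A \cong B$; and any isomorphism $A \cong B$, viewed as a partial isomorphism with domain $A$, induces a global section by restriction.

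There is essentially no hard step here: once the presheaves are defined and the role of the maximal element $A$ in $P(A)$ is noted, the proof is a direct unfolding of definitions. The only subtlety worth flagging explicitly is the use of $\abs{A} = \abs{B}$ in the SI direction, which is precisely what converts ``injective partial isomorphism with full domain'' into ``bijective isomorphism''; without it one would only obtain an embedding $A \hookrightarrow B$. I would therefore keep the proof to one short paragraph, giving the bijection ``global sections of $\sheaf{S}$ $\leftrightarrow$ elements of $\sheaf{S}(A)$'' and then instantiating it for $\sheaf{S} = \sheaf{H}$ and $\sheaf{S} = \sheaf{I}$.
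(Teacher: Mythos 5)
Your proof is correct and follows the same route the paper takes: the paragraph preceding the observation already records that a global section over $P(A)^{op}$ is determined by its component at the maximal element $A$, and then $\sheaf{H}(A)$ (resp.\ $\sheaf{I}(A)$) is literally the set of total homomorphisms (resp.\ full-domain partial isomorphisms, which under $\abs{A}=\abs{B}$ are isomorphisms). You unfold exactly the same definitions, including the same use of $\abs{A}=\abs{B}$ to upgrade an injection to a bijection.
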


This observation reframes the CSP and SI problems in terms of presheaves but algorithmically this not a particularly useful restating as computing the full objects $\sheaf{H}$ and $\sheaf{I}$ requires solving the CSP and SI problems for all
subsets of $A$ and $B$. A much more interesting equivalent condition is that for large enough $k$, whether or not a particular instance $(A,B)$ is in CSP or SI is determined by the global sections of the presheaves of $k$-local homomorphisms and isomorphisms.

\begin{lem}\label{lem:klocal}
For a pair $(A,B)$ relational structures over the same signature, $\sigma$, and $k$ at least the arity of $sigma$ then
$$ (A,B) \in CSP \iff \sheaf{H}_k \text{ has a global section} $$
and if $\abs{A} = \abs{B}$ then
$$ (A,B) \in SI \iff \sheaf{I}_k \text{ has a global section.} $$
\end{lem}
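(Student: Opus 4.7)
The plan is to obtain both biconditionals by connecting global sections of the $k$-local presheaves $\sheaf{H}_k$ and $\sheaf{I}_k$ with those of the full presheaves $\sheaf{H}$ and $\sheaf{I}$, for which the preceding observation already yields the correspondence with CSP and SI. The forward directions are immediate: from a homomorphism $h\colon A \rightarrow B$, the family $\{h_{\mid_U}\}_{U \subset A}$ is a global section of $\sheaf{H}$, and its restriction to the indices $U \in A^{\leq k}$ is a global section of $\sheaf{H}_k$. The SI case is identical with $h$ replaced by an isomorphism and $\sheaf{H}$ by $\sheaf{I}$.

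The interesting direction is recovering a total homomorphism from a global section $\{s_U\}_{U \in A^{\leq k}}$ of $\sheaf{H}_k$. I would define $f\colon A \to B$ by $f(a) \defeq s_{\{a\}}(a)$ and verify that it is a homomorphism using the hypothesis that $k$ is at least the arity of $\sigma$. For any related tuple $(a_1,\ldots,a_r) \in R^A$, the set $U \defeq \{a_1,\ldots,a_r\}$ belongs to $A^{\leq k}$, so $s_U$ is defined; naturality of the compatible family along each inclusion $\{a_i\} \subset U$ forces $(s_U)_{\mid_{\{a_i\}}} = s_{\{a_i\}}$, whence $s_U(a_i) = f(a_i)$. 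Since $s_U$ is itself a partial homomorphism, the tuple $(f(a_1),\ldots,f(a_r))$ lies in $R^B$, as required. This is precisely the step where the size bound $k \geq \mathrm{arity}(\sigma)$ is essential: without it, some related tuple might fail to fit inside any $U \in A^{\leq k}$, and the local data would impose no constraint on $f$ there.

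For the SI case, the same construction applied to a global section of $\sheaf{I}_k$ produces an $f$ which preserves relations by the argument above and, since each $s_U$ is now a partial isomorphism, also reflects them. Injectivity of $f$ follows from naturality applied to two-element domains $\{a,a'\}$ (tacitly using $k \geq 2$), and bijectivity is then automatic from $\abs{A} = \abs{B} < \infty$. I do not anticipate any serious obstacle; the main thing to get right is the bookkeeping around the convention that $\sheaf{H}(A,B)(U)$ consists of partial maps whose domain is \emph{exactly} $U$, so that restrictions and unions of domains are handled cleanly.
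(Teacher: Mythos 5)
Your proof is correct and follows essentially the same route as the paper's: define $f(a) = s_{\{a\}}(a)$, use naturality to show $s_U = f_{\mid_U}$, and use the hypothesis $k \geq \mathrm{arity}(\sigma)$ to fit the support of each related tuple inside some $U \in A^{\leq k}$. Your explicit remark that injectivity in the SI case tacitly needs $k \geq 2$ is a small gain in precision over the paper's version; you could likewise make the reflection step fully explicit by taking $U = f^{-1}(\{b_1,\ldots,b_m\})$ and invoking that $s_U$ reflects relations from its image, as the paper does.
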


\begin{proof}
  See Appendix \ref{sec:app0}.
\end{proof}

This is more interesting than the previous observation as $\sheaf{H}_k$ and $\sheaf{I}_k$ can be computed for any relational structures $A$ and $B$ in $\bigO(\text{poly}(\abs{A}\cdot\abs{B}))$. Indeed, we can just list all $\bigO(\abs{A}^k\cdot\abs{B}^k)$ possible $k$-local functions and check which ones preserve (and reflect) related tuples. This also gives us an interesting starting point for designing efficient algorithms for approximating CSP and SI. In particular, any efficient algorithms which finds obstacles to the existence of global sections in $\sheaf{H}_k$ and $\sheaf{I}_k$ will provide a tractable approximation to CSP and SI. We now see how this approach can be used to capture some classical approximations of these problems.

\subsection{Algorithms and games in terms of presheaves} \label{sec:samsons}

In this section, we consider the approximations $A \rightarrow_k B$ and $A \equiv_k B$ to CSP and SI which are computed respectively by the $k$-consistency and $k$-Weisfeiler-Leman algorithms and we show that these algorithms can be seen as searching for certain obstructions to global sections in $\sheaf{H}_k(A,B)$ and $\sheaf{I}_k(A,B)$. In particular, we define efficiently computable monotone operators on subpresheaves of $\sheaf{H}_k$ and $\sheaf{I}_k$ and show that they have non-empty greatest fixpoints if and only if $(A, B)$ are accepted by $k$-consistency and $k$-Weisfeiler-Leman respectively. Proposition \ref{prop:sam1} is reproduced with permission from an unpublished technical report of Samson Abramsky and the formulation of the fixpoint operators is inspired by the same report.

\subsubsection{Flasque presheaves and $k$-consistency}

Recall that $A \rightarrow_k B$ if and only if there is a positional winning strategy for Duplicator in the existential $k$-pebble game~\cite{federvardi} and
that a presheaf $\sheaf{F}$ is flasque if all of the restriction maps $\sheaf{F}(U \subset U')$ are surjective.
In a recent technical report, Abramsky~\cite{samsonnote} proves the following characterisation of these strategies in our presheaf setting.

\begin{prop}\label{prop:sam1}
  For $A, B$ relational structures and any $k$ there is a bijection between:
\begin{itemize}
  \item positional strategies in the existential $k$-pebble game from $A$ to $B$, and
  \item non-empty flasque subpresheaves $\sheaf{S} \subset \sheaf{H}_k(A,B)$.
\end{itemize}
\end{prop}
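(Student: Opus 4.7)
The plan is to exhibit the bijection via mutually inverse maps $\Phi$ and $\Psi$. Given a positional winning strategy, that is, a non-empty downward-closed $S \subset \mathbf{Hom}_k(A,B)$ satisfying forth, I would set $\Phi(S)(U) \defeq \{s \in S : \mathbf{dom}(s) = U\}$ for each $U \in A^{\leq k}$, with restriction maps inherited from $\sheaf{H}_k(A,B)$. Conversely, given a non-empty flasque subpresheaf $\sheaf{S} \subset \sheaf{H}_k(A,B)$, I would set $\Psi(\sheaf{S}) \defeq \bigcup_{U \in A^{\leq k}} \sheaf{S}(U)$.

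First I would verify that $\Phi(S)$ is a non-empty flasque subpresheaf. That the restriction maps of $\sheaf{H}_k(A,B)$ preserve $\Phi(S)$ is immediate from downward-closure of $S$, and non-emptyness is trivial. The key step is flasqueness: for $U \subset U'$ in $A^{\leq k}$ and any $s \in \Phi(S)(U)$, I need an extension $s' \in \Phi(S)(U')$ restricting to $s$. The forth property only supplies one-element extensions of sections whose domain has size strictly less than $k$, so the argument is to iterate forth along an enumeration $a_1, \ldots, a_m$ of $U' \setminus U$; at each stage the current partial homomorphism lies in $S$ by induction and has domain of size strictly less than $|U'| \leq k$, so a further extension in $S$ exists.

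For the reverse direction, I would show $\Psi(\sheaf{S})$ is a non-empty downward-closed subset of $\mathbf{Hom}_k(A,B)$ satisfying forth. Downward-closure is a direct consequence of $\sheaf{S}$ being closed under the restriction maps of $\sheaf{H}_k(A,B)$. Non-emptyness requires a short bootstrap: if $\sheaf{S}$ is non-empty then some $\sheaf{S}(U)$ contains an element, which restricts to $\sheaf{S}(\emptyset)$, and then flasqueness applied to the inclusions $\emptyset \subset V$ propagates non-emptyness to every $\sheaf{S}(V)$. The forth property for $\Psi(\sheaf{S})$ is then exactly flasqueness of $\sheaf{S}$ applied to the one-element inclusions $U \subset U \cup \{a\}$ in $A^{\leq k}$.

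Finally I would check that $\Phi$ and $\Psi$ are mutually inverse, which is routine: elements of $\Psi(\sheaf{S})$ with domain $U$ are by construction the elements of $\sheaf{S}(U)$, giving $\Phi \circ \Psi = \mathrm{id}$; and the sets $\Phi(S)(U)$ partition $S$ by domain, giving $\Psi \circ \Phi = \mathrm{id}$. The only real obstacle is recognising that flasqueness for arbitrary inclusions in $A^{\leq k}$ is equivalent, in the presence of the subpresheaf structure, to the one-element-at-a-time forth condition; once the iterative extension argument is in hand the rest of the verification is administrative bookkeeping.
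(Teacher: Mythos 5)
Your proof is correct. The paper does not give its own proof of Proposition~\ref{prop:sam1} --- it is stated as reproduced from Abramsky's technical report~\cite{samsonnote} --- but your bijection $\Phi, \Psi$, the iterated-forth argument establishing flasqueness of $\Phi(S)$, and the reduction of the forth property for $\Psi(\sheaf{S})$ to flasqueness along one-element inclusions are exactly the expected argument, and the check that the maps are mutually inverse is sound. (One minor note: the bootstrap for non-emptiness of $\Psi(\sheaf{S})$ is superfluous, since $\Psi(\sheaf{S}) \neq \emptyset$ is immediate once some $\sheaf{S}(U)$ is non-empty; the level-by-level non-emptiness you derive is a true and useful consequence of flasqueness, but it is not needed at that point.)
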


This gives an alternative description of the $k$-consistency algorithm as constructing
the largest flasque subpresheaf $\overline{\sheaf{H}_k}$ of $\sheaf{H}_k$ and
checking if it is empty. As pointed out by Abramsky~\cite{samsonnote}, this is the
process of coflasquification of the presheaf $\sheaf{H}_k$ and can be seen as
dual to the Godement construction~\cite{Godement}, an important early construction
in  homological algebra. $\overline{\sheaf{H}_k}$ can be computed efficiently as the greatest fixpoint
of the presheaf operator $(\cdot)^{\uparrow\downarrow}$ which computes the largest
subpresheaf of a presheaf $\sheaf{S} \subset \sheaf{H}_k$ such that every $s \in \sheaf{S}^{\uparrow\downarrow}(C)$
satisfies the forth property $\mathbf{Forth}(\sheaf{S}, s)$. For further details see
Appendix \ref{sec:app_algs}

\subsubsection{Greatest fixpoints and $k$-Weisfeiler-Leman}

In a similar way to the $k$-consistency algorithm, $k$-Weisfeiler-Leman can be
formulated as determining the existence of a positional strategy for Duplicator in the
$k$-pebble bijection game between $A$ and $B$. This inspires the definition of
another efficiently computable presheaf operator $(\cdot)^{\# \downarrow}$ which computes the largest
subpresheaf of a presheaf $\sheaf{S} \subset \sheaf{I}_k$ such that for every $s \in \sheaf{S}^{\# \downarrow}(C)$
satisfies the bijective forth property $\mathbf{BijForth}(\sheaf{S}, s)$.  We call
the greatest fixpoint of this operator $\overline{\overline{\sheaf{S}}}$ and we
have that $A \equiv_k B$ if and only if $\overline{\overline{\sheaf{I}_k}}$ is non-empty.
For more details, see Appendix \ref{sec:app_algs}.

To conclude, in this section, we have seen how to reformulate the search for
homomorphisms and isomorphisms between relational structures $A$ and $B$ as the search for
global sections in the presheaves $\sheaf{H}_k(A,B)$ and $\sheaf{I}_k(A,B)$. We have
also seen that well-known approximations of homomorphism and isomorphism, $\rightarrow_k$ and
$\equiv_k$, can be computed as greatest fixpoints of presheaf operators which remove elements
which cannot form part of any global section. In the next section, we look at sheaf-theoretic
obstructions to forming a global section which come from cohomology and see how these
can be used to define stronger approximations of homomorphism and isomorphism.

\section{Cohomology of local homomorphisms and isomorphisms}\label{sec:cohom}


As we showed in the previous section, an instance
of CSP and SI with input $(A,B)$ can be seen as determining the existence
of a global section for the presheaf $\mathcal{H}_k(A,B)$
or $\mathcal{I}_k(A,B)$ respectively and that the classic $k$-consistency and
$k$-Weisfeiler-Leman algorithms can be reformulated as computing greatest fixed points
of presheaf operations which successively remove sections which are obstructed from
being part of some global section.  In this section,
we extend these algorithms by considering further efficiently computable obstructions
which arise naturally from presheaf cohomology. From this we derive new cohomological algorithms for CSP and SI.

\subsection{Cohomology and local vs.\ global problems}

The notion of computing cohomology valued in an $\cat{AbGp}$-valued presheaf $\sheaf{F}$
on a topological space $X$ has a long history in algebraic geometry and algebraic topology which
dates back to Grothendieck's seminal paper on the topic~\cite{grothendieck}. The cohomology
valued in $\sheaf{F}$ consists of a sequence of abelian groups $H^i(X,\sheaf{F})$ where
$H^0(X,\sheaf{F})$ is the free $\mathbb{Z}$-module over global sections
of $\sheaf{F}$. As seen in the previous section we may be interested in such global
sections but their existence may be difficult to determine. This is where the
functorial nature of cohomology is extremely
useful. Indeed, any short exact sequence of presheaves
$$ 0 \rightarrow \sheaf{F}_L \rightarrow \sheaf{F} \rightarrow \sheaf{F}_R \rightarrow 0 $$
lifts to a long exact sequence of cohomology groups
$$ 0 \rightarrow H^0(X, \sheaf{F}_L) \rightarrow H^0(X, \sheaf{F}) \rightarrow H^0(X, \sheaf{F}_R) \rightarrow H^1(X, \sheaf{F}_L) \rightarrow \ldots .$$
This tells us that the global sections of $\sheaf{F}_R$ which are not images of
global sections of $\sheaf{F}$ are mapped to non-trivial elements of the group $H^1(X, \sheaf{F}_L)$ by the maps in this sequence. This means that these higher cohomology groups can be seen as a source of
obstacles to lifting ``local'' solutions in $\sheaf{F}_R$ to ``global'' solutions in $\sheaf{F}$.

An important recent example of such an application of cohomology to finite structures
can be found in the work of Abramsky, Barbosa, Kishida, Lal and Mansfield \cite{abramsky_et_al} in quantum foundations. They show
that cohomological obstructions of the type described above can be used to detect
contextuality (locally consistent measurements which are globally inconsistent)
in quantum systems which were earlier given a presheaf semantics
by Abramsky and Brandenburger~\cite{ab_brand}. In Appendix \ref{sec:quant_cont},
we describe these obstructions in general and show how the presheaves we
constructed in the last section admit the same cohomological obstructions. This
similarity inspires the definitions and algorithms which follow in the next two sections.

\subsection{$\mathbb{Z}$-local sections and $\mathbb{Z}$-extendability}

Returning to presheaves of local homomorphisms and isomorphisms let $\sheaf{S}$
be a subpresheaf of $\sheaf{H}_k$. Then we define the presheaf of $\mathbb{Z}$-linear
local sections of $\sheaf{S}$ to be the presheaf of formal $\mathbb{Z}$-linear
sums of local sections of $\sheaf{S}$. This means that for any $C \in A^{\leq k}$
$$\mathbb{Z}\sheaf{S}(C) \defeq \left\{ \sum_{s \in \sheaf{S}(C)} \alpha_s s \ \mid \ \alpha_s \in \mathbb{Z} \right\}.  $$

This is an abelian presheaf on $A^{\leq k}$ and we call the global sections $\{ r_U \in \mathbb{Z}\sheaf{S}(U) \}_{U \in A^{\leq k}}$ $\mathbb{Z}$-linear global sections of $\sheaf{S}$. We say that
a local section $s \in \sheaf{S}(C)$ is $\mathbb{Z}$-extendable if there is a
$\mathbb{Z}$-linear global section $\{ r_U \in \mathbb{Z}\sheaf{S}(U) \}_{U \in A^{\leq k}}$
such that $r_C = s$. We write this condition as $\mathbf{\mathbb{Z}ext}(\sheaf{S}, s)$.
As outlined in Appendix \ref{sec:quant_cont}, this condition corresponds to the
absence of a cohomological obstruction to $\sheaf{S}$ containing a global section
involving $s$.

Importantly for our purposes, deciding the condition $\mathbf{\mathbb{Z}ext}(\sheaf{S}, s)$
for any $\sheaf{S} \subset \sheaf{H}_k(A,B)$ is computable in polynomial time in the
sizes of $A$ and $B$. This is because the compatibility conditions for a collection
$\{ r_U \in \mathbb{Z}\sheaf{S}(U) \}_{U \in A^{\leq k}}$ being a global section of $\mathbb{Z}\sheaf{S}$
can be expressed as a system of polynomially many linear equations in polynomially many variables.
Indeed, we write each $r_U$ as $\sum_{s \in \sheaf{S}(U)} \alpha_s s$ where $\alpha_s$
is a variable for each $s \in \sheaf{S}(U)$. This gives a total number of variables
bounded by $\bigO(\abs{A}^k \cdot \abs{B}^k)$, the size of $\mathbf{Hom}_k(A,B)$.
For each of the $\bigO(\abs{A}^{2k})$ pairs of contexts $U, U' \in A^{\leq k}$,
the compatibility condition $(r_U)_{\mid_{U\cap U'}} = (r_{U'})_{\mid_{U\cap U'}}$
yields a linear equation in the $\alpha_s$ variables for each $s' \in \sheaf{S}(U\cap U')$,
leading to a total number of equations bounded by $\bigO(\abs{A}^{2k}\cdot \abs{B}^k)$.
By an algorithm
of Kannan and Bachem~\cite{KannanBachem} can be solved in polynomial time in the sizes
of $A$ and $B$. This allows us
to define the following efficient algorithms for CSP and SI based on removing cohomological
obstructions.

\subsection{Cohomological algorithms for CSP and SI}

We saw in Section \ref{sec:presheaf} that the $k$-consistency and $k$-Weisfeiler-Leman
algorithms can be recovered as greatest fixpoints of presheaf operators
removing local sections which fail the forth and bijective-forth properties respectively.
Now that we have from cohomological considerations a new necessary condition $\mathbf{\mathbb{Z}ext}(\sheaf{S},s)$
for a local section to feature in a global section of $\sheaf{S}$, we can define
natural extensions to the $k$-consistency and $k$-Weisfeiler-Leman algorithms as
follows.

\subsubsection{Cohomological $k$-consistency}

To define the cohomological $k$-consistency algorithm, we first define an operator
which removes those local sections which admit a cohomological obstruction.
Let $(\cdot)^{\mathbb{Z}\downarrow}$ be the operator which computes for a given presheaf $\sheaf{S} \subset \sheaf{H}_k$ the subpresheaf $\sheaf{S}^{\mathbb{Z}\downarrow}$ where $\sheaf{S}^{\mathbb{Z}\downarrow}(C)$
contains exactly those local sections $s \in \sheaf{S}(C)$ which satisfy both the forth property $\mathbf{Forth}(\sheaf{S}, s)$ and the $\mathbb{Z}$-extendability
property $\mathbf{\mathbb{Z}ext}(\sheaf{S}, s)$. As this process may remove the local sections
in $\sheaf{S}$ which witness the extendability of other local sections we need to
take a fixpoint of this operator to get a presheaf with the right extendability properties
at every local section. So, we write $\overline{\sheaf{S}}^{\mathbb{Z}}$
for the greatest fixpoint of this operator starting from $\sheaf{S}$. As both
$\mathbf{Forth}(\sheaf{S}, s)$ and $\mathbf{\mathbb{Z}ext}(\sheaf{S}, s)$ are both
 computable in polynomial time in the size of $\sheaf{S}$ and $\overline{\sheaf{S}}^{\mathbb{Z}}$
 has a global section if and only if $\sheaf{S}$ has a global section, this allows us to define
 the following efficient algorithm for approximating CSP.

 \begin{defn}
   The \emph{cohomological $k$-consistency algorithm} accepts an instance $(A,B)$
   if the greatest fixpoint $\overline{\sheaf{H}_k(A,B)}^{\mathbb{Z}}$ is non-empty and
   otherwise rejects. \\
   If $(A,B)$ is accepted by this algorithm we write $A \rightarrow^{\mathbb{Z}}_k B$
   and say that the instance $(A, B)$ is cohomologically $k$-consistent.
 \end{defn}

We conclude this section by showing that the relation $\rightarrow^{\mathbb{Z}}_k$ is transitive.

\begin{prop}\label{prop:comp}
  For all $k$, given $A, B$ and $ C$ structures over a common finite signature
  $$ A \rightarrow^{\mathbb{Z}}_k B \rightarrow^{\mathbb{Z}}_k C
  \implies A \rightarrow^{\mathbb{Z}}_k C. $$
\end{prop}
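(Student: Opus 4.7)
The plan is to build a non-empty subpresheaf $\sheaf{T} \subset \sheaf{H}_k(A,C)$ which is a post-fixpoint of the operator $(\cdot)^{\mathbb{Z}\downarrow}$, whence by the greatest fixpoint characterisation $\sheaf{T} \subset \overline{\sheaf{H}_k(A,C)}^{\mathbb{Z}}$ and hence $A \rightarrow^{\mathbb{Z}}_k C$. Write $\sheaf{H}^{AB} \defeq \overline{\sheaf{H}_k(A,B)}^{\mathbb{Z}}$ and $\sheaf{H}^{BC} \defeq \overline{\sheaf{H}_k(B,C)}^{\mathbb{Z}}$, both non-empty by hypothesis, and for each $U \in A^{\leq k}$ set
$$\sheaf{T}(U) \defeq \{\, t \circ s \ : \ s \in \sheaf{H}^{AB}(U),\ t \in \sheaf{H}^{BC}(\mathbf{im}(s))\, \}.$$

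First I would check that $\sheaf{T}$ is a subpresheaf: for $U'' \subseteq U'$ and $h = t \circ s \in \sheaf{T}(U')$, one has $h_{\mid_{U''}} = t_{\mid_{\mathbf{im}(s_{\mid_{U''}})}} \circ s_{\mid_{U''}}$, and both factors lie in the relevant greatest fixpoint since subpresheaves are closed under restriction. Non-emptiness is immediate by iterating forth from the empty section in each of $\sheaf{H}^{AB}$ and $\sheaf{H}^{BC}$. The forth property $\mathbf{Forth}(\sheaf{T}, t\circ s)$ then follows by a short case split: for $|U|<k$ and $a \in A$, apply $\mathbf{Forth}(\sheaf{H}^{AB}, s)$ to obtain $b \in B$ with $s^+ \defeq s \cup \{(a,b)\} \in \sheaf{H}^{AB}(U\cup\{a\})$; if $b \in \mathbf{im}(s)$ then $t \circ s^+$ extends $h$ and lies in $\sheaf{T}(U\cup\{a\})$, while if $b \notin \mathbf{im}(s)$ then $|\mathbf{im}(s)| < k$ and $\mathbf{Forth}(\sheaf{H}^{BC}, t)$ supplies $c \in C$ with $t^+ \defeq t \cup \{(b,c)\} \in \sheaf{H}^{BC}(\mathbf{im}(s^+))$, giving $t^+ \circ s^+ = h \cup \{(a,c)\} \in \sheaf{T}(U \cup \{a\})$.

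The main obstacle is establishing $\mathbf{\mathbb{Z}ext}(\sheaf{T}, t \circ s)$. The hypotheses supply $\mathbb{Z}$-linear global sections $\{\sigma_{U'} = \sum_{s'} \alpha^{U'}_{s'} s'\}_{U' \in A^{\leq k}}$ of $\mathbb{Z}\sheaf{H}^{AB}$ with $\sigma_U = s$, and $\{\rho_V = \sum_{t'} \beta^V_{t'} t'\}_{V \in B^{\leq k}}$ of $\mathbb{Z}\sheaf{H}^{BC}$ with $\rho_{\mathbf{im}(s)} = t$. My proposed witness composes these pointwise:
$$\tau_{U'} \defeq \sum_{s' \in \sheaf{H}^{AB}(U')} \alpha^{U'}_{s'} \sum_{t' \in \sheaf{H}^{BC}(\mathbf{im}(s'))} \beta^{\mathbf{im}(s')}_{t'}\, (t' \circ s') \ \in \ \mathbb{Z}\sheaf{T}(U').$$
Evidently $\tau_U = t \circ s$. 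The key step is verifying naturality $(\tau_{U'})_{\mid_{U''}} = \tau_{U''}$ for $U'' \subseteq U'$: after distributing the restriction and regrouping terms by $s'' \defeq s'_{\mid_{U''}}$, the inner sum $\sum_{t'} \beta^{\mathbf{im}(s')}_{t'}\, t'_{\mid_{\mathbf{im}(s'')}}$ collapses to $\rho_{\mathbf{im}(s'')}$ via naturality of $\rho$ applied to the inclusion $\mathbf{im}(s'') \subseteq \mathbf{im}(s')$, while $\sum_{s' : s'_{\mid_{U''}} = s''} \alpha^{U'}_{s'} = \alpha^{U''}_{s''}$ by naturality of $\sigma$. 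Substituting both identities recovers exactly $\tau_{U''}$, so $\tau$ is a global section of $\mathbb{Z}\sheaf{T}$ and the proof is complete.
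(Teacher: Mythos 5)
Your proposal is correct and follows essentially the same route as the paper's proof in Appendix \ref{sec:app1}: both define the composite set of $k$-local homomorphisms $A\rightharpoonup C$, both compose the two given $\mathbb{Z}$-linear global sections coefficient-by-coefficient (your $\tau_{U'}$ is the paper's $r^{p_0}_{\mathbf{a}}$ up to notation), and both verify naturality by regrouping the restricted sum over $s''=s'_{\mid_{U''}}$ and $t''=t'_{\mid_{\mathbf{im}(s'')}}$ and invoking naturality of the two given sections. The only stylistic difference is that you verify $\mathbf{Forth}$ explicitly as part of establishing $\sheaf{T}$ is a post-fixpoint of $(\cdot)^{\mathbb{Z}\downarrow}$, whereas the paper short-circuits this via Observation \ref{obs:consist_set}, which records that $\mathbb{Z}$-extendability within a non-empty set already subsumes the forth and downward-closure requirements.
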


\begin{proof}
See Appendix \ref{sec:app1}.
\end{proof}

\subsubsection{Cohomological $k$-Weisfeiler-Leman}

We now define cohomological $k$-equivalence to generalise $k$-WL-equivalence in the
same way as we did for cohomological $k$-consistency, by removing local sections
which are not $\mathbb{Z}$-extendable. As $\mathbb{Z}$-extendability in $S\subset \mathbf{Isom}_k(A,B)$
is not \textit{a priori} symmetric in $A$ and $B$ we need to check that both
$s$ is $\mathbb{Z}$-extendable in $S$ and $s^{-1}$ is $\mathbb{Z}$-extendable in
$S^{-1} = \{ t^{-1} \ \mid \ t \in S \}$. We call this $s$ being $\emph{$\mathbb{Z}$-bi-extendable}$
in $S$ and write it as $\mathbf{\mathbb{Z}bext}(\sheaf{S}, s)$. We incorporate this into a new presheaf operator $(\cdot)^{\mathbb{Z}\#}$ as follows.
Given a presheaf $\sheaf{S} \subset \sheaf{I}_k$ let $\sheaf{S}^{\mathbb{Z}\#}$ be the largest subpresheaf
of $\sheaf{S}$ such that every $s \in \sheaf{S}^{\mathbb{Z}\#}(C)$
satisfies both the bijective forth property $\mathbf{BijForth}(\sheaf{S}, s)$ and the $\mathbb{Z}$-bi-extendability
property $\mathbf{\mathbb{Z}bext}(\sheaf{S}, s)$. We write $\overline{\overline{\sheaf{S}}}^{\mathbb{Z}}$
for the greatest fixpoint of this operator starting from $\sheaf{S}$. As both
$\mathbf{BijForth}(\sheaf{S}, s)$ and $\mathbf{\mathbb{Z}bext}(\sheaf{S}, s)$ are
 computable in polynomial time in the size of $\sheaf{S}$ and $\overline{\overline{\sheaf{S}}}^{\mathbb{Z}}$
 has a global section if and only if $\sheaf{S}$ has a global section, this allows us to define
 the following efficient algorithm for approximating SI.

 \begin{defn}
   The \emph{cohomological $k$-Weisfeiler-Leman} accepts an instance $(A,B)$
   if the greatest fixpoint $\overline{\overline{\sheaf{I}_k(A,B)}}^{\mathbb{Z}}$ is non-empty and
   otherwise rejects. \\
   If $(A,B)$ is accepted by this algorithm we write $A \equiv^{\mathbb{Z}}_k B$
   and say that the instance $(A, B)$ is cohomologically $k$-equivalent.
 \end{defn}

 Finally, we observe that the existence of a non-empty subpresheaf of $\sheaf{I}_k$
 satisfying the $\mathbf{BijForth}$ and $\mathbf{\mathbb{Z}bext}$ properties also
 satisfies the conditions for witnessing
 cohomological $k$-consistency of the pairs $(A, B)$ and $(B, A)$.
 Formally we have
 \begin{obs}\label{obs:backforth}
  For any two structures $A$ and $B$, $A \equiv^{\mathbb{Z}}_k B$ implies that
  $A \rightarrow^{\mathbb{Z}}_k B$ and $B \rightarrow^{\mathbb{Z}}_k A$.
 \end{obs}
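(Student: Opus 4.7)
The plan is to unpack the definitions and check that the witness for cohomological $k$-equivalence is, after suitable viewings, also a witness for cohomological $k$-consistency in both directions. Let $\sheaf{S} \defeq \overline{\overline{\sheaf{I}_k(A,B)}}^{\mathbb{Z}}$, which by hypothesis is a non-empty fixed point of $(\cdot)^{\mathbb{Z}\#}$, so every $s \in \sheaf{S}(C)$ satisfies both $\mathbf{BijForth}(\sheaf{S}, s)$ and $\mathbf{\mathbb{Z}bext}(\sheaf{S}, s)$.

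First I would show that $\sheaf{S}$ witnesses $A \rightarrow^{\mathbb{Z}}_k B$. Since $\sheaf{I}_k(A,B) \subseteq \sheaf{H}_k(A,B)$, we may view $\sheaf{S}$ as a non-empty subpresheaf of $\sheaf{H}_k(A,B)$. The bijective forth property is a strengthening of the ordinary forth property: the bijection $b_s\colon A \rightarrow B$ guaranteed by $\mathbf{BijForth}(\sheaf{S}, s)$ supplies, for every $a \in A$, a choice $b = b_s(a)$ with $s \cup \{(a,b)\} \in \sheaf{S}$, so $\mathbf{Forth}(\sheaf{S}, s)$ holds. Similarly, by definition $\mathbf{\mathbb{Z}bext}(\sheaf{S}, s)$ asserts both the $\mathbb{Z}$-extendability of $s$ in $\sheaf{S}$ and of $s^{-1}$ in $\sheaf{S}^{-1}$, so in particular $\mathbf{\mathbb{Z}ext}(\sheaf{S}, s)$ holds. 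Hence $\sheaf{S}$ is a fixed point of $(\cdot)^{\mathbb{Z}\downarrow}$ when regarded as a subpresheaf of $\sheaf{H}_k(A,B)$. Since the greatest such fixpoint $\overline{\sheaf{H}_k(A,B)}^{\mathbb{Z}}$ contains every non-empty fixed point below $\sheaf{H}_k(A,B)$, it is itself non-empty, giving $A \rightarrow^{\mathbb{Z}}_k B$.

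For the other direction, consider $\sheaf{S}^{-1}$, the presheaf obtained by inverting every local isomorphism in $\sheaf{S}$; this is a non-empty subpresheaf of $\sheaf{I}_k(B,A) \subseteq \sheaf{H}_k(B,A)$, using the fact that the inverse of a $k$-local partial isomorphism is a $k$-local partial isomorphism. The same argument as above applies with the roles of $A$ and $B$ swapped: if $s \in \sheaf{S}(C)$ has bijective-forth witness $b_s$, then $s^{-1}$ has bijective-forth witness $b_s^{-1}$ in $\sheaf{S}^{-1}$, and $\mathbf{\mathbb{Z}bext}(\sheaf{S},s)$ by definition ensures $\mathbf{\mathbb{Z}ext}(\sheaf{S}^{-1}, s^{-1})$. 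Thus $\sheaf{S}^{-1}$ is a non-empty fixed point of $(\cdot)^{\mathbb{Z}\downarrow}$ inside $\sheaf{H}_k(B,A)$, and so $B \rightarrow^{\mathbb{Z}}_k A$.

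The only subtlety is the small bookkeeping that inverting the elements of $\sheaf{S}$ really yields a subpresheaf of $\sheaf{H}_k(B,A)$ with the same combinatorial structure, and that the $\mathbb{Z}$-linear compatibility conditions defining $\mathbf{\mathbb{Z}ext}$ are preserved under this pointwise inversion; both are immediate from the symmetric formulation of $\mathbf{\mathbb{Z}bext}$ adopted in the previous subsection, so no serious obstacle arises.
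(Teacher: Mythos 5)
Your proposal is correct and follows the same route the paper intends: the paper treats this as an observation that falls out of the definitions, and in the appendix reduces $A \rightarrow^{\mathbb{Z}}_k B$ to the existence of a non-empty $S \subset \mathbf{Hom}_k(A,B)$ with every element $\mathbb{Z}$-extendable in $S$, and $A \equiv^{\mathbb{Z}}_k B$ to the existence of an $S \subset \mathbf{Isom}_k(A,B)$ with $S$ and $S^{-1}$ satisfying (bijective) forth and $\mathbb{Z}$-extendability. Taking that witness $S$ (resp. $S^{-1}$) as the witness for $\rightarrow^{\mathbb{Z}}_k$ in each direction, using $\mathbf{BijForth} \Rightarrow \mathbf{Forth}$ and that $\mathbf{\mathbb{Z}bext}$ bakes in $\mathbb{Z}$-extendability of both $s$ in $S$ and $s^{-1}$ in $S^{-1}$, is exactly what you do; your added care that a non-empty fixed point of $(\cdot)^{\mathbb{Z}\downarrow}$ forces the greatest fixpoint to be non-empty (by monotonicity of the operator) is correct and worth stating explicitly.
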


 In Section \ref{sec:effect}, we will demonstrate the power of these new algorithms
 by showing that both cohomological $k$-consistency and cohomological $k$-Weisfeiler-Leman
 can solve instances of CSP and SI on which the non-cohomological versions fail. Before
 doing this, we briefly review some other algorithms for CSP and SI which
 involve solving systems of linear equations and establish a possible connection to
 be explored in future work.

\subsection{Other algorithms for CSP and SI}
While the connections to cohomology in approximating CSP and SI are novel in this
paper, the algorithms introduced here are not the first to use solving systems of
linear equations to approximate these problems.

On the CSP side, some examples of such algorithms include the BLP+AIP~\cite{aip} and CLAP~\cite{clap} algorithms studied in the Promise CSP community. One difference here is that
for an instance $(A,B)$ the variables in BLP and AIP are indexed by valid assignments to each variable and to each related tuple instead of being indexed by valid $k$-local homomorphisms as in the algorithm derived above.
This means that directly comparing these algorithms as stated is not straightforward and is beyond the scope of this paper. However,
it seems likely that these algorithms can also be expressed in terms of appropriate presheaves.
For example, let $\cat{C(A)}$ be the category whose objects are the elements of $A$
and the related tuples of $A$ and with maps for each projection from a related tuple to
an element, and let the $\cat{Set}$-valued presheaf $\sheaf{H}_{C}(A,B)$ on $\cat{C(A)}$
 map any $a \in A$ to the set of all elements in $B$ and any $\mathbf{a}\in R^A$
to the set of all related tuples $R^B$. Then, in a similar way to above, we can see that
global sections of $\sheaf{H}_{C}$ are homomorphisms from $A$ to $B$. In future
work, we will compare the fixpoints $\overline{\sheaf{H}_{C}}$ and $\overline{\sheaf{H}_{C}}^{\mathbb{Z}}$
with solutions to the BLP and AIP systems of equations and we will explore a possible
presheaf representation for CLAP.

On the SI side, Berkholz and Grohe~\cite{BerkholzGrohe} have studied $\mathbb{Z}$-linear
versions of the Sherali-Adams hierarchy of relaxations of the graph isomorphism problem.
They establish that no level of this hierarchy decides the full isomorphism relation on graphs.
Their algorithm for the $k^{\text{\small th}}$ level of the hierarchy appears similar to
checking the $\mathbb{Z}$-extendability in $\sheaf{H}^{A,B}_k$ of the empty solution $\epsilon \in \sheaf{H}^{A,B}_k(\emptyset)$. A full comparison of this algorithm and the algorithm
described above is an interesting direction for future work.

\section{The (unreasonable) effectiveness of cohomology in CSP and SI}\label{sec:effect}


In this section, we prove that the new algorithms arising from this cohomological approach to CSP and SI are substantially more powerful than the $k$-consistency and $k$-Weisfeiler-Leman algorithms. In particular, we show that cohomological $k$-consistency resolves CSP over all domains of arity less than or equal to $k$ which admit a ring representation and that for a fixed small $k$ cohomological $k$-Weisfeiler-Leman can distinguish structures which differ on a very general form of the CFI property, in particular, showing that cohomological $k$-Weisfeiler-Leman can distinguish a property which Lichter~\cite{lich} claims not to be expressible in rank logic.

\subsection{Cohomological $k$-consistency solves all affine CSPs}

In this section, we demonstrate the power of the cohomological $k$-consistency
algorithm by proving that it can decide the solvability of systems of equations
over finite rings.

To express the main theorem of this section in terms of the finite relational structures
on which our algorithm is defined, we first need to fix a notion of ring representation
of a relational structure. Let $A$ be a relational structure over signature $\sigma$
with relations given by $\{R^A \}_{R\in \sigma}$. We say that $A$ has a \emph{ring representation}
if we can give the set $A$ a ring structure $(A, + ,\cdot, 0, 1)$ such that
for every relational symbol $R\in \sigma$ the set $R^A \subset A^m$ is an affine
subset of the ring $(A, +, \cdot, 0 , 1)$, meaning that there exists $b^R_1, \ldots
, b^R_m, a^R \in A$ such that

$$ R^A = \big\{ \mathbf{x} \in A^m \ \mid \ \sum_{i\in [m]} b^R_i \cdot x_i = a^R   \big\} $$

With this necessary background we state the main theorem of this section.

\begin{theor} \label{thm:rings}
  For any structure $B$ with a ring representation,
  there is a $k$ such that the cohomological $k$-consistency algorithm decides $\mathbf{CSP}(B)$.\\
  Alternatively stated, there exists a $k$ such that for all $\sigma$-structures $A$
  $$ A \rightarrow^{\mathbb{Z}}_{k} B \iff A \rightarrow B $$
\end{theor}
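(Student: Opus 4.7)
The plan is to establish the biconditional for $k$ at least the maximum arity of any relation symbol in $\sigma$. Soundness ($A \rightarrow B \implies A \rightarrow^{\mathbb{Z}}_{k} B$) is routine: a homomorphism $h \colon A \to B$ induces the non-empty flasque subpresheaf of its restrictions to each $U \in A^{\leq k}$, and each such restriction trivially admits the $\mathbb{Z}$-linear global section that places coefficient $1$ on $h_{\mid_{U}}$ and $0$ elsewhere. This subpresheaf is therefore contained in $\overline{\sheaf{H}_k(A,B)}^{\mathbb{Z}}$, which is consequently non-empty.

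The substantive direction is completeness. Set $\sheaf{F} := \overline{\sheaf{H}_k(A,B)}^{\mathbb{Z}}$ and assume $\sheaf{F}$ is non-empty. Since $\sheaf{F}$ is a subpresheaf, it is closed under restriction, so $\sheaf{F}(\emptyset) = \{\emptyset\}$, and by $\mathbf{\mathbb{Z}ext}(\sheaf{F}, \emptyset)$ there is a $\mathbb{Z}$-linear global section $\{r_U\}_{U \in A^{\leq k}}$ of $\sheaf{F}$ with $r_\emptyset$ the empty map with coefficient $1$. The first key observation is that the presheaf restriction maps on $\mathbb{Z}\sheaf{F}$ preserve the total coefficient sum; comparing with $r_\emptyset$, every $r_U$ must have total coefficient sum exactly $1$.

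I would then use the ring representation of $B$ to turn $r$ into an actual homomorphism. For each $a \in A$, writing $r_{\{a\}} = \sum_t \gamma_t\, t$, define $f(a) := \sum_t \gamma_t \cdot t(a) \in B$, where the $\mathbb{Z}$-linear combination is computed in the ring $B$. To verify that $f$ preserves a relation $R$ defined by the affine equation $\sum_i b^R_i x_i = a^R$, fix $\mathbf{a} = (a_1, \ldots, a_m) \in R^A$ and let $C = \{a_1, \ldots, a_m\}$, which has size at most $m \leq k$. Writing $r_C = \sum_{s'} \beta_{s'}\, s'$, each $s'$ is a partial homomorphism and thus satisfies $\sum_i b^R_i \cdot s'(a_i) = a^R$ in $B$; restriction compatibility yields $f(a_i) = \sum_{s'} \beta_{s'} \cdot s'(a_i)$; summing, $\sum_i b^R_i f(a_i) = (\sum_{s'} \beta_{s'}) \cdot a^R = 1 \cdot a^R = a^R$, whence $f(\mathbf{a}) \in R^B$.

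The main obstacle is bridging the formal $\mathbb{Z}$-linear combinatorics of the global section with the algebraic structure of $B$: the whole argument hinges on the total-weight-one property together with the linearity of the relational constraints, which together collapse a weighted $\mathbb{Z}$-average of local partial homomorphisms into a single genuine homomorphism. This is exactly where the affineness of the relations and the ring structure on $B$ become indispensable; without them there is no way to evaluate a $\mathbb{Z}$-linear formal combination as a bona fide element of $B$, which is consistent with the expected limits of the algorithm.
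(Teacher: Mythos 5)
Your proof is correct, and it takes a genuinely different route from the paper's. The paper imports two results from the quantum contextuality literature: the AvN theorem of Abramsky, Barbosa, Kishida, Lal and Mansfield (that an inconsistent $R$-linear theory forces a non-vanishing cohomological obstruction for every local section) together with the characterisation of that obstruction in terms of $\mathbb{Z}$-extendability; it then observes that the $R$-linear theory of $\overline{\sheaf{H}_k(A,R)}$ contains the original unsatisfiable system, so every local section fails $\mathbf{\mathbb{Z}ext}$ and the fixpoint empties in one step. Your argument instead shows the contrapositive constructively and entirely within the CSP setting: from the $\mathbb{Z}$-linear global section witnessing $\mathbf{\mathbb{Z}ext}(\sheaf{F},\emptyset)$ you extract a bona fide homomorphism $f$ by evaluating the formal $\mathbb{Z}$-sum in the ring $B$, using the total-coefficient-one invariant (propagated to every context by restricting to $\emptyset$) and the affineness of the relations so that the weighted average of partial homomorphisms still satisfies each constraint. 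This is more elementary and self-contained, and it even works verbatim for non-commutative rings since only $\mathbb{Z}$-scalar multiplication needs to commute with ring multiplication; what it gives up is the paper's explicit identification of the failure with a Čech-cohomological obstruction, which is the conceptual point of the broader program. One small point worth making explicit in a write-up: $\sheaf{F}(\emptyset)=\{\emptyset\}$ and the availability of $\mathbf{\mathbb{Z}ext}(\sheaf{F},\emptyset)$ both follow from $\sheaf{F}$ being a non-empty fixpoint of the operator (closure under restriction forces $\emptyset\in\sheaf{F}(\emptyset)$, and the fixpoint condition forces its $\mathbb{Z}$-extendability), which is the precise hypothesis that the algorithm accepting buys you.
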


\begin{proof}
  See Appendix \ref{sec:app2}.
\end{proof}

This theorem is notable because there are relational structures $B$ with ring representations
for which there are families of structures $A_k$ such that $A_k \rightarrow_k B$ but $A_k \not \rightarrow B$, see for example the examples given by Feder and Vardi~\cite{federvardi}.
Furthermore, there exist pairs $(A_k, B_k)$ where $A_k \equiv_k B_k$, $B_k \rightarrow B$ and
$A_k \rightarrow_k B$ but $A_k \not \rightarrow B$, see for example the
work of Atserias, Bulatov and Dawar~\cite{AtseriasBulatovDawar}.
As the sequence of relations $\equiv_k$ bounds
the expressive power of FPC, this effectively proves that
the solvability of systems of linear equations over $\mathbb{Z}$, which is central
to the cohomological $k$-consistency algorithm, is not expressible in FPC. This result
 was not previously known to the author.

\subsection{Cohomological $k$-Weisfeiler-Leman decides the CFI property}

The Cai-Fürer-Immerman construction~\cite{cfi} on ordered finite graphs is a very powerful
tool for proving expressiveness lower bounds in descriptive complexity theory. While it
was originally used to separate the infinitary $k$ variable logic with counting from
\texttt{PTIME}, it has since been used in adapted forms to prove bounds on invertible maps
equivalence~\cite{dgp}, computation on Turing machines with atoms~\cite{bojan} and rank logic~\cite{lich}.
In this section, we show that $\equiv^{\mathbb{Z}}_k$ separates a very general form
of this

The version we consider in this paper is parameterised by a prime power $q$ and
takes any totally ordered graph $(G, <)$ and any map $g \colon E(G) \rightarrow \mathbb{Z}_q$
to a relational structure $\mathbf{CFI}_q(G,g)$. The construction effectively encodes
a system of linear equations over $\mathbb{Z}_q$ based on the edges of $G$ and
the ``twists'' introduced by the labels $g$. The result is the following well-known
fact.

\begin{fact}\label{fact:cfi}
  For any prime power, $q$, ordered graph $G$, and functions $g, h \colon E(G) \rightarrow  \mathbb{Z}_q$,
  $$ \mathbf{CFI}_{q}(G, g) \cong \mathbf{CFI}_{q}(G, h) \iff \sum g = \sum h $$
\end{fact}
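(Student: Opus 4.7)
The plan is to prove the two implications separately, both via the standard ``local twist'' analysis familiar from the original Cai-Fürer-Immerman proof. First, I would fix a precise description of $\mathbf{CFI}_q(G,g)$ as an amalgam of local vertex-gadgets glued along edges, where each gadget at $v$ is a $\mathbb{Z}_q$-flavoured structure recording the local choices at incident edges and the edge-label $g(e) \in \mathbb{Z}_q$ determines which matching is used between the two incident gadgets. The gadgets are designed so that their internal automorphisms correspond to sum-preserving permutations of the incident edge-labels.

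For the necessity ($\cong \Rightarrow \sum g = \sum h$), I would identify $\sum_{e \in E(G)} g(e) \in \mathbb{Z}_q$ as the unique global invariant. Any isomorphism $\varphi \colon \mathbf{CFI}_q(G,g) \to \mathbf{CFI}_q(G,h)$ restricts on each gadget to an internal automorphism; by the design of the gadgets, the effect of this automorphism on the labels of edges at $v$ is to add an element of the zero-sum subgroup of $\mathbb{Z}_q^{\deg v}$. Summing these contributions over all vertices and using the fact that every edge is counted at exactly two endpoints, the net change in $\sum g$ is $0$, so $\sum h = \sum g$ in $\mathbb{Z}_q$.

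For the sufficiency ($\sum g = \sum h \Rightarrow \cong$), I would build the desired isomorphism explicitly as a composition of local twists. A single-vertex twist at $v$ can be chosen to modify the labels of any two incident edges by equal-and-opposite amounts; composing such twists along paths in $G$ allows any zero-sum alteration of $g$ to be realised by a gadget-automorphism. In particular, if $\sum g = \sum h$ then $h - g$ lies in the zero-sum subspace of $\mathbb{Z}_q^{E(G)}$, so a suitable composition of vertex-twists sends the gluing data of $g$ to that of $h$, yielding the required isomorphism $\varphi \colon \mathbf{CFI}_q(G,g) \to \mathbf{CFI}_q(G,h)$.

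The main obstacle is to confirm that the specific CFI gadgets used in $\mathbf{CFI}_q$ actually realise enough automorphisms to generate all zero-sum label changes, and no more. In the original $q = 2$ case this is a standard check on $2^{\deg v - 1}$-element gadgets; for prime-power $q$ the analogous computation on $q^{\deg v - 1}$-element gadgets goes through, but requires one to verify that the internal symmetry group of each gadget is exactly the expected zero-sum subgroup of $\mathbb{Z}_q^{\deg v}$. A secondary point is that the graph $G$ must have enough connectivity (connected with every vertex of degree at least two) for local twists to propagate between distant edges; this is implicit in the standard CFI setting and is the reason the total order $<$ on $G$ appears in the construction, but would need to be made precise in a formal write-up.
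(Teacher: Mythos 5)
The paper does not prove Fact~\ref{fact:cfi}; it is stated as ``well-known'' with a pointer to Pakusa's thesis~\cite{pakusa} (cf.\ Lemma 4.36 there). So there is no in-paper proof to compare against. Your sketch follows the standard CFI-twist argument, and its overall structure is correct: the necessity direction exploits that an isomorphism must map each gadget $A_u$ to itself (because of the order $\prec$) and act on it by translation by some zero-sum element $\mathbf{c}_u \in A_u$, and then double-counting over the two endpoints of each edge gives $\sum_{\{u,v\}\in E}\bigl(\mathbf{c}_u(v)+\mathbf{c}_v(u)\bigr)=\sum_u\sum_{v\in N(u)}\mathbf{c}_u(v)=0$, forcing $\sum g=\sum h$; the sufficiency direction realises any zero-sum difference $h-g\in\mathbb{Z}_q^{E(G)}$ as a telescoping sum of vertex-twists pushed along paths. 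Both halves correctly mirror the argument in the cited references.

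Two points deserve sharper attention than your sketch gives them. First, the claim that the image of the twist map $\bigoplus_u A_u \to \mathbb{Z}_q^{E(G)}$ is \emph{exactly} the zero-sum hyperplane does require $G$ to be connected (so that the line graph is connected and differences $\delta_e - \delta_{e'}$ are realisable); you flag this, which is right, since for disconnected $G$ the statement is genuinely false. However, you misattribute the role of the total order $<$: the order is not what secures connectivity (that is a separate hypothesis on $G$), but rather what makes the structures rigid against gadget-permuting graph automorphisms, which is what guarantees in the necessity direction that $\varphi$ maps $A_u$ to $A_u$ rather than to some $A_{u'}$. Second, the verification that the restriction $\varphi|_{A_u}$ is forced by $R_I$ and $R_C$ to be a \emph{translation} (rather than some other sum-preserving bijection) is exactly the gadget-automorphism computation you identify as ``the main obstacle''; it is not hard, but it is the part of the argument where prime-power $q$ replaces the familiar $q=2$ case and must be checked against the specific relations $I_{x,y}$, $C_{x,y}$ used in Lichter's version of the construction. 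With those two points repaired, the sketch is a correct and complete proof outline.
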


We say that the structure $\mathbf{CFI}_{q}(G, g)$ has the \emph{CFI property}
if $\sum g = 0$. For more details on this construction we refer to Appendix \ref{sec:app3}
or the recent paper of Lichter~\cite{lich} whose presentation we follow.

We now recall the two major separation results based on this construction. The
first is a landmark result of descriptive complexity from the early 1990's.

\begin{theor}[Cai, F\"urer, Immerman~\cite{cfi}]
  There is a class of ordered (3-regular) graphs $\mathcal{G} = \{ G_n \}_{n\in \mathbb{N}}$
  such that in the respective class of CFI structures
  $$\mathcal{K} = \{ \mathbf{CFI}_2(G,g) \ \mid \ G \in \mathcal{G}, g\colon V(G) \rightarrow \mathbb{Z}_2   \} $$
  the CFI property is decidable in polynomial-time but cannot be
  expressed in FPC.
\end{theor}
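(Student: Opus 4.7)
The plan is to construct an explicit family $\mathcal{G}$ of ordered 3-regular graphs whose connectivity grows with $n$ (for instance, cubic expanders or bounded-degree graphs of large girth), and then to establish the two required properties separately. Tractability will reduce to linear algebra over $\mathbb{Z}_2$, while inexpressibility will be routed through the $k$-pebble bijection game via the bound of $\logic{C}^\omega$ on FPC.

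For the tractability half, I would exploit the fact that $\mathbf{CFI}_2(G, g)$ is, up to relational encoding, a polynomial-size system of $\mathbb{Z}_2$-linear equations whose solvability is equivalent to $\sum g = 0$ by Fact \ref{fact:cfi}. Extracting this system from the relational structure and running Gaussian elimination over $\mathbb{Z}_2$ then yields a polynomial-time decision procedure for the CFI property on $\mathcal{K}$. The only care required here is to check that the extraction step is uniform in the input, which follows immediately from the definition of the construction.

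For the inexpressibility half, the standard route exploits that every FPC formula is equivalent over finite structures to a $\logic{C}^k$ formula for some fixed $k$, so it suffices to show that for every $k$ there exist $G \in \mathcal{G}$ and twists $g, h$ with $\sum g \neq \sum h$ but $\mathbf{CFI}_2(G, g) \equiv_k \mathbf{CFI}_2(G, h)$. By Fact \ref{fact:cfi} the first condition says the two CFI structures are non-isomorphic, while the second reduces, via the pebble-game characterisation of $\logic{C}^k$, to exhibiting a positional winning strategy for Duplicator in the $k$-pebble bijection game played between these two structures.

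The main obstacle is the construction of this Duplicator strategy. The key combinatorial idea, going back to Cai, F\"urer and Immerman, is that a twist on an edge $e$ of $G$ can be absorbed by a compensating twist on any $e$-$e'$ path, so changing the twist value in one region can be cancelled elsewhere. Consequently, if $G$ is connected enough that removing any $k$ vertices still leaves a path between the ``pebbled'' and ``unpebbled'' regions supporting the twist, Duplicator can always choose a bijection whose local inconsistencies are swept into a region presently invisible to Spoiler. Formalising this requires maintaining an invariant at every game position that exhibits a ``phantom twist'' confined to an unpebbled subgraph, and updating the bijection after each move so that the twist is shifted along a path avoiding the current pebbles. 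The choice of $\mathcal{G}$ must therefore guarantee that deleting any $k$ vertices leaves a richly connected subgraph, which is precisely why expander-like families of unbounded treewidth are the canonical choice.
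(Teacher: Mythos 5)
The paper does not prove this theorem at all: it is cited wholesale from Cai, F\"urer and Immerman~\cite{cfi}, so there is no ``paper's proof'' to compare against. What you have written is a reasonable high-level sketch of the \emph{original} CFI argument, and it hits the two essential halves: the CFI property is a $\mathbb{Z}_2$-linear-algebra question and hence in \texttt{PTIME}, and FPC-inexpressibility is routed through the fact that FPC $\subseteq \bigcup_k \logic{C}^k$, so it suffices to build, for each $k$, a pair of non-isomorphic CFI structures on some $G \in \mathcal{G}$ that Duplicator can keep indistinguishable in the $k$-pebble bijection game by carrying a movable ``phantom twist'' through the unpebbled region.

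Two places where the sketch glosses over a step you would actually have to supply. First, for the upper bound you invoke Fact~\ref{fact:cfi}, but that fact only says $\mathbf{CFI}_2(G,g)\cong\mathbf{CFI}_2(G,h)$ iff $\sum g=\sum h$; to turn the CFI property into a linear system you still need the bridge that $\mathbf{CFI}_2(G,g)\cong\mathbf{CFI}_2(G,\mathbf{0})$ iff the associated equation system is solvable over $\mathbb{Z}_2$, which is precisely what the paper's Lemma~\ref{lem:equation} provides. Second, for the lower bound, ``removing any $k$ vertices leaves a richly connected subgraph'' is the right intuition but is not by itself the invariant that makes the twist-shifting argument close; what is actually needed is that no $k$-element vertex set is a balanced separator of $G$ (equivalently, that the treewidth of the base graphs exceeds $k$, so that Duplicator's twist always has a large component to hide in). Expanders do satisfy this, but mere 3-regularity or large girth do not guarantee it, so you should name and verify the separator property explicitly rather than leave it at ``expander-like.'' With those two gaps filled, this is exactly the Cai--F\"urer--Immerman proof.
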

The second is a recent breakthrough due to Moritz Lichter.
\begin{theor}[Lichter~\cite{lich}]\label{thm:lich}
There is a class of ordered graphs $\mathcal{G} = \{ G_n \}_{n\in \mathbb{N}}$
such that in the respective class of CFI structures
$$\mathcal{K} = \{ \mathbf{CFI}_{2^k}(G,g) \ \mid \ G \in \mathcal{G}  \} $$
the CFI property is decidable in polynomial-time (indeed, expressible in
choiceless polynomial time) but cannot be expressed in rank logic.
\end{theor}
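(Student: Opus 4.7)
The plan is to follow Lichter's strategy: construct CFI structures over the rings $\mathbb{Z}_{2^k}$ where the exponent $k$ grows with the underlying graph size, exploiting the fact that any fixed rank-logic formula can only compute ranks over a finite set of prime fields determined by the formula. By scaling $k$ beyond this fixed resource bound, the formula can be defeated while the CFI property itself remains polynomially decidable.

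For the class $\mathcal{G}$, I would take a sequence of ordered $3$-regular graphs $G_n$ of high treewidth (for instance, expanders) and associate to each $G_n$ an exponent $k_n$ growing with $|G_n|$; the class $\mathcal{K}$ then comprises the structures $\mathbf{CFI}_{2^{k_n}}(G_n, g)$ for the various twist functions $g$. For the upper bound, Fact \ref{fact:cfi} reduces the CFI property to checking $\sum g = 0$, and decoding the gadget structure yields a system of linear equations over $\mathbb{Z}_{2^{k_n}}$ whose solvability is equivalent to the CFI property. Gaussian elimination over principal ideal rings runs in polynomial time, and symmetric choice-free implementations of such linear algebra in choiceless polynomial time are standard, giving the stronger CPT upper bound.

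For the lower bound, rank logic's expressive power is bounded by the invertible-map equivalences of Dawar and Holm, indexed by a pebble number $k$ and a finite set of primes $P$. It thus suffices to show that for every pair $(k,P)$, there are positive and negative instances in $\mathcal{K}$ that are equivalent under this invertible-map equivalence. Given such $(k,P)$, one chooses $n$ with $k_n$ large compared to $|P|$ and exhibits Duplicator's winning strategy in the associated pebble game. The intuition is that for primes $p \in P$ with $p \ne 2$ the $\mathbb{Z}_{2^{k_n}}$-valued twists project trivially under the $\mathbb{F}_p$-rank queries available to Spoiler, while for $p = 2$ with $k_n$ sufficiently large the higher $2$-adic layers of the twists lie beyond the discriminating power of rank queries of fixed arity. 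Duplicator's strategy then maintains local correspondences between CFI gadgets while quietly absorbing the global twist into invisible layers of $\mathbb{Z}_{2^{k_n}}$.

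The main obstacle is the construction of Duplicator's strategy in the invertible-map pebble game: Spoiler may query the $\mathbb{F}_p$-rank of arbitrary matrices indexed by $k$-tuples of pebbled elements, and Duplicator must maintain a global bijection whose induced matrices have matching $\mathbb{F}_p$-ranks for every $p \in P$. The technical heart is to show that local isomorphisms between gadgets of the twisted and untwisted structures extend to such a bijection, and that the algebraic differences introduced by the twist do not alter any $\mathbb{F}_p$-rank for $p \in P$, provided $k_n$ is large enough. Calibrating the growth rate of $k_n$ so that this argument defeats every $(k,P)$ below an arbitrary prescribed bound — and combining it with a standard diagonalization over formulas of rank logic — is the bulk of the work, and is precisely where Lichter's new game-theoretic invariant beyond the Dawar--Holm framework is required.
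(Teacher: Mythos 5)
This theorem is not proved in the paper at all: it is imported as a black-box citation of Lichter's LICS 2021 result, and the paper's use of it downstream relies only on the combinatorics of the $\mathbf{CFI}_q$ construction (via Fact~\ref{fact:cfi} and Lemmas~\ref{lem:equation} and~\ref{lem:interp}), not on the rank-logic lower bound itself. There is therefore no ``paper's own proof'' to compare against; what you have written is a reconstruction of Lichter's external argument.

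As a reconstruction, your sketch correctly identifies the central ideas --- CFI structures over $\mathbb{Z}_{2^k}$ with the exponent growing in the graph size, a CPT upper bound via symmetric linear algebra over the ring, and a lower bound that exploits the inability of $\mathbb{F}_p$-rank queries to see the higher $2$-adic layers of the twist --- and you rightly flag at the end that the genuine technical work lies in a game-theoretic bound on rank logic beyond the Dawar--Holm invertible-map framework. But two details deserve a caveat. First, Lichter's base graphs are not arbitrary $3$-regular expanders; the paper's own statement pointedly says ``ordered $(3$-regular$)$ graphs'' for the original CFI theorem but only ``ordered graphs'' for Lichter's, and Lichter's construction requires base graphs tailored to the ring $\mathbb{Z}_{2^k}$ (vertex degrees and the structure of the base graph interact with the arity of the gadget relations). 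Second, the invertible-map equivalences of Dawar and Holm, indexed by a pebble number and a finite prime set $P$, bound only the fragment of rank logic in which the primes are hard-coded into the formula; Lichter's separation applies to full rank logic, where the modulus of the rank operator can itself be an interpreted term. Getting a game that is simultaneously an upper bound on full rank logic and still loseable by Spoiler on the growing-exponent CFI family is precisely where the new invariant enters, so the ``combine with a standard diagonalization'' closing move is doing more work than it appears. None of this affects the present paper, which only needs the theorem's conclusion.
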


Despite this CFI property proving to be inexpressible in both FPC and rank
logic, we show that (perhaps surprisingly) there is a fixed $k$ such that
cohomological $k$-Weisfeiler-Leman algorithm can separate structures which differ on
this property in the following general way. The proof of this theorem relies the on showing
that $\equiv^{\mathbb{Z}}_k$ behaves well with logical interpretations and the details are left
to Appendix \ref{sec:app3}.

\begin{theor}\label{thm:main}
  There is a fixed $k$ such that for any $q$ given $\mathbf{CFI}_{q}(G,g)$ and
  $\mathbf{CFI}_{q}(G,h)$ with $\sum g = 0 $ we have
  $$ \mathbf{CFI}_{q}(G,g) \equiv^{\mathbb{Z}}_k \mathbf{CFI}_{q}(G,h) \iff \mathbf{CFI}_{q}(G,g) \cong \mathbf{CFI}_{q}(G,h)$$
\end{theor}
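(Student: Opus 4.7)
The $\Leftarrow$ direction is immediate, since any isomorphism provides a top-level global section of $\sheaf{I}_k(\mathbf{CFI}_q(G,g), \mathbf{CFI}_q(G,h))$ that survives every fixpoint condition trivially. For the $\Rightarrow$ direction, the plan is to factor through Theorem \ref{thm:rings}: given $\sum g = 0$ and $\mathbf{CFI}_q(G,g) \equiv^{\mathbb{Z}}_{k} \mathbf{CFI}_q(G,h)$, I will reduce the conclusion $\sum h = 0$ (which by Fact \ref{fact:cfi} is equivalent to isomorphism here) to the solvability of an associated system of $\mathbb{Z}_q$-linear equations.

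The first ingredient is a first-order interpretation $I$ of fixed dimension $d$ and fixed quantifier rank that sends $\mathbf{CFI}_q(G, g)$ to a structure $\Lambda(G, g)$ encoding the CFI equations over $\mathbb{Z}_q$, with one variable per edge of $G$ and one equation per vertex, the affine offsets being read off from $g$ and the CFI gadget. By the standard semantics of the gadget, $\Lambda(G, g) \rightarrow \mathbb{Z}_q$ iff $\sum g = 0$, and the arities of the equations depend only on the maximum degree of $G$, so $\Lambda(G, g)$ has a uniform bounded-arity ring representation over $\mathbb{Z}_q$. The second ingredient is a preservation lemma: there exists $k_1$ depending only on the parameters of $I$ such that $A \equiv^{\mathbb{Z}}_{kk_1} B$ implies $I(A) \equiv^{\mathbb{Z}}_{k} I(B)$, and likewise for $\rightarrow^{\mathbb{Z}}_{k}$. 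I would prove this by taking a witnessing subpresheaf $\sheaf{S} \subset \sheaf{I}_{kk_1}(A, B)$ and pushing it forward to a subpresheaf of $\sheaf{I}_{k}(I(A), I(B))$: a $k$-local partial isomorphism of the interpretations corresponds to a $kk_1$-local partial isomorphism of the originals, $\mathbf{BijForth}$ transports directly, and $\mathbb{Z}$-linear global sections of $\mathbb{Z}\sheaf{S}$ push forward to $\mathbb{Z}$-linear global sections of the image presheaf, yielding $\mathbf{\mathbb{Z}bext}$.

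Combining these ingredients closes the argument. Let $k_0$ be the fixed bound from Theorem \ref{thm:rings} for the ring $\mathbb{Z}_q$, which is uniform in $q$ because the arity of the equations depends only on $G$. Setting $k = k_0 k_1$, the hypothesis $\mathbf{CFI}_q(G,g) \equiv^{\mathbb{Z}}_{k} \mathbf{CFI}_q(G,h)$ gives $\Lambda(G, g) \equiv^{\mathbb{Z}}_{k_0} \Lambda(G, h)$ by the preservation lemma, and hence $\Lambda(G, h) \rightarrow^{\mathbb{Z}}_{k_0} \Lambda(G, g)$ by Observation \ref{obs:backforth}. Since $\sum g = 0$ provides an actual homomorphism $\Lambda(G, g) \rightarrow \mathbb{Z}_q$, Proposition \ref{prop:comp} composes to give $\Lambda(G, h) \rightarrow^{\mathbb{Z}}_{k_0} \mathbb{Z}_q$, and then Theorem \ref{thm:rings} forces $\sum h = 0$; Fact \ref{fact:cfi} then yields the isomorphism. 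The hard part is the preservation lemma, specifically that $\mathbf{\mathbb{Z}bext}$ transports through interpretations: unlike $\mathbf{BijForth}$, which is a local condition, $\mathbb{Z}$-extendability is a global condition on a linear system indexed by all local sections, so one must verify that the push-forward along $I$ acts as a morphism of the relevant abelian presheaves and not merely of the underlying presheaves of sets.
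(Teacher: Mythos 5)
Your overall strategy tracks the paper closely: interpret a $\mathbb{Z}_q$-linear system inside the CFI structure, transport $\equiv^{\mathbb{Z}}$ through the interpretation via a preservation lemma (this is exactly Proposition~\ref{prop:Z_interpretation} in the paper), invoke Observation~\ref{obs:backforth} and Proposition~\ref{prop:comp} to push the failure of $\rightarrow^{\mathbb{Z}}$ back to the original pair, and close with Theorem~\ref{thm:rings}. That is the paper's route, and your explicit use of Proposition~\ref{prop:comp} to compose with the actual homomorphism coming from $\sum g = 0$ is a correct reading of what the paper's proof does implicitly.

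However, there is a genuine gap in the middle step, and it is the one the paper devotes the most effort to. You encode the CFI conditions as a linear system $\Lambda(G,g)$ with ``one equation per vertex,'' and then assert that since ``the arities of the equations depend only on the maximum degree of $G$,'' the resulting structure has ``a uniform bounded-arity ring representation.'' This is a non-sequitur: the vertex equations $\sum_{v \in N(u)} w_{\mathbf{a},v} = 0$ have arity equal to $\deg(u)$, which is unbounded over the class of graphs $G$ — in particular Lichter's class has unbounded degree, which the paper explicitly flags. Since the $k$ in Theorem~\ref{thm:rings} is (at least) the maximum arity of the signature, your $k_0$ is not a constant but grows with $G$, and the product $k = k_0 k_1$ cannot be a fixed $k$ as the theorem requires. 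The paper's Lemma~\ref{lem:interp} fixes exactly this: it exploits the total order $\prec$ (inherited from the ordering on $G$) to define a local-predecessor relation and then introduces slack variables $z_{\mathbf{a},\mathbf{b}}$ that thread a running partial sum around the neighbourhood of each gadget, replacing one equation of arity $\deg(u)$ by a chain of $\deg(u)$ equations each of arity at most $3$. Without some such bounded-arity reformulation — which needs the ordering to be available, and needs to be carried out inside a first-order interpretation of fixed dimension — your invocation of Theorem~\ref{thm:rings} does not yield a $k_0$ independent of $G$, and the theorem does not follow. You correctly flagged the preservation lemma as a nontrivial ingredient, but the uniform arity reduction is the other nontrivial ingredient and your proposal passes over it with an incorrect assertion.
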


\begin{proof}
  See Appendix \ref{sec:app3}.
\end{proof}

As a direct consequence of this result, there is some $k$ such that the set of structures with the CFI property
in Lichter's class $\mathcal{K}$ from Theorem \ref{thm:lich} is closed under $\equiv^{\mathbb{Z}}_k$.
This means that, by the conclusion of Theorem \ref{thm:lich}, the equivalence relation $\equiv^{\mathbb{Z}}_k$
can distinguish structures which disagree on a property that is not expressible in rank logic. Indeed, Dawar, Gr\"adel and Lichter~\cite{dgl} show further that this property is also inexpressible in linear algebraic logic.
By the definition of our algorithm for $\equiv^{\mathbb{Z}}_k$ this implies that
solvability of systems of $\mathbb{Z}$-linear equations is not definable in linear algebraic logic.

\section{Conclusions \& future work}\label{sec:conclusion}


In this paper, we have presented novel approach to CSP and SI in terms of presheaves and
have used this to derive efficient generalisations of the $k$-consistency
and $k$-Weisfeiler-Leman algorithms, based on natural considerations of presheaf cohomology. We have shown
that the relations, $\rightarrow^{\mathbb{Z}}_k$ and $\equiv^{\mathbb{Z}}_k$, computed
by these new algorithms are strict refinements of their well-studied classical counterparts
$\rightarrow_k$ and $\equiv_k$. In particular, we have shown in Theorem \ref{thm:rings} that
cohomological $k$-consistency suffices to solve linear equations over all finite rings
and in Theorem \ref{thm:main} that cohomological $k$-Weisfeiler-Leman distinguishes positive and
negative instances of the
CFI property on the classes of structures studied by Cai, F\"urer and Immerman~\cite{cfi}
and more recently by Lichter~\cite{lich}. These results have important consequences for
descriptive complexity theory showing, in particular, that the solvability of
systems of linear equations over $\mathbb{Z}$ is not expressible in FPC, rank logic
or linear algebraic logic. Furthermore, the results
of this paper demonstrate the unexpected effectiveness of a cohomological approach to
constraint satisfaction and structure isomorphism, analogous to that pioneered by
Abramsky and others for the study of quantum contextuality. \\

The results of this paper suggest several directions for future work to establish the
extent and limits of this cohomological approach. We ask the following questions
which connect it to important themes in algorithms, logic and finite model theory.

\textbf{Cohomology and constraint satisfaction}:
Firstly, Bulatov and Zhuk's recent independent resolutions of the Feder-Vardi conjecture~\cite{Bulatov,Zhuk},
show that for all domains $B$ either $\mathbf{CSP}(B)$ is \texttt{NP-Complete}
or $B$ admits a weak near-unanimity polymorphism and $\mathbf{CSP}(B)$ is tractable.
As the cohomological $k$-consistency algorithm expands the power of the
$k$-consistency algorithm which features as one case of Bulatov and Zhuk's general
efficient algorithms, we ask if it is sufficient to decide all
tractable $\mathbf{CSP}$s.

\begin{quest}
  For all domains $B$ which admit a weak near-unanimity polymorphism, does there exists
  a $k$ such that for all $A$ $$A \rightarrow B \iff A \rightarrow^{\mathbb{Z}}_k B ? $$
\end{quest}

\textbf{Cohomology and structure isomorphism}:
Secondly, as cohomological $k$-Weisfeiler-Leman is an efficient algorithm
for distinguishing some non-isomorphic relational structures we ask if it distinguishes
all non-isomorphic structures. As the best known structure isomorphism algorithm is
quasi-polynomial~\cite{babai}, we do not expect a positive answer to this question but expect that
negative answers would aid our understanding of the hard cases of structure isomorphism in general.
\begin{quest}
  For every signature $\sigma$ does there exists a $k$ such that for all $\sigma$-structures $A, B$
  $$ A \cong B \iff A \equiv^{\mathbb{Z}}_k B? $$
\end{quest}

\textbf{Cohomology and game comonads}:
Thirdly, as $\rightarrow_k$ and $\equiv_k$ have been shown by Abramsky, Dawar, and Wang~\cite{pebb}
to be correspond to the coKleisli morphisms and isomorphisms of a comonad $\mathbb{P}_k$,
we ask whether a similar account can be given to $\rightarrow^{\mathbb{Z}}_k$ and
$\equiv^{\mathbb{Z}}_k$. As the coalgebras of the $\mathbb{P}_k$ comonad relate to the
combinatorial notion of treewidth, an answer to this question could provide a new
notion of ``cohomological'' treewidth.

\begin{quest}
  Does there exist a comonad $\mathbb{C}_k$ for which the notion of morphism and
  isomorphism in the coKleisli category are $\rightarrow^{\mathbb{Z}}_k$ and
  $\equiv^{\mathbb{Z}}_k$?
\end{quest}

\textbf{The search for a logic for \texttt{PTIME}}:
Finally, as the algorithms for $\rightarrow^{\mathbb{Z}}_k$ and
$\equiv^{\mathbb{Z}}_k$ are likely expressible in rank logic extended with a quantifier for
solving systems of linear equations over $\mathbb{Z}$ and as $\equiv^{\mathbb{Z}}_k$
distinguishes all the best known family separating rank logic from \texttt{PTIME},
we ask if solving systems of equations over $\mathbb{Z}
$ is enough to
capture all \texttt{PTIME} queries.
\begin{quest}
  Is there a logic FPC+rk+$\mathbb{Z}$ incorporating solvability of $\mathbb{Z}$-linear equations
  into rank logic which captures \texttt{PTIME}?
\end{quest}

\bibliography{refs}

\newpage

\appendix

\section{Proof omitted from Section \ref{sec:presheaf} }\label{sec:app0}

\begin{proof}[Proof of Lemma \ref{lem:klocal}]
($\implies$) This direction is easy. In the case of homomorphisms the argument
proceeds as follows. Suppose that $(A, B) \in CSP$ and so there exists $h : A \rightarrow B$ a homomorphism. Consider the collection of maps $\{ h_U \}_{U \in A^{\leq k}}$ defined by $h_U = h_{\mid_{U}}$. This forms global section of $\sheaf{H}_k$ because, firstly, $h_U \in \sheaf{H}_k(U)$ as the restriction of a homomorphism is a partial homomorphism and, secondly, the naturality condition is satisfied as $(h_U)_{\mid_{U'}} = h_{\mid_{U'}}$ for any $U' \subset U$.
The argument follows similarly for $SI$ and $\sheaf{I}_k$.\\
($\impliedby$) For this direction, in the case of homomorphisms, we have global section $s : \mathbb{I} \implies \sheaf{H}_k$. Firstly, we claim that there is a single function $h: A \rightarrow B$ such that $s_U = h_{\mid_{U}}$ for all $U \in A^{\leq k}$. Indeed, this is the function $h$ which sends any element $a \in A$ to the element $h(a)  \defeq s_{ \{ a \} }(a) \in B$.
This satisfies the required property that $s_U = h_{\mid_{U}}$, as for any $U \in A^{\leq k}$ and any $u \in U$, naturality of $s$ along the inclusion $\{u\}  \subset U$ ensures that $s_U(u) = s_{\{u\}}(u) = h(u)$ and so $s_U = h_{\mid_{U}}$.
To show that $h$ is a homomorphism, take any related tuple $(a_1, \ldots, a_m) \in R^A$.
Let $U = \{a_1, \ldots a_m \}$. As $k$ is at least the arity of $\sigma$ we have that $\abs{U} \leq m \leq k$ and so $U \in A^{\leq k}$. Now, $h_{\mid_U} = s_U \in \mathbf{Hom}_k(A,B)$ is a partial homomorphism. So, $(a_1, \ldots, a_m) \in R^A \implies (b_1, \ldots, b_m) \in R^B$. Thus $h$ is a homomorphism.

In the case of isomorphisms, this we can define the map $h$ in the same way. As its
projections $s_U = h_{\mid_{U}}$ are all partial isomorphisms, $h$ will be injective
and so is bijective by the assumption on sizes of $A$ and $B$. So applying the above,
$h$ will be a bijective homomorphism. To show that it is indeed an isomorphism
we show that it reflects related tuples in $B$. Suppose $(b_1, \ldots, b_m) \in R^B$
then consider the set  $V = \{b_1, \ldots, b_m \}$. As $h$ is bijective there is
a set $U = h^{-1}(V) \in A^{\leq k}$. Then $h_{\mid_{U}} = s_U$ is an isomorphism
between $U$ and $V$ and so $(h^{-1}(b_1), \ldots, h^{-1}(b_m)) \in R^A$ as required.
\end{proof}

\section{Algorithms for $k$-consistency and $k$-Weisfeiler-Leman}\label{sec:app_algs}

In this appendix, we recall the full definitions of $k$-consistency and
$k$-Weisfeiler-Leman.

\subsection{Classical $k$-consistency algorithm}
We start by recalling some definitions related to the
classical $k$-consistency algorithm on which our algorithm will build.

For $A$ and $B$ finite structures over a common (finite) signature,
let $\mathbf{Hom}_k(A, B)$ denote the set of partial homomorphisms
from $A$ to $B$ with domain of size less than or equal to $k$. There
is a natural partial order $<$ on this set, defined
as follows.  For any partial homomorphisms $f,g \in \mathbf{Hom}_k(A, B)$
we say that $f < g$ if $\mathbf{dom}(f) \subset \mathbf{dom}(g)$
and $g_{\mid_{\mathbf{dom}(f)}} = f$.

We say that any $S \subset \mathbf{Hom}_k(A,B)$ has the \emph{forth}
property if for every $f \in S$ with $|\mathbf{dom}(f)|< k$
we have the property $\mathbf{Forth}(S,f)$ which is defined as follows:
$$  \ \forall a \in A, \
        \exists b \in B \text{ s.t. } f \cup \{(a,b)\} \in S. $$

Given $S \subset \mathbf{Hom}_k(A,B)$ we define $\overline{S}$ to
be the largest subset of $S$ which is downwards-closed and has the
forth property. Note that $\emptyset$ satisfies these conditions,
so such a set always exists. For a fixed $k$ there is a simple
algorithm for computing $\overline{S}$ from $S$.

This is done by starting with $S_0 = S$ and then entering the
following loop with $i=0$
\begin{enumerate}
  \item Initialise $S_{i+1}$ as being equal to $S_i$.
  \item For each $s \in S_{i}$, check if $\mathbf{Forth}(S_i,s)$ holds and
  if not remove it from $S_{i+1}$ along with all $s'> s$.
  \item If none fail this test, halt and output $S_i$.
  \item Otherwise, increment $i$ by one and repeat.
\end{enumerate}
It is easily seen that this runs in polynomial time in $\abs{A}\abs{B}$.

Now for a pair of structures $A, B$ we say that the pair
$(A,B)$ is $k$ consistent if $\overline{\mathbf{Hom}_k(A,B)} \neq \emptyset$.
We denote this by writing $A \rightarrow_k B$ and the algorithm above
shows how to decide this relation in polynomial time for fixed $k$.
This relation has many equivalent logical and algorithmic definitions
as seen in \cite{federvardi}, and \cite{BartoKozik}.

\subsection{Classical $k$-Weisfeiler-Leman algorithm}

Immerman and Lander\cite{IL} first established that two structures are
$\equiv_{k-WL}$-equivalent if and only if they satisfy the same formulas of infinitary
$k$-variable logic with counting quantifiers (written $A \equiv_{k} B$).
Hella\cite{lhipt} showed that this is true if and only if the set of $k$-local partial
isomorphisms $\mathbf{Isom}_{k}(A,B)$ contains a non-empty subset $S$ which is
downward-closed and has the following \emph{bijective forth property} for all
$f \in S$ with $|\mathbf{dom}(f)|< k$:
$$ \ \exists b_f: A \rightarrow B \text{ a bijection s.t.\ }
\forall a \in A \ f \cup \{(a,b_f(a))\} \in S $$
Whether such a bijection exists can be determined efficiently given $A, B, S $ and
$f$ by determining if the bipartite graph with vertices $A \sqcup B$ and edges
$\{(a,b) \ \mid \ f \cup \{(a,b)\} \in S \}$ has a perfect matching. For $S \subset \mathbf{Isom}_k(A,B)$,
let $\overline{\overline{S}}$ be the largest subset of $S$ which is downward-closed
and satisfies the bijective forth property. For fixed $k$ this can be computed
in polynomial time in the sizes of $A$ and $B$ and so an alternative polynomial time
algorithm for determining $\equiv_{k-WL}$ is computing $\overline{\overline{\textbf{Isom}_{k}(A,B)}}$
and checking if it is non-empty.\\

\section{Cohomological obstructions from quantum contextuality}\label{sec:quant_cont}

To understand the cohomological invariants of Abramsky, Barbosa and Mansfield\cite{ab_man_barb} which we need for the main algorithms in this section we first give a brief overview the sheaf-theoretic approach to
quantum contextuality introduced by Abramsky and Brandenburger\cite{ab_brand}
which bears an important resemblance to the set-up in the last section. \\

A \emph{measurement scenario} is a triple $\mathcal{M} = \langle X, M, O  \rangle$ where $X$ and $O$ are finite sets and $M$ is a downward-closed subset of the powerset $P(X)$ which covers $X$. We interpret such a scenario as a quantum system with a set $X$ of possible measurements, a set $M$ of valid contexts of measurements that can be performed simultaneously, and a set of outcomes $O$ for each measurement. The \emph{sheaf of outcomes} over $\mathcal{M}$ is the presheaf $\sheaf{E} : \cat{M}^{op} \rightarrow \cat{Set}$ defined by $\sheaf{E}(C) = O^C$ with the restriction maps given
by normal function restriction. The proof that this is indeed a sheaf is elementary but unimportant for the present work. A \emph{possibilistic empirical model} of $\mathcal{M}$ is any flasque subpresheaf $\sheaf{S}$ of $\sheaf{E}$. For any such model we interpret the set of local sections $\sheaf{S}(C) \subset O^C$ as the set of possible measurement-outcome pairs for the context $C$. The condition of being flasque is precisely what's required for such a model to satisfy the no-signalling property which is important in quantum mechanical systems.
As in the previous section, global sections of these presheaves are important. Indeed Abramsky, Barbosa and Mansfield say that an empirical model $\sheaf{S}$ is \emph{strongly contextual}, written $\mathbf{SC}(\sheaf{S})$ if there is no global section $\{s_C \in \sheaf{S}(C)\}_{C\in M}$ for $\sheaf{S}$. Furthermore, a possible measurement outcome $s \in \sheaf{S}(C')$ is said to be \emph{logically contextual}, written $\mathbf{LC}(\sheaf{S}, s)$ if there is no global section $\{s_C \in \sheaf{S}(C)\}_{C\in M}$ for $\sheaf{S}$ such that $s_{C'} = s$. The whole empirical model $\sheaf{S}$ is said to be \emph{logically contextual}, written $\mathbf{LC}(\sheaf{S})$ if there exists some local section $s$ of $\sheaf{S}$ such that $LC(\sheaf{S}, s)$ holds.\\

In \textit{The Cohomology of Non-locality and Contextuality}, Abramsky, Barbosa and Mansfield show that contextuality in empirical models, as defined above, can be detected in many cases by considering the cohomology of certain Čech
cochain complexes $\check{C}^{\bullet}(M, \sheaf{F})$ of the cover $M$ valued in abelian presheaves related to $\sheaf{S}$. To do this they first define, for any possibilistic empirical model $\sheaf{S}$, the abelian presheaf $\sheaf{F}_{\mathbb{Z}} : \cat{M}^{op} \rightarrow \cat{AbGrp} $  which is formed by composing $\sheaf{S}$ with the free $\mathbb{Z}$-module functor $F_{\mathbb{Z}} : \cat{Set} \rightarrow \cat{AbGrp}$. Local sections $r \in \sheaf{F}_Z(C)$ are simply formal $\mathbb{Z}$-linear combinations of elements of $\sheaf{S}(C)$.
For any $U \in M$, they then construct a short exact sequence
$$ 0 \rightarrow \sheaf{F}_{\tilde{U}} \rightarrow \sheaf{F}_{\mathbb{Z}} \rightarrow \sheaf{F}_{\mid_U} \rightarrow 0 $$
which captures the restriction of local sections to the context $U$.
 This gives a long exact sequence of cohomology groups. The connection maps in this long exact sequnce allows us to take any $s \in \sheaf{S}(U)$ and send it forward to an element $\delta(s) \in \check{H}^1(M, \sheaf{F}_{\tilde{U}})$. Abramsky et al show that $\delta(s)$ not vanishing
is a sufficient condition for $\mathbf{LC}(\sheaf{S}, s)$ and define this condition as $\textbf{CLC}_{\mathbb{Z}}(\sheaf{S}, s)$.
They also give the following equivalent condition which we use for the rest of the paper.
$\textbf{CLC}_{\mathbb{Z}}(\sheaf{S}, s)$ holds if and only if there is no global section
$\{ r_C \}_{C\in M}$ of $\sheaf{F}_{\mathbb{Z}}$ such that $r_U = s$.

Now we see how this set-up applies equally to the search for global sections
in CSP and SI.

\subsection{$\mathbb{Z}$-extendability and $\mathbb{Z}$-linear sections}

In order to translate the cohomological obstructions from the setting of quantum contextuality
to that of constraint satisfaction and structure isomorphism, we first make the
following observation.

\begin{obs}
  For any two relational structures $A$ and $B$ and any $k$, the sheaf of events $\sheaf{E}_{\mathcal{M}}$
  over the measurement scenario $\mathcal{M} = \langle A, A^{\leq k}, B \rangle$
  contains both $\sheaf{H}_k(A,B)$ and $\sheaf{I}_k(A,B)$ as subpresheaves.

  Furthermore, as the subpresheaves $\overline{\sheaf{H}_k}$ and $\overline{\overline{\sheaf{I}_k}}$
  resulting from the sheaf-theoretic versions of $k$-consistency and $k$-Weisfeiler-Leman
  are flasque, they can be viewed as empirical models for $\mathcal{M}$.
\end{obs}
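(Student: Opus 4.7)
The plan is to verify the two claims directly from the definitions recalled in Appendix \ref{sec:quant_cont}. For the first claim, I would unpack the sheaf of outcomes: with $\mathcal{M} = \langle A, A^{\leq k}, B\rangle$, the definition gives $\sheaf{E}_{\mathcal{M}}(C) = B^C$ on each context $C \in A^{\leq k}$, with restriction maps being ordinary function restriction. From the construction of $\sheaf{H}_k(A,B)$ and $\sheaf{I}_k(A,B)$ in Section \ref{sec:presheaf}, their sections over $C$ are exactly those functions $C \to B$ which additionally preserve (respectively, preserve, reflect, and are injective on) related tuples in $C$, and their restriction maps are also ordinary function restriction. Hence each $\sheaf{H}_k(A,B)(C)$ and $\sheaf{I}_k(A,B)(C)$ is a subset of $\sheaf{E}_{\mathcal{M}}(C) = B^C$, and the inclusions are natural in $C$, giving subpresheaf inclusions into $\sheaf{E}_{\mathcal{M}}$.

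For the second claim, I would show that $\overline{\sheaf{H}_k}$ and $\overline{\overline{\sheaf{I}_k}}$ are flasque, i.e.\ that all of their restriction maps are surjective. The case of $\overline{\sheaf{H}_k}$ is essentially Proposition \ref{prop:sam1}: that result exhibits a bijection between positional Duplicator strategies and non-empty flasque subpresheaves, and identifies $\overline{\sheaf{H}_k}$ as the greatest one, so it is flasque by construction. For $\overline{\overline{\sheaf{I}_k}}$, the key observation is that $\mathbf{BijForth}(\sheaf{S}, s)$ is strictly stronger than $\mathbf{Forth}(\sheaf{S}, s)$: a witnessing bijection $b_s \colon A \to B$ supplies, for every $a \in A$, a particular $b = b_s(a)$ for which $s \cup \{(a,b)\}$ lies in $\sheaf{S}$. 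Combined with the downward-closure enforced by the $\downarrow$ step of the fixpoint operator, iterating this one-variable-at-a-time extension shows that for any $C \subseteq C'$ in $A^{\leq k}$ and any $s \in \overline{\overline{\sheaf{I}_k}}(C)$, there exists $s' \in \overline{\overline{\sheaf{I}_k}}(C')$ with $s'_{\mid_C} = s$, which is precisely surjectivity of the restriction map.

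There is no real obstacle here; the observation mostly records that the objects already constructed in Section \ref{sec:presheaf} fit the Abramsky--Brandenburger template of empirical models over a measurement scenario. The one point requiring attention is that flasqueness demands surjectivity for arbitrary inclusions $C \subseteq C'$ in $A^{\leq k}$, not only singleton extensions, and this is handled by the inductive argument on $|C'| - |C|$ sketched above, which terminates because all contexts have size at most $k$.
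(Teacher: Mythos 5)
Your proof is correct and matches the paper's intent; the paper states this as an observation without an explicit proof, treating it as immediate from the definitions, and what you have written is precisely the argument the paper leaves implicit. You correctly identify that the first claim is a matter of unfolding $\sheaf{E}_{\mathcal{M}}(C)=B^C$ and noting the restriction maps coincide, and that the second claim reduces to showing flasqueness, where for $\overline{\sheaf{H}_k}$ it is built into the characterisation via Proposition~\ref{prop:sam1}, and for $\overline{\overline{\sheaf{I}_k}}$ it follows because $\mathbf{BijForth}$ implies $\mathbf{Forth}$ and iterated one-step extensions (valid since intermediate domains stay below size $k$) give surjectivity of all restriction maps, not just singleton ones.
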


This observation combined with Lemma \ref{lem:klocal} shows that for $k$
at least as large as the arity of the signature of $A$ and $B$, strong contextuality
of the empirical models $\overline{\sheaf{H}_k}$ and $\overline{\overline{\sheaf{I}_k}}$ is equivalent to the pair $(A, B)$ being rejected by CSP and SI, respectively.
Formally this is stated as

\begin{obs}
  For any $A$ and $B$ relational structures and $k$ at least the arity of the largest relation on
  $A$ then
  $$ \mathbf{SC}(\overline{\sheaf{H}_k}(A,B)) \iff A \not\rightarrow B $$
  and
  $$ \mathbf{SC}(\overline{\sheaf{I}_k}(A,B)) \iff A \not\cong B $$
\end{obs}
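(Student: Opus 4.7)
The plan is to reduce each biconditional to Lemma \ref{lem:klocal} via the fact that the fixpoint operators defining $\overline{\sheaf{H}_k}$ and $\overline{\sheaf{I}_k}$ preserve the existence of global sections. Unfolding the definition of $\mathbf{SC}$, the observation asserts that $\overline{\sheaf{H}_k}(A,B)$ has no global section iff $A \not\rightarrow B$, and similarly for $\overline{\sheaf{I}_k}(A,B)$ and $\not\cong$. Since $k$ is at least the arity of the signature, Lemma \ref{lem:klocal} already equates $A \rightarrow B$ with the existence of a global section of $\sheaf{H}_k(A,B)$, and $A \cong B$ with $|A| = |B|$ plus the existence of a global section of $\sheaf{I}_k(A,B)$. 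So it suffices to prove that $\sheaf{H}_k$ has a global section iff $\overline{\sheaf{H}_k}$ does, and likewise for the isomorphism presheaves.

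One direction is immediate in both cases: $\overline{\sheaf{H}_k}$ is a subpresheaf of $\sheaf{H}_k$, so any global section of the former restricts to one of the latter. For the converse in the homomorphism case, given a global section $\{s_U\}_{U\in A^{\leq k}}$ of $\sheaf{H}_k$, I would form the pointwise-singleton subpresheaf $\sheaf{T} \subseteq \sheaf{H}_k$ defined by $\sheaf{T}(U) \defeq \{s_U\}$. Naturality of the family forces every restriction map of $\sheaf{T}$ to be surjective, so $\sheaf{T}$ is flasque. Proposition \ref{prop:sam1} identifies $\overline{\sheaf{H}_k}$ with the maximal flasque subpresheaf of $\sheaf{H}_k$, hence $\sheaf{T} \subseteq \overline{\sheaf{H}_k}$, and the family $\{s_U\}_U$ is a global section of $\overline{\sheaf{H}_k}$.

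For the isomorphism case, I would first dispose of the trivial situation $|A| \neq |B|$: here no bijection $A \to B$ exists, so the bijective forth property $\mathbf{BijForth}$ fails at every $s \in \sheaf{I}_k$ with $|\mathbf{dom}(s)| < \min(|A|,|B|)$, making the relevant fixpoint empty, while simultaneously $A \not\cong B$. When $|A| = |B|$, any global section of $\sheaf{I}_k$ arises (by the argument in the proof of Lemma \ref{lem:klocal}) from an honest isomorphism $h\colon A \to B$, so the singleton subpresheaf $\sheaf{T}(U) = \{h|_U\}$ is flasque, and each $h|_U$ satisfies $\mathbf{BijForth}$ using $h$ itself as witness bijection. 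Thus $\sheaf{T}$ is a fixpoint of the corresponding operator and embeds into $\overline{\sheaf{I}_k}$, as required.

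The main obstacle is setting up the "singleton subpresheaf" construction cleanly and verifying that it satisfies the stronger bijective forth condition in the isomorphism case; once that is in place the equivalence is a routine unpacking of definitions combined with Proposition \ref{prop:sam1} and Lemma \ref{lem:klocal}.
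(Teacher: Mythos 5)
Your proposal is correct and takes essentially the same route as the paper's own (largely implicit) justification: unfold $\mathbf{SC}$ as the non-existence of a global section, apply Lemma \ref{lem:klocal}, and observe that passing to the $k$-consistency and $k$-Weisfeiler-Leman greatest fixpoints does not change whether a global section exists. Your singleton-subpresheaf argument (a global section of $\sheaf{H}_k$, resp.\ $\sheaf{I}_k$, gives a flasque, resp.\ bijective-forth, subpresheaf contained in the greatest fixpoint) is exactly the device the paper uses elsewhere (e.g.\ the set $S_f$ in the proof of Theorem \ref{thm:rings}), so you are simply making explicit the step the paper treats as immediate.
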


Furthermore, the logical contextuality of an individual local section
corresponds to the impossibility of extending that section to a full isomorphism
or homomorphism.

\begin{obs}
For any $A$ and $B$ relational structures, $s\in \overline{\sheaf{H}_k}(A,B)(C)$ and
$s'\in \overline{\sheaf{I}_k}(A,B)(C)$ then
$$ \mathbf{LC}(\overline{\sheaf{H}_k}(A,B), s) \iff \neg\exists f : A \rightarrow B \text{ s.t.\ } f_{\mid_C} = s$$
and
$$ \mathbf{LC}(\overline{\sheaf{I}_k}(A,B), s') \iff \neg\exists f : A \rightarrow B \text{, an isomorphism s.t.\ } f_{\mid_C} = s' $$
\end{obs}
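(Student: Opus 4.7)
The plan is to unfold the definition of $\mathbf{LC}$ and reduce both statements to a refinement of Lemma \ref{lem:klocal}. By definition, $\mathbf{LC}(\overline{\sheaf{H}_k}(A,B), s)$ holds iff there is no global section $\{s_U\}_{U \in A^{\leq k}}$ of $\overline{\sheaf{H}_k}(A,B)$ with $s_C = s$, and similarly for $\overline{\overline{\sheaf{I}_k}}$. So it suffices to exhibit a bijection between such global sections and homomorphisms (resp. isomorphisms) $f \colon A \to B$ satisfying $f_{\mid_C} = s$.

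For the direction ``extension $\Rightarrow \neg\mathbf{LC}$'', given a homomorphism $f \colon A \to B$ with $f_{\mid_C} = s$, I would set $s_U := f_{\mid_U}$ for each $U \in A^{\leq k}$. This family is trivially a global section of $\sheaf{H}_k(A,B)$ by the naturality of restriction. To see that it lives in $\overline{\sheaf{H}_k}(A,B)$, observe that the singleton-valued subpresheaf $\sheaf{F} \subset \sheaf{H}_k$ defined by $\sheaf{F}(U) = \{f_{\mid_U}\}$ is flasque (any restriction map between singletons is automatically surjective), so by Proposition \ref{prop:sam1} and the maximality of $\overline{\sheaf{H}_k}$ we have $\sheaf{F} \subset \overline{\sheaf{H}_k}$, and in particular $s_U \in \overline{\sheaf{H}_k}(U)$ for every $U$.

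For the converse direction ``$\neg\mathbf{LC} \Rightarrow$ extension'', given a global section $\{s_U\}$ of $\overline{\sheaf{H}_k}(A,B)$ with $s_C = s$, I would invoke Lemma \ref{lem:klocal} directly: since $\overline{\sheaf{H}_k}(A,B) \subset \sheaf{H}_k(A,B)$, the family is in particular a global section of $\sheaf{H}_k(A,B)$, hence arises from a homomorphism $f \colon A \to B$ defined by $f(a) := s_{\{a\}}(a)$. Naturality of $s$ along the inclusion $C \hookrightarrow C$ (equivalently, the argument used in the proof of Lemma \ref{lem:klocal}) gives $f_{\mid_C} = s_C = s$.

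The isomorphism case follows mutatis mutandis, replacing $\overline{\sheaf{H}_k}$ with $\overline{\overline{\sheaf{I}_k}}$, ``homomorphism'' with ``isomorphism'', and ``flasque'' with ``satisfying the bijective forth property''. The only non-automatic verification is that the singleton-valued subpresheaf $\sheaf{F}(U) = \{f_{\mid_U}\}$ generated by a total isomorphism $f$ satisfies $\mathbf{BijForth}$: one simply takes $f$ itself as the witnessing bijection, since $f_{\mid_U} \cup \{(a, f(a))\} = f_{\mid_{U \cup \{a\}}} \in \sheaf{F}(U \cup \{a\})$ whenever $|U| < k$. I do not anticipate any substantive obstacle; the only subtle point is checking that the ``trace'' of a total (iso)morphism survives both greatest fixpoint constructions, and this is immediate from the singleton argument above.
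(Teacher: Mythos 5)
The paper states this observation without proof; it appears in Appendix~\ref{sec:quant_cont} as an unjustified consequence of the definitions, so there is no proof to compare against. On its own merits, your argument is correct and fills the gap cleanly. The decisive steps are exactly the right ones: (i) unfolding $\mathbf{LC}$ reduces the claim to a bijection between global sections of $\overline{\sheaf{H}_k}$ (resp.\ $\overline{\overline{\sheaf{I}_k}}$) anchored at $s$ and total homomorphisms (resp.\ isomorphisms) restricting to $s$; (ii) the backward direction is handled by passing the global section through to $\sheaf{H}_k$ and invoking Lemma~\ref{lem:klocal}; and (iii) the forward direction hinges on your singleton--trace observation that $\sheaf{F}(U) = \{f_{\mid_U}\}$ for a total morphism $f$ is automatically flasque (resp.\ satisfies $\mathbf{BijForth}$ with $f$ itself as witness), hence lies inside the greatest fixpoint by maximality. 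That last observation is the genuine content of the proof and you identified it precisely.

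Two small things worth naming. First, your citation of Proposition~\ref{prop:sam1} is doing less work than your prose suggests; the containment $\sheaf{F} \subset \overline{\sheaf{H}_k}$ follows from $\overline{\sheaf{H}_k}$ being the \emph{largest} flasque subpresheaf (the greatest fixpoint of $(\cdot)^{\uparrow\downarrow}$), which is definitional, not from the bijection with strategies. Second, the observation as stated in the paper silently inherits the hypotheses of Lemma~\ref{lem:klocal} --- namely that $k$ is at least the arity of $\sigma$, and for the isomorphism clause that $\abs{A} = \abs{B}$; without these, the backward direction fails (a global section of $\sheaf{H}_k$ need not assemble into a homomorphism if $k$ is too small to see whole tuples). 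Your proof uses the lemma and so implicitly assumes them; it would be cleaner to say so explicitly, especially since the adjacent observation in the paper does state the arity bound.
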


As cohomological contextuality gives a sufficient condition for logical
contextuality, we now introduce some terminology for cohomological contextuality
in subpresheaves $\sheaf{S} \subset \sheaf{H}_k(A,B)$. Firstly,
for the abelian presheaf $\sheaf{F} = F_{\mathbb{Z}}\circ \sheaf{S}$,
we call any element $r_C \in \sheaf{F}(C) $ a \emph{$\mathbb{Z}$-linear section}
of $\sheaf{S}$. Such a $\mathbb{Z}$-linear section can be represented as
a formal linear sum $$r_C = \sum_{s\in \sheaf{S}(C)}\alpha_s s$$
where $\alpha_s \in \mathbb{Z}$ for each $s \in \sheaf{S}(C)$. We
say that some $s \in \sheaf{S}(C)$ is $\mathbb{Z}$-extendable
in $\sheaf{S}$, write $\mathbf{\mathbb{Z}ext}(\sheaf{S}, s)$ if there is a collection $\{ r_{C'} \in \sheaf{F}(C') \}_{C' \in M} $
such that $r_C = s$ and for all $C', C'' \in M$ we have
$$ (r_{C'})_{\mid_{C' \cap C''}} =  (r_{C''})_{\mid_{C' \cap C''}}.$$
The following observation is immediate from this definition

\begin{obs}
  For any flasque subpresheaf $\sheaf{S}\subset \sheaf{H}_k(A,B)$ and
  any $s \in \sheaf{S}(C)$, we have
  $$ \mathbf{\mathbb{Z}ext}(\sheaf{S}, s) \iff \neg \mathbf{CLC}_{\mathbb{Z}}(\sheaf{S}, s) $$
\end{obs}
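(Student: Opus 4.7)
The plan is to show that both sides of the stated equivalence unwind, via the definitions introduced earlier in this appendix, to exactly the same assertion, namely the existence of a compatible family of $\mathbb{Z}$-linear sections of $\sheaf{S}$ restricting to $s$ at the context $C$. So the argument is essentially definitional, and the only substantive content is to match the ``compatible family'' formulation of a global section of the abelian presheaf $\sheaf{F}_{\mathbb{Z}} = F_{\mathbb{Z}}\circ \sheaf{S}$ with the cohomological reformulation of $\mathbf{CLC}_{\mathbb{Z}}$ that Abramsky, Barbosa and Mansfield provide.

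First I would unpack the right-hand side. By the equivalent condition for $\mathbf{CLC}_{\mathbb{Z}}(\sheaf{S}, s)$ recorded just above the statement, $\mathbf{CLC}_{\mathbb{Z}}(\sheaf{S}, s)$ holds precisely when there is \emph{no} family $\{r_{C'} \in \sheaf{F}_{\mathbb{Z}}(C')\}_{C'\in M}$ with $r_C = s$ that forms a global section of $\sheaf{F}_{\mathbb{Z}}$. Hence $\neg \mathbf{CLC}_{\mathbb{Z}}(\sheaf{S}, s)$ asserts the existence of such a global section.

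Next I would unpack the left-hand side. By definition, $\mathbf{\mathbb{Z}ext}(\sheaf{S}, s)$ asserts the existence of a family $\{r_{C'} \in \sheaf{F}_{\mathbb{Z}}(C')\}_{C'\in M}$ with $r_C = s$ satisfying $(r_{C'})_{\mid_{C'\cap C''}} = (r_{C''})_{\mid_{C'\cap C''}}$ for all $C', C'' \in M$. Since $M = A^{\leq k}$ is a subset of $P(A)$ ordered by inclusion and treated as a category with inclusion maps, a global section of the abelian presheaf $\sheaf{F}_{\mathbb{Z}}$ is exactly a natural transformation from the terminal presheaf, which is precisely a family of $r_{C'}$ indexed by $M$ compatible along all restriction maps $\sheaf{F}_{\mathbb{Z}}(C'' \subset C')$. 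The pairwise compatibility condition written in the definition of $\mathbf{\mathbb{Z}ext}$ is equivalent to this naturality, since for any $C'' \subset C'$ in $M$ we have $C'' = C' \cap C''$, and conversely any compatibility along pairwise intersections reduces to compatibility along the inclusions. So $\mathbf{\mathbb{Z}ext}(\sheaf{S}, s)$ asserts the existence of a global section of $\sheaf{F}_{\mathbb{Z}}$ restricting to $s$ at $C$, which matches the unpacked right-hand side.

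The only place where care is needed, and which I regard as the main (minor) obstacle, is justifying that pairwise intersection compatibility in the definition of $\mathbf{\mathbb{Z}ext}$ really agrees with naturality of a global section of a presheaf on $A^{\leq k}$ viewed as a poset. This follows from the observation above (that the intersection of two sets, when one contains the other, is just the smaller), combined with closure of $A^{\leq k}$ under pairwise intersection of its elements when we are comparing $r_{C'}$ and $r_{C''}$ via their restrictions; alternatively, one can pass through the common refinement $C' \cap C''$ and use functoriality of $\sheaf{F}_{\mathbb{Z}}$. With that in hand the two conditions coincide and the observation follows.
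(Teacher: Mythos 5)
Your proposal is correct and takes the same essentially definitional route as the paper, which simply records the observation as ``immediate from this definition'' after recalling Abramsky, Barbosa and Mansfield's characterisation of $\mathbf{CLC}_{\mathbb{Z}}$ in terms of global sections of $\sheaf{F}_{\mathbb{Z}}$. You spell out the only non-trivial matching step (that pairwise intersection compatibility on $M = A^{\leq k}$ agrees with naturality of a compatible family, using that $A^{\leq k}$ is closed under intersections and $C''\subset C'$ gives $C''=C'\cap C''$), which is exactly what the paper leaves implicit; one small caveat is that for an $\mathbf{AbGp}$-valued presheaf a ``global section'' should be read as an element of the limit, i.e.\ a compatible family, rather than literally as a natural transformation out of the terminal abelian presheaf (which is the zero presheaf), but you immediately give the correct compatible-family unpacking so the argument goes through unchanged.
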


This motivates the definitions of the cohomological algorithms given in the
main paper.

\section{Proofs omitted from Section \ref{sec:cohom} }\label{sec:app1}

To aid with the proof of this proposition we observe that the $\mathbb{Z}$-extendability
condition subsumes both the forth property and downward closure meaning that we have
a slightly simpler condition for the success of the cohomological $k$-consistency algorithm
given as follows.

\begin{obs}\label{obs:consist_set}
  For any structures $A$ and $B$ $A \rightarrow^{\mathbb{Z}}_k B$
  if and only if there exists a set $\emptyset \neq S \subset \mathbf{Hom}_k(A,B)$
 in which each element $s \in S$ is $\mathbb{Z}$-extendable in $S$.
\end{obs}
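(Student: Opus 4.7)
The plan is to prove the equivalence by unpacking the greatest fixpoint definition of $\rightarrow^{\mathbb{Z}}_k$ on one side and, on the other, by showing that the bare $\mathbb{Z}$-extendability hypothesis already forces both downward closure and the forth property, so that a witness set $S$ can be promoted to a non-empty fixpoint of $(\cdot)^{\mathbb{Z}\downarrow}$.

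For the forward direction, the argument is essentially definitional. Assume $A \rightarrow^{\mathbb{Z}}_k B$, so the greatest fixpoint $\sheaf{T} \defeq \overline{\sheaf{H}_k(A,B)}^{\mathbb{Z}}$ is non-empty. Set $S = \bigcup_{C \in A^{\leq k}} \sheaf{T}(C)$. Because $\sheaf{T}$ is a fixpoint of $(\cdot)^{\mathbb{Z}\downarrow}$, every $s \in S$ satisfies $\mathbf{\mathbb{Z}ext}(\sheaf{T}, s)$, and the witnessing $\mathbb{Z}$-linear global section of $\mathbb{Z}\sheaf{T}$ has all of its coefficients supported on $S$, so it also witnesses $\mathbf{\mathbb{Z}ext}$ with respect to the set $S$.

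For the backward direction, suppose a non-empty $S \subset \mathbf{Hom}_k(A,B)$ is given with every $s \in S$ being $\mathbb{Z}$-extendable in $S$. Let $\sheaf{S}$ be the subpresheaf of $\sheaf{H}_k(A,B)$ generated by $S$, i.e.\ $\sheaf{S}(C)$ is $(S \cap \sheaf{H}_k(A,B)(C))$ together with all restrictions of elements of $S$ whose domain contains $C$. I will then show two closure properties of $\sheaf{S}$. First, every element of $\sheaf{S}$ is $\mathbb{Z}$-extendable in $\sheaf{S}$: for $s \in S$ this is the hypothesis, transported along $S \subset \sheaf{S}$; for a proper restriction $s' = s_{\mid_{C'}}$ of some $s \in S$, the $\mathbb{Z}$-linear global section $\{r_U\}$ witnessing $\mathbf{\mathbb{Z}ext}(\sheaf{S}, s)$ already has $r_{C'} = (r_{\mathbf{dom}(s)})_{\mid_{C'}} = s_{\mid_{C'}} = s'$ by naturality, so the same family witnesses $\mathbf{\mathbb{Z}ext}(\sheaf{S}, s')$. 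Second, $\mathbf{\mathbb{Z}ext}(\sheaf{S}, s)$ implies $\mathbf{Forth}(\sheaf{S}, s)$: for any $a \in A$, put $C = \mathbf{dom}(s) \cup \{a\}$, and write the $\mathbb{Z}$-linear section at $C$ as $r_C = \sum_{t \in \sheaf{S}(C)} \alpha_t t$; the restriction identity $(r_C)_{\mid_{\mathbf{dom}(s)}} = s$ forces $\sum_{t_{\mid_{\mathbf{dom}(s)}} = s} \alpha_t = 1$, so some $t = s \cup \{(a,b)\}$ lies in $\sheaf{S}(C)$.

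Combining these two properties, $\sheaf{S}^{\mathbb{Z}\downarrow} = \sheaf{S}$, so $\sheaf{S}$ is a non-empty fixpoint of $(\cdot)^{\mathbb{Z}\downarrow}$ sitting inside $\sheaf{H}_k(A,B)$. Since $\overline{\sheaf{H}_k(A,B)}^{\mathbb{Z}}$ is the \emph{greatest} such fixpoint, it contains $\sheaf{S}$ and is therefore non-empty, yielding $A \rightarrow^{\mathbb{Z}}_k B$. The only subtle step is the forth-from-extendability calculation — everything else is a matter of correctly identifying $S$ with the subpresheaf it generates and chasing naturality of $\mathbb{Z}$-linear global sections along restrictions.
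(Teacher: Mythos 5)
Your proof is correct and follows exactly the route the paper intends: the paper states this observation with only the one-line remark that $\mathbb{Z}$-extendability subsumes the forth property and downward closure, and your argument (forward direction definitional from the greatest fixpoint; backward direction passing to the restriction-generated subpresheaf, transporting the witnessing $\mathbb{Z}$-linear global sections, extracting forth from the coefficient identity, and invoking maximality of the greatest fixpoint of the monotone operator $(\cdot)^{\mathbb{Z}\downarrow}$) is precisely that reasoning spelled out. No gaps worth flagging.
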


\begin{proof}[Proof of Proposition \ref{prop:comp}]
Success of the $\rightarrow^{\mathbb{Z}}_k$ algorithm for the pairs $(A,B)$
and $(B, C)$ results in two non-empty sets $S^{AB} \subset \mathbf{Hom}_k(A,B)$
and $S^{BC} \subset \mathbf{Hom}_k(B, C)$ in both of which each local section is
$\mathbb{Z}$-extendable. By Observation \ref{obs:consist_set}, to show that $A \rightarrow^{\mathbb{Z}}_k C$,
it suffices to show that the set $S^{AC} =
\{ s \circ t \ \mid \ s \in S^{BC}, t \in S^{AB} \}$ has the same property. \\

To show that every $p_0 =  s_0 \circ t_0 \in S^{AC}_{\mathbf{a}_0}$ is $\mathbb{Z}$-extendable in $S^{AC}$ we
construct a global $\mathbb{Z}$-linear section extending $p_0$ from the
$\mathbb{Z}$-linear sections $\{r^{t_0}_{\mathbf{a}} := \sum_t z_t t \}_{\mathbf{a}\in A^{\leq k}}$
and $\{r^{s_0}_{\mathbf{b}} := \sum_s w_s s \}_{\mathbf{b}\in B^{\leq k}}$ extending $t_0$ and $s_0$ respectively.
Define $\{ r^{p_0}_{\mathbf{a}} \}_{\mathbf{a} \in A^{\leq k}}$ as
$$ r^{p_0}_{\mathbf{a}} = \sum_{t \in S^{AB}_\mathbf{a}} \sum_{s \in S^{BC}_{t(\mathbf{a})}} z_t w_s (s \circ t)  $$
To show that this is a global $\mathbb{Z}$-linear section extending
$p_0$ we need to show firstly that $r^{p_0}_{\mathbf{a}_0} = p_0$
and secondly that the local sections of $r^{p_0}$ agree on the pairwise
intersections of their domains.

To show that $r^{p_0}_{\mathbf{a}_0} = p_0$ we observe that, as
$r^{t_0}$ $\mathbb{Z}$-linearly extends $t_0$, for all $t \in S^{AB}_{\mathbf{a}_0}$ we have
$$ z_t = \begin{cases}
      1, & \text{for } t = t_0\\
      0, & \text{otherwise,}
        \end{cases} $$
and similarly, for all $s \in S^{BC}_{t_0(\mathbf{a}_0)}$
$$ w_s = \begin{cases}
      1, & \text{for } w = w_0\\
      0, & \text{otherwise.}
        \end{cases} $$
From this we have that
$$ r^{p_0}_{\mathbf{a}_0} = z_{t_0} w_{s_0} (s_0 \circ t_0) = p_0 $$
as required.

Finally, we need to show for any $\mathbf{a}, \mathbf{a}^{\prime}$ in $A^{\leq k}$
with intersection $\mathbf{a}^{\prime\prime}$ that
$$r^{p_0}_{\mathbf{a}\mid_{\mathbf{a}^{\prime\prime}}} = r^{p_0}_{\mathbf{a}^{\prime}\mid_{\mathbf{a}^{\prime\prime}}}.$$
To do this we show that the left hand side depends only
on $\mathbf{a}^{\prime\prime}$ and not on $\mathbf{a}$. As this argument
applies equally to the right hand side, the result follows.

To begin with the left hand side is a dependent sum which loops over $t \in S^{AB}_{\mathbf{a}}$ and
$s \in S^{BC}_{t(\mathbf{a})}$  as follows:
$$
  r^{p_0}_{\mathbf{a}\mid_{\mathbf{a}^{\prime\prime}}} = \sum_{t, s} w_s z_t (s\circ t)_{\mid_{\mathbf{a}^{\prime\prime}}}
$$
To emphasise the dependence on $\mathbf{a}^{\prime\prime}$ we
can group this sum together by pairs $t^{\prime\prime}, s^{\prime\prime}$
with $t^{\prime\prime} \in S^{AB}_{\mathbf{a}^{\prime\prime}}$
and $s^{\prime\prime} \in S^{BC}_{t^{\prime\prime}(\mathbf{a}^{\prime\prime})}$.
Within each group the the sum loops over $t \in S^{AB}_{\mathbf{a}}$ such that
$t_{\mid_{\mathbf{a}^{\prime\prime}}} = t^{\prime\prime}$ and
$s \in S^{AB}_{t(\mathbf{a})}$ such that $s_{\mid_{\mathbf{a}^{\prime\prime}}}
= s^{\prime\prime}$. We write this as

$$
  \sum_{t^{\prime\prime}, s^{\prime\prime}} \sum_{t_{\mid_{\mathbf{a}^{\prime\prime}}} = t^{\prime\prime}} z_t \sum_{s_{\mid_{t^{\prime\prime}(\mathbf{a}^{\prime\prime})}}= s^{\prime\prime}} w_s  (s\circ t)_{\mid_{\mathbf{a}^{\prime\prime}}}
$$
We now show that for each $t^{\prime\prime}, s^{\prime\prime}$ the
corresponding part of the sum depends only on $t^{\prime\prime}$ and $ s^{\prime\prime}$.
This follows from three observations.

The first observation is that in the sum
$$
    \sum_{t_{\mid_{\mathbf{a}^{\prime\prime}}} = t^{\prime\prime}} z_t \sum_{s_{\mid_{t^{\prime\prime}(\mathbf{a}^{\prime\prime})}}= s^{\prime\prime}} w_s  (s\circ t)_{\mid_{\mathbf{a}^{\prime\prime}}}
$$
the formal variables $(s\circ t)_{\mid_{\mathbf{a}^{\prime\prime}}}$
are, by definition, all equal to the variable $(s^{\prime\prime}\circ t^{\prime\prime})$.
Thus we need only consider the coefficients, given by the sum
$$ \sum_{t_{\mid_{\mathbf{a}^{\prime\prime}}} = t^{\prime\prime}} z_t \sum_{s_{\mid_{t^{\prime\prime}(\mathbf{a}^{\prime\prime})}}= s^{\prime\prime}} w_s $$

The second observation is that for each $t$ such that $t_{\mid_{\mathbf{a}^{\prime\prime}}} = t^{\prime\prime}$
the sum
$$\sum_{s_{\mid_{t^{\prime\prime}(\mathbf{a}^{\prime\prime})}}= s^{\prime\prime}} w_s$$
is simply the $s^{\prime\prime}$ component of
$(r^{s_0}_{t(\mathbf{a})})_{\mid_{t^{\prime\prime}(\mathbf{a}^{\prime\prime})}}$.
As $r^{s_0}$ is a global $\mathbb{Z}$-linear section this is equal to the fixed
parameter $w_{s^{\prime\prime}}$. So the sum in question reduces to
$$ w_{s^{\prime\prime}} \cdot \left( \sum_{t_{\mid_{\mathbf{a}^{\prime\prime}}} = t^{\prime\prime}} z_t  \right)  $$

The final observation, is that the remaining sum is the $t^{\prime\prime}$ component
of $(r^{t_0}_{\mathbf{a}})_{\mid_{\mathbf{a}^{\prime\prime}}}$ which, as $r^{t_0}$
is a global $\mathbb{Z}$-linear section, is equal to $r^{t_0}_{t^{\prime\prime}}$. This
gives the final form of the expression for $(r^{p_0}_{\mathbf{a}})_{\mid_{\mathbf{a}^{\prime\prime}}}$
as
$$   \sum_{t^{\prime\prime}, s^{\prime\prime}} z_{t^{\prime\prime}}w_{s^{\prime\prime}}(t^{\prime\prime}\circ s^{\prime\prime} ) $$
It is easy to see that the same arguments apply to $r^{p_0}_{\mathbf{a}^{\prime}}$ and
so
$$ (r^{p_0}_{\mathbf{a}})_{\mid_{\mathbf{a}^{\prime\prime}}} = (r^{p_0}_{\mathbf{a}^{\prime}})_{\mid_{\mathbf{a}^{\prime\prime}}}$$
as required.
\end{proof}

\section{Proof of Theorem \ref{thm:rings}}\label{sec:app2}

To prove this theorem we invoke a result from \cite{abramsky_et_al}
which considers a similar set-up to that seen in the previous sections and proves
a result relating the non-existence of solutions to a system of linear equations over
a ring $R$ to the non-triviality of a family of cohomological ``obstructions''.
We will recall their set-up, the relevant result and a characterisation of these
cohomological ``obstructions'' in terms of global $\mathbb{Z}$-linear sections before
proving Theorem \ref{thm:rings}.

\subsection{Result from \textit{Contextuality, cohomology \& paradox}}

In order to state the relevant theorem, we start with some preliminary definitions.
Let a \emph{ring-valued measurement scenario} be a triple $\langle X, \mathcal{M}, R \rangle$
where $X$ is a finite set, $\mathcal{M}$ is a downward closed cover of $X$ and $R$ is a ring.
An \emph{$R$-linear equation} on $\langle X, \mathcal{M}, R \rangle$ is a triple
$\phi = (V_{\phi}, a, b)$ where $V_{\phi} \in \mathcal{M}$, $a: V_{\phi}\rightarrow R$
and $b \in R$. Then for any $s \in R^{V_{\phi}}$ we say that $s \models \phi$ if
$$ \sum_{m \in V_{\phi}} a(m)s(m) = b   $$
 in the ring $R$. \\

An \emph{empirical model} $S$ on $\langle X, \mathcal{M}, R \rangle$ is a collection
of sets $\{ S_C \}_{C \in \mathcal{M}}$ where for each $C$, $S_C \subset R^C$ satisfying
the following compatibility condition for all $C, C' \in \mathcal{M}$
$$ \{s_{\mid_{C\cap C'}} \ \mid \ s \in S_C \} = \{s'_{\mid_{C\cap C'}} \ \mid \ s' \in S_{C'} \}  $$
We make the following observation linking relational structures over signatures
$\sigma \subset \sigma_R$ and empirical models which will be useful later.
\begin{obs}\label{obs:emp_model}
For any $\mathbf{CSP}(A,R)$ and $S \subset \mathbf{Hom}_k(A,R)$ which is non-empty,
and downward-closed and satisfies the forth property
 then the local sections of $S$ form an
empirical model for the measurement scenario $\langle A, A^{\leq k}, R \rangle$.
\end{obs}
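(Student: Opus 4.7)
The plan is to unfold both definitions in the obvious way and verify the compatibility condition by double inclusion, using downward-closure for one direction and the forth property for the other. Specifically, given $S \subset \mathbf{Hom}_k(A, R)$ with the stated properties, for each $C \in A^{\leq k}$ define
\[ S_C \defeq \{ s \in S \ \mid \ \mathbf{dom}(s) = C \} \subset R^{C}, \]
where a partial homomorphism with domain $C$ is identified with the corresponding function in $R^{C}$. I will then verify, for each pair $C, C' \in A^{\leq k}$, the compatibility condition
\[ \{ s_{\mid_{C\cap C'}} \ \mid \ s \in S_C \} \ = \ \{ s'_{\mid_{C\cap C'}} \ \mid \ s' \in S_{C'} \}. \]

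My strategy will be to show that both sides coincide with $S_{C\cap C'}$, which then forces them to agree with each other. The inclusion of each restriction set into $S_{C\cap C'}$ follows immediately from downward-closure of $S$: if $s \in S_C$ then $s_{\mid_{C\cap C'}} \in S$ with domain $C \cap C'$, hence $s_{\mid_{C\cap C'}} \in S_{C\cap C'}$. For the reverse inclusion, I will take any $t \in S_{C\cap C'}$ and iteratively extend it using the forth property: at each step the current partial homomorphism has domain of size strictly less than $|C| \leq k$, so $\mathbf{Forth}(S,\cdot)$ applies and lets me append a new designated element $a \in C \setminus \mathbf{dom}(\cdot)$, choosing $b \in R$ so that the enlarged map remains in $S$. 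After at most $|C| - |C\cap C'|$ steps I obtain an $s \in S_C$ with $s_{\mid_{C\cap C'}} = t$. The argument for $S_{C'}$ is identical.

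The main subtlety to check is the size bookkeeping: the forth property is only guaranteed when the current domain has size strictly less than $k$, so I need to confirm that every intermediate extension stays below the threshold. Since $|C| \leq k$, at every stage of the extension the domain has size at most $|C| - 1 < k$, so the hypothesis of $\mathbf{Forth}$ is available exactly when needed. No other nontrivial content enters the proof; the rest is just matching the definitions of $S_C$ and of an empirical model, so the whole argument should fit in a short paragraph.
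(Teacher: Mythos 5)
The paper states this observation without any accompanying proof, treating the compatibility check as routine, and your argument supplies exactly the intended verification: each side of the compatibility condition is shown to equal $S_{C\cap C'}$, with downward-closure giving the inclusion of the restriction set into $S_{C\cap C'}$ and iterated application of the forth property giving the reverse. Your size bookkeeping is correct (every stage at which $\mathbf{Forth}$ is invoked has domain of size at most $|C|-1 < k$), so the argument is complete.
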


For an empirical model $S$ on an $R$-valued measurement scenario, the \emph{$R$-linear theory}
of $S$ is the set of $R$-linear equations
$$ \mathbb{T}_R(S) = \{ \phi \ \mid \ \forall s \in S_{V_{\phi}}, \ s \models \phi \} $$

If $\mathbb{T}_R(S)$ is inconsistent (i.e.\ there is no $R$-assignment to all the
variables in $X$ simultaneously satisfying each of the $R$-linear equations in
the theory), then the empirical model $S$ is said to be
``all-vs-nothing for $R$'', written $\mathbf{AvN}_{R}(S)$.

We can now state the following results that we need for Theorem \ref{thm:rings}. The first result shows an important implication about the
cohomological obstructions in an empirical model which has an inconsistent $R$-linear theory.
\begin{theor}[Abramsky, Barbosa, Kishida, Lal, Mansfield \cite{abramsky_et_al}]\label{thm:avn}
For any ring $R$ and any $R$-valued measurement scenario $\langle X, \mathcal{M}, R \rangle$
and any empirical model $S$ we have that
$$ \mathbf{AvN}_{R}(S) \implies \mathbf{CSC}_{\mathbb{Z}}(S)$$
where $\mathbf{CSC}_{\mathbb{Z}}(S)$ means that for every local section $s$ in $S$ the
``cohomological obstruction'' of Abramsky, Barbosa and Mansfield $\gamma(s)$ is
non-zero.
\end{theor}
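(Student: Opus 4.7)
The plan is to prove the contrapositive: assuming $\neg\mathbf{CSC}_{\mathbb{Z}}(S)$, I will construct an $R$-assignment $\sigma\colon X \to R$ that satisfies every equation in $\mathbb{T}_R(S)$, contradicting $\mathbf{AvN}_R(S)$. By the equivalent condition for $\mathbf{CLC}_{\mathbb{Z}}$ recalled in Appendix \ref{sec:quant_cont}, $\neg\mathbf{CSC}_{\mathbb{Z}}(S)$ gives some local section $s \in S_C$ together with a global $\mathbb{Z}$-linear section $\{r_{C'}\}_{C' \in \mathcal{M}}$ of the associated abelian presheaf with $r_C = s$. Writing each $r_{C'} = \sum_{t \in S_{C'}} \alpha^{C'}_t \, t$, the goal is to turn these formal $\mathbb{Z}$-sums into a single $R$-valued assignment, using the canonical ring homomorphism $\mathbb{Z} \to R$.

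First I would define, for each $x \in X$, $\sigma(x) \defeq \sum_{t \in S_{C'}} \alpha^{C'}_t \, t(x) \in R$ for any context $C' \ni x$. The key well-definedness claim is that this does not depend on the choice of $C'$. I would establish this by restricting the compatibility identity $r_{C'}|_{\{x\}} = r_{C''}|_{\{x\}}$ to $\{x\}$, grouping coefficients by the value $v = t(x) \in R$, and reading off that the value-indexed sums of coefficients agree across $C'$ and $C''$. That immediately gives $\sum_t \alpha^{C'}_t \, t(x) = \sum_{t'} \alpha^{C''}_{t'} \, t'(x)$ in $R$.

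Next I would check the normalization $\sum_{t \in S_{C'}} \alpha^{C'}_t = 1$ for every context $C'$. Since $\mathcal{M}$ is downward-closed and covers $X$, $\emptyset \in \mathcal{M}$, and $\mathbb{Z}S_\emptyset \cong \mathbb{Z}$ via the empty function. The compatibility of $\{r_{C'}\}$ forces $r_\emptyset$ to equal the restriction of $r_C = s$ to $\emptyset$, namely $1$; then $r_{C'}|_\emptyset = r_\emptyset = 1$ for every $C'$ yields the coefficient-sum identity. With this in hand, for any $\phi = (V_\phi, a, b) \in \mathbb{T}_R(S)$ I would compute
\begin{align*}
\sum_{x \in V_\phi} a(x) \sigma(x)
&= \sum_{x \in V_\phi} a(x) \sum_{t \in S_{V_\phi}} \alpha^{V_\phi}_t \, t(x) \\
&= \sum_{t \in S_{V_\phi}} \alpha^{V_\phi}_t \sum_{x \in V_\phi} a(x) t(x)
= \sum_{t \in S_{V_\phi}} \alpha^{V_\phi}_t \cdot b = b,
\end{align*}
where the penultimate equality uses $t \models \phi$ for every $t \in S_{V_\phi}$ (by definition of $\mathbb{T}_R(S)$) and the final one uses the normalization. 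So $\sigma$ simultaneously satisfies every equation in $\mathbb{T}_R(S)$, contradicting $\mathbf{AvN}_R(S)$.

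The main obstacle is not the algebraic manipulation but showing that the defining condition of $\mathbf{CSC}_{\mathbb{Z}}$ --- non-vanishing of a Čech cohomology class $\gamma(s)$ --- is equivalent to the more concrete non-existence statement about $\mathbb{Z}$-linear global sections; this is exactly the translation that Abramsky, Barbosa and Mansfield establish via the connecting map of the long exact sequence associated with $0 \to \sheaf{F}_{\tilde U} \to \sheaf{F}_\mathbb{Z} \to \sheaf{F}|_U \to 0$, and I would invoke it as a black box. A secondary subtlety worth flagging is that $\sigma(x)$ lands in $R$ only after applying $\mathbb{Z} \to R$ to each $\alpha^{C'}_t$ before multiplying by $t(x)$; the compatibility and normalization arguments above all remain valid after this homomorphism because they are natural in the coefficient ring.
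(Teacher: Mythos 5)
This statement is quoted in the paper as a theorem from Abramsky, Barbosa, Kishida, Lal and Mansfield~\cite{abramsky_et_al} and is not proved in the present paper; it is used as a black box in the proof of Theorem~\ref{thm:rings2}, alongside Theorem~\ref{thm:Z_ext}. So there is no ``paper's own proof'' to compare against, but your argument is correct and is essentially the proof one finds in the cited source: use the $\mathbb{Z}$-linear global-section characterisation of $\mathbf{CSC}_{\mathbb{Z}}$ (Theorem~\ref{thm:Z_ext}) to extract, from a non-obstructed $s$, a compatible family $\{r_{C'} = \sum_t \alpha^{C'}_t t\}$; push the coefficients through $\mathbb{Z}\to R$; use compatibility at singletons for well-definedness of $\sigma$ and compatibility at $\emptyset$ for the normalisation $\sum_t \alpha^{C'}_t = 1$; and then the affine combination $\sum_t \alpha^{V_\phi}_t\, t$ satisfies every equation of $\mathbb{T}_R(S)$ since each $t \in S_{V_\phi}$ does. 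The one place where you are slightly loose is notational --- you appeal to ``the equivalent condition for $\mathbf{CLC}_{\mathbb{Z}}$'' when what you actually invoke is Theorem~\ref{thm:Z_ext}'s restatement of $\mathbf{CSC}_{\mathbb{Z}}$ --- but the content is the same, and you correctly flag that this equivalence (via the connecting homomorphism of the long exact sequence) is the substantial input being used as a black box. Your observation that the whole computation is natural in the coefficient ring is exactly the point that makes the $\mathbb{Z}$-valued cohomology suffice uniformly over all $R$.
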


Next we have a result due to Abramsky, Barbosa and Mansfield which establishes
this useful equivalent condition for $\mathbf{CSC}_{\mathbb{Z}}(S)$
\begin{theor}[Abramsky, Barbosa, Mansfield \cite{ab_man_barb}] \label{thm:Z_ext}
  For any empirical model $S$, $\mathbf{CSC}_{\mathbb{Z}}(S)$ if and only if
  for every $s \in S_C$ there is no collection $\{r_{C'} \in \mathbb{Z}S_{C'} \}_{C' \in \mathcal{M}}$
  such that $r_C = s$ and for all $C_1, C_2 \in \mathcal{M}$
  $$ r_{C_1 \mid_{C_1 \cap C_2}} = r_{C_2 \mid_{C_1 \cap C_2}} $$
\end{theor}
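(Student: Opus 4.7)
The plan is to unpack the cohomological obstruction $\gamma(s)$ as the image of $s$ under the connecting homomorphism $\delta$ in the long exact sequence induced by the short exact sequence $0 \to \sheaf{F}_{\tilde{C}} \to \sheaf{F}_{\mathbb{Z}} \to \sheaf{F}_{\mid_C} \to 0$ recalled earlier in this appendix, and then to translate the condition $\delta(s) = 0$ into the concrete statement about compatible $\mathbb{Z}$-linear families by exactness. This turns a statement about \v{C}ech cohomology into one purely about global sections of the abelian presheaf $\sheaf{F}_{\mathbb{Z}}$.

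First I would recall the construction of $\gamma(s)$: given $s \in S_C$, flasqueness of the empirical model lets one extend $s$ across the cover to obtain a class $[s] \in H^0(M, \sheaf{F}_{\mid_C})$, and the connecting map $\delta$ sends this to $\gamma(s) \in \check{H}^1(M, \sheaf{F}_{\tilde{C}})$ by choosing local lifts to $\sheaf{F}_{\mathbb{Z}}$, taking \v{C}ech coboundaries, and observing that the result lies in the kernel $\sheaf{F}_{\tilde{C}}$. The relevant portion of the long exact sequence,
$$ H^0(M, \sheaf{F}_{\mathbb{Z}}) \xrightarrow{\pi_*} H^0(M, \sheaf{F}_{\mid_C}) \xrightarrow{\delta} \check{H}^1(M, \sheaf{F}_{\tilde{C}}), $$
then gives: $\gamma(s) = 0$ if and only if $[s]$ lies in the image of $\pi_*$.

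The next step is to unpack this image concretely. A class in $H^0(M, \sheaf{F}_{\mathbb{Z}})$ is precisely a compatible family $\{ r_{C'} \in \mathbb{Z} S_{C'} \}_{C' \in M}$ satisfying $r_{C_1 \mid_{C_1 \cap C_2}} = r_{C_2 \mid_{C_1 \cap C_2}}$ for all $C_1, C_2 \in M$, and $\pi_*$ sends such a family to the class of $r_C$ in $\sheaf{F}_{\mid_C}$. Hence $\gamma(s) = 0$ exactly when there is such a family with $r_C = s$. Negating over all $C \in M$ and all $s \in S_C$ yields the equivalence claimed in Theorem~\ref{thm:Z_ext}.

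The main obstacle will be tightening the correspondence between elements of $S_C$ and classes in $H^0(M, \sheaf{F}_{\mid_C})$: one needs to verify that two local sections over $C$ agree in $\sheaf{F}_{\mid_C}(C)$ if and only if they agree on the nose in $\sheaf{F}_{\mathbb{Z}}(C)$, i.e.\ that $\sheaf{F}_{\tilde{C}}(C) = 0$. This is essentially built into the definition of $\sheaf{F}_{\mid_C}$ as the quotient that kills sections vanishing on $C$, but it has to be made precise before the exactness argument cleanly delivers the equality $r_C = s$ rather than merely a congruence modulo $\sheaf{F}_{\tilde{C}}$.
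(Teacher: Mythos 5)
The paper does not actually prove Theorem~\ref{thm:Z_ext}: it cites it directly from Abramsky, Barbosa, and Mansfield and only sketches the supporting machinery in Appendix~\ref{sec:quant_cont} (the short exact sequence $0 \to \sheaf{F}_{\tilde{U}} \to \sheaf{F}_{\mathbb{Z}} \to \sheaf{F}_{\mid_U} \to 0$, the induced long exact sequence, and the connecting map $\delta$). Your reconstruction uses precisely that machinery and fills in the standard exactness argument: $\gamma(s)=0$ iff $[s]$ lies in the image of $\pi_* \colon H^0(M,\sheaf{F}_{\mathbb{Z}}) \to H^0(M,\sheaf{F}_{\mid_C})$, and $H^0(M,\sheaf{F}_{\mathbb{Z}})$ is exactly the group of compatible $\mathbb{Z}$-linear families. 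That is the intended proof of this theorem, so your proposal takes essentially the same route as the cited source.

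Two small remarks. First, the ``main obstacle'' you flag dissolves immediately: $\sheaf{F}_{\tilde{C}}$ is by definition the kernel of the restriction map $\sheaf{F}_{\mathbb{Z}}(-) \to \sheaf{F}_{\mathbb{Z}}(- \cap C)$, and at $C$ itself that map is the identity, so $\sheaf{F}_{\tilde{C}}(C)=0$ and the surjection $\pi_C \colon \sheaf{F}_{\mathbb{Z}}(C) \to \sheaf{F}_{\mid_C}(C)$ is already an isomorphism. Hence $\pi_*(\{r_{C'}\}) = [s]$ does force $r_C = s$ on the nose, with the agreement at the other contexts following from compatibility of $r$. Second, the appeal to flasqueness in constructing $[s] \in H^0(M,\sheaf{F}_{\mid_C})$ is superfluous: the family $\{s_{\mid_{C'\cap C}}\}_{C'\in M}$ is automatically a $0$-cocycle for $\sheaf{F}_{\mid_C}$ just by functoriality of restriction, independently of whether $S$ is flasque. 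Neither of these affects the correctness of the argument.
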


This condition is precisely what inspired the cohomological $k$-consistency algorithm
and in the next section we show how these two results imply Theorem \ref{thm:rings}.

\subsection{Proof of Theorem \ref{thm:rings}}
We now prove the following equivalent formulation of Theorem \ref{thm:rings} which
replaces a structure with a ring representation with the underlying ring $R$
``represented as a relational structure''. This means simply that the relational
symbols (which are affine subsets of $R$ under the ring $R$) are labelled as
$E^{m}_{\mathbf{a}, b}$ for each $\mathbf{a}$ an $m$-tuple of elements of the ring $R$ and $b$ an element
of $R$ such that $(E^{m}_{\mathbf{a}, b})^R = \{ (r_1, \ldots , r_m) \ \mid \ \sum_{i} a_i\cdot r_i = b \}$.

\begin{theor}\label{thm:rings2}
For any finite ring $R$ represented as a relational structure over a finite signature $\sigma$,
there is a $k$ such that the cohomological $k$-consistency algorithm decides $\mathbf{CSP}(R)$.\\
Alternatively, there exists a $k$ such that for all $\sigma$-structures $A$
$$ A \rightarrow^{\mathbb{Z}}_{k} R \iff A \rightarrow R  $$
\end{theor}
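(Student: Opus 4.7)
\medskip

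\noindent\textbf{Proof proposal.} My plan is to set $k$ to be at least the maximum arity of a symbol in $\sigma$ and to chain together Observation \ref{obs:emp_model}, Theorem \ref{thm:Z_ext} and Theorem \ref{thm:avn} so that cohomological $k$-consistency of $(A,R)$ forces the $R$-linear theory of the surviving empirical model to be consistent, and then to verify that any satisfying $R$-assignment of that theory is literally a homomorphism $A\rightarrow R$. The forward direction is routine: any homomorphism $h\colon A\rightarrow R$ induces the subpresheaf $\sheaf{S}\subset \sheaf{H}_k(A,R)$ of all restrictions of $h$, and placing the coefficient $1$ on each $h_{\mid_U}$ yields a global $\mathbb{Z}$-linear section of $\mathbb{Z}\sheaf{S}$ witnessing $\mathbf{\mathbb{Z}ext}(\sheaf{S},s)$ for every $s\in\sheaf{S}$, so $\overline{\sheaf{H}_k(A,R)}^{\mathbb{Z}}$ is non-empty and $A\rightarrow^{\mathbb{Z}}_k R$.

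For the reverse direction, suppose $A\rightarrow^{\mathbb{Z}}_k R$ and let $\sheaf{S}=\overline{\sheaf{H}_k(A,R)}^{\mathbb{Z}}$, which by definition is a non-empty subpresheaf on which every local section satisfies $\mathbf{Forth}$ and $\mathbf{\mathbb{Z}ext}$. By Observation \ref{obs:emp_model}, the family $\{\sheaf{S}(C)\}_{C\in A^{\leq k}}$ forms an empirical model $S$ on the $R$-valued measurement scenario $\langle A, A^{\leq k}, R\rangle$. The $\mathbb{Z}$-extendability of every $s\in\sheaf{S}(C)$ is exactly the negation of the condition characterising $\mathbf{CSC}_{\mathbb{Z}}(S)$ given by Theorem \ref{thm:Z_ext}, so $\neg\mathbf{CSC}_{\mathbb{Z}}(S)$. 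Taking the contrapositive of Theorem \ref{thm:avn} then gives $\neg\mathbf{AvN}_R(S)$, i.e.\ the $R$-linear theory $\mathbb{T}_R(S)$ is consistent, so there is an assignment $\alpha\colon A\rightarrow R$ satisfying every equation in $\mathbb{T}_R(S)$.

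It then remains to show that $\alpha$ is a homomorphism. Here is where the choice of $k$ matters: for each symbol $E^m_{\mathbf{a},b}\in\sigma$ and each related tuple $(x_1,\ldots,x_m)\in (E^m_{\mathbf{a},b})^A$, the set $U=\{x_1,\ldots,x_m\}$ has $\abs{U}\leq m\leq k$, so $U\in A^{\leq k}$. Because every $s\in\sheaf{S}(U)$ is a partial homomorphism, the tuple $(s(x_1),\ldots,s(x_m))$ lies in $(E^m_{\mathbf{a},b})^R$ and therefore satisfies $\sum_i a_i\,s(x_i)=b$; this says exactly that the $R$-linear equation with variable set $U$, coefficients $\mathbf{a}$ and right-hand side $b$ belongs to $\mathbb{T}_R(S)$. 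Consequently $\alpha$ satisfies every such equation, which is to say $\alpha$ preserves every related tuple and is a homomorphism $A\rightarrow R$.

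The only real obstacle is verifying that the CSP constraints of $(A,R)$ all sit inside $\mathbb{T}_R(S)$, and this is precisely what the choice $k\geq\max_{R\in\sigma}\mathrm{arity}(R)$ buys: each constraint is witnessed on a single context $U\in A^{\leq k}$, so partial homomorphy of sections on $U$ translates directly into an equation of the $R$-linear theory. Everything else is an application of the imported results \ref{thm:avn} and \ref{thm:Z_ext}, and the bookkeeping between $\sheaf{S}$ viewed as a subpresheaf of $\sheaf{H}_k$ and $S$ viewed as an empirical model.
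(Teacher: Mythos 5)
Your proof is correct and follows essentially the same route as the paper's: you fix $k$ to be at least the maximum arity, invoke Observation \ref{obs:emp_model} to view the surviving set of local sections as an empirical model, and chain Theorems \ref{thm:Z_ext} and \ref{thm:avn} with the observation that the instance's defining equations all live in the $R$-linear theory. The only difference is presentational — you argue the reverse direction directly by extracting a homomorphism from a satisfying assignment, whereas the paper argues the contrapositive by showing the algorithm empties out after one pass; also note that ``$\mathbb{Z}$-extendability of every $s$'' is strictly stronger than, not ``exactly'', the negation of $\mathbf{CSC}_{\mathbb{Z}}(S)$, though the implication you need still holds.
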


\begin{proof}
The direction $A \rightarrow R \implies A \rightarrow^{\mathbb{Z}}_k R$ is easy
and is true for all signatures $\sigma$ and all $k\leq |A|$. Indeed note that to any
homomorphism $f : A \rightarrow R$ we can associate the  set $S_f = \{f_{\mid_{\mathbf{a}}} \}_{\mathbf{a}\in A^{\leq k}} \subset \mathbf{Hom}_k(A,R)$. It is not hard to see that $S_f$ is downward closed, has the forth
property and that $S_f$ is itself a global section witnessing the $\mathbb{Z}$-extendability
of each $f_{\mid_{\mathbf{a}}} \in S_f$. By Observation \ref{obs:consist_set}, this
implies that $A \rightarrow^{\mathbb{Z}}_{k} R$. \\

This leaves the more challenging direction, that there exists a $k$ such that
$A \not\rightarrow R \implies A \not\rightarrow^{\mathbb{Z}}_k R$ for all $A$. Suppose
that the maximum arity of a relation in $\sigma$ is $n$. Then as $R$ is a relational
model of a finite ring we know that each relation on $R$ is of the form
$E^{m}_{\mathbf{a}, b} = \{ (r_1, \ldots , r_m) \ \mid \ \sum_{i} a_i\cdot r_i = b \}$
where $\mathbf{a}$ is an $m$-tuple of elements of the ring $R$ and $b$ is an element
of $R$. We show that $k = n$ will suffice to identify all unsatisfiable instances $A$.

For $R$ and $\sigma$ as above any instance $\mathbf{CSP}(A,R)$ is specified by a set $A$
of variables where each related tuple $(x_1, \ldots , x_m ) \in (E^m_{\mathbf{a}, b})^A$
specifies an $R$-linear equation $\sum_{i} a_i\cdot x_i = b$. Call the collection
of such equations $\mathbb{T}^A$.
The fact that there is no homomorphism $A \rightarrow R$ is exactly the statement that $\mathbb{T}^A$
is unsatisfiable. Taking $k = n$, we have that the $R$-linear theory
 $\mathbb{T}_R(\mathbf{Hom}_k(A,R))$ (as defined in the previous section) contains
 $\mathbb{T}^A$ and so is unsatisfiable. We now show how this is sufficient to prove the theorem.

Consider running the cohomological $k$-consistency algorithm on the pair $(A, R)$
we get $S_0 = \overline{\mathbf{Hom}_k(A, R)}$. If $S_0 = \emptyset$ we are done.
Otherwise, by Observation \ref{obs:emp_model}, $S_0$ can be considered as an empirical
model on the measurement scenario $\langle A, A^{\leq k}, R \rangle$. Furthermore,
as $S_0 \subset \mathbf{Hom}_k(A, R)$, we have that $\mathbb{T}_R(S_0) \supset \mathbb{T}_R(\mathbf{Hom}_k(A,R))$.
This means in particular that $\mathbb{T}_R(S_0)$ is unsatisfiable by the assumption
that $A \not\rightarrow R$. By Theorems \ref{thm:avn} and \ref{thm:Z_ext}, this
means that no local section $s$ of $S_0$ is $\mathbb{Z}$-extendable in $S_0$, so
$S_1 = \emptyset$. So the cohomological $k$-consistency algorithm
rejects $(A, R)$ and $A \not\rightarrow^{\mathbb{Z}}_k R$, as required.

\end{proof}

It is notable that in the proof of this theorem, we see that the cohomological $k$-consistency
algorithm decides unsatisfiability of these systems of equations after just one iteration of its loop.
A future version of this work will investigate whether multiple iterations are required in
over different CSP domains. For now, we retain the iterative nature of the algorithm to
guarantee the conclusion in Observation \ref{obs:consist_set}.

\section{The strength of cohomological $k$-Weisfeiler-Leman}\label{sec:app3}

%
%
%
%
%
In this appendix, we demonstrate the power of $\equiv^Z_k$ to distinguish structures
which disagree on the CFI property, proving Theorem \ref{thm:main}.
To do this we give an equivalent definition of the cohomological $k$-Weisfeiler-Leman algorithm
and prove that this behaves well with appropriate logical interpretations.

\subsection{Cohomological $k$-Weisfeiler-Leman Equivalence}

The following is an alternative way of computing the $\equiv^{\mathbb{Z}}_k$ relation defined in the main article. Begin by computing
$S_0 = \overline{\overline{\mathbf{Isom}_k(A,B)}}$
as in the $k$-WL equivalence algorithm. If $S_0 = \emptyset$, then reject the pair
$(A,B)$ and halt. Otherwise we enter the following loop with $i=0$:
\begin{enumerate}
  \item Compute $S^{\mathbb{Z}}_i = \{s \in S_i \ \mid \ s \text{ is $\mathbb{Z}$-bi-extendable in }S_i \}$
  \item Compute $S_{i+1} = \overline{\overline{S^{\mathbb{Z}}_i}}$
  \item If $S_{i+1} = \emptyset$, then reject $(A,B)$ and halt
  \item If $S_{i+1} = S_i$ then accept $(A,B)$ and halt.
  \item Return to Step 1 with $i = i+1$.
\end{enumerate}
 If this algorithm accepts a pair $(A,B)$ we say that $A$ and $B$ are
 cohomologically $k$-equivalent and we write $A \equiv^{\mathbb{Z}}_k B$. \\

 We now record some simple facts about this equivalence. Firstly, by definition,
 this generalises $k$-equivalence and so $(k)$-WL equivalence, i.e.\
 $$A \equiv^{\mathbb{Z}}_k B \implies A \equiv_k B \iff A \equiv_{(k-1)-WL} B$$
Secondly, this algorithm determines a maximal set $S \subset \mathbf{Isom}_k(A,B)$
which is downward-closed, has the bijective forth property and for which each $f \in S$
is  $\mathbb{Z}$-extendable in $S$ and $f^{-1}$ is $\mathbb{Z}$-extendable in $S^{-1}$.
However, analogously to Observation \ref{obs:consist_set}, we note that the existence
of any non-empty $S$ satisfying these properties is a witness of $\equiv^{\mathbb{Z}}_k$.
\begin{obs}\label{obs:set_condition}
  For any two structures $A$ and $B$, $A \equiv^{\mathbb{Z}}_k B$ if and only if
  there exists a subset $S \subset \mathbf{Isom}_k(A,B)$ such that both $S$ and $S^{-1}$ are downward-closed,
  has the bijective forth property and have $\mathbb{Z}$-extendability for each of their elements.
\end{obs}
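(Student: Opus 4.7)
The plan is to use the construction of $\overline{\overline{\sheaf{I}_k(A,B)}}^{\mathbb{Z}}$ as the greatest fixpoint of the operator $(\cdot)^{\mathbb{Z}\#}$ iterated from $\sheaf{I}_k(A,B)$. Writing $\sheaf{S}_0 \defeq \sheaf{I}_k(A,B)$ and $\sheaf{S}_{i+1} \defeq \sheaf{S}_i^{\mathbb{Z}\#}$, we have $\overline{\overline{\sheaf{I}_k(A,B)}}^{\mathbb{Z}} = \bigcap_i \sheaf{S}_i$, a stationary intersection since everything in sight is finite. The two directions of the observation then correspond to the ``canonical witness'' and ``containment in the fixpoint'' perspectives on this limit.

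For the forward direction, if $A \equiv^{\mathbb{Z}}_k B$ then $S \defeq \overline{\overline{\sheaf{I}_k(A,B)}}^{\mathbb{Z}}$ is non-empty by definition. Any subpresheaf of $\sheaf{I}_k(A,B)$ is automatically downward-closed when viewed as a set of partial isomorphisms, and closure of $S$ under restriction entails closure of $S^{-1}$ as well, because restriction and inversion of partial isomorphisms commute. That $S$ is a fixpoint of $(\cdot)^{\mathbb{Z}\#}$ means by construction that every $s \in S$ satisfies $\mathbf{BijForth}(S, s)$ and $\mathbf{\mathbb{Z}bext}(S, s)$, so $S$ serves as the required witness.

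For the reverse direction, given a non-empty $S \subset \mathbf{Isom}_k(A,B)$ satisfying the stated properties, I would show by induction on $i$ that $S \subseteq \sheaf{S}_i$, which forces the greatest fixpoint to be non-empty. The base case is immediate. For the step, take $s \in S \subseteq \sheaf{S}_i$: the hypotheses supply a bijection $b_s \colon A \to B$ whose graph sits inside $S$ and witnesses $\mathbf{BijForth}(S, s)$, together with global $\mathbb{Z}$-linear sections of $\mathbb{Z}S$ and $\mathbb{Z}S^{-1}$ witnessing $\mathbf{\mathbb{Z}bext}(S, s)$. Since all these witnesses live inside $S \subseteq \sheaf{S}_i$, they remain witnesses for $\mathbf{BijForth}(\sheaf{S}_i, s)$ and $\mathbf{\mathbb{Z}bext}(\sheaf{S}_i, s)$, so $s \in \sheaf{S}_i^{\mathbb{Z}\#} = \sheaf{S}_{i+1}$.

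The only substantive point beyond unwinding the definitions is this monotonicity of the two conditions under enlarging the ambient presheaf: both are purely existential, so any witness inside $S$ stays valid inside any presheaf containing $S$. This mirrors the pattern already established in Observation \ref{obs:consist_set} for cohomological $k$-consistency, and no essentially new ideas are needed here.
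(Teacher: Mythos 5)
Your proposal is correct and follows the same route the paper has in mind: one direction reads off the greatest fixpoint itself as the canonical witness, and the other is the standard monotonicity argument (both $\mathbf{BijForth}$ and $\mathbf{\mathbb{Z}bext}$ are existential, so witnesses persist under enlarging the ambient presheaf, which forces any qualifying $S$ into every stage $\sheaf{S}_i$ of the iteration). The paper itself states the observation without an explicit proof, deferring to the analogy with Observation~\ref{obs:consist_set}, so you have simply made explicit exactly what that analogy requires; the one small point you leave implicit is that $\mathbf{BijForth}(S,s)$ and $\mathbf{BijForth}(S^{-1},s^{-1})$ are equivalent (via inverting the witnessing bijection $b_s$), which is needed to see that the fixpoint also certifies the bijective forth property stated for $S^{-1}$.
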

Finally, we observe that such a set also satisfies the conditions for witnessing
cohomological $k$-consistency of $\mathbf{CSP}(A, B)$ and $\mathbf{CSP}(B, A)$.
Formally we have
\begin{obs}
  For any two structures $A$ and $B$, $A \equiv^{\mathbb{Z}}_k B$ implies that
  $A \rightarrow^{\mathbb{Z}}_k B$ and $B \rightarrow^{\mathbb{Z}}_k A$.
\end{obs}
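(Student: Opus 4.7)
The plan is to derive this directly from the set-based characterisations of $\equiv^{\mathbb{Z}}_k$ and $\rightarrow^{\mathbb{Z}}_k$ given in Observations \ref{obs:set_condition} and \ref{obs:consist_set} respectively. First I would unpack the hypothesis $A \equiv^{\mathbb{Z}}_k B$ via Observation \ref{obs:set_condition} to obtain a non-empty set $S \subset \mathbf{Isom}_k(A,B)$ such that both $S$ and $S^{-1}$ are downward-closed, satisfy the bijective forth property, and in which every element is $\mathbb{Z}$-extendable (with $s^{-1} \in S^{-1}$ being $\mathbb{Z}$-extendable in $S^{-1}$ for every $s\in S$).

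Next, to conclude $A \rightarrow^{\mathbb{Z}}_k B$, I would invoke Observation \ref{obs:consist_set} using $S$ itself as the witnessing subset of $\mathbf{Hom}_k(A,B)$. This is valid because $\mathbf{Isom}_k(A,B) \subset \mathbf{Hom}_k(A,B)$, so $S$ is a non-empty subset of the larger set. The one point requiring verification is that $\mathbb{Z}$-extendability of each $s \in S$ in $S$, when $S$ is regarded as a subpresheaf of $\sheaf{I}_k(A,B)$, coincides with $\mathbb{Z}$-extendability of $s$ in $S$ when regarded as a subpresheaf of $\sheaf{H}_k(A,B)$. This is essentially by inspection: both conditions require the existence of the same family $\{r_U \in \mathbb{Z}S(U)\}_{U \in A^{\leq k}}$ satisfying the compatibility equations $(r_U)_{\mid_{U \cap U'}} = (r_{U'})_{\mid_{U \cap U'}}$ with $r_C = s$; these compatibility equations are identical in both settings because the restriction maps of $\sheaf{I}_k$ are obtained by restricting those of $\sheaf{H}_k$.

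For the symmetric conclusion $B \rightarrow^{\mathbb{Z}}_k A$, the same argument applies with $S^{-1}$ in place of $S$, noting that inversion bijectively maps $\mathbf{Isom}_k(A,B)$ onto $\mathbf{Isom}_k(B,A) \subset \mathbf{Hom}_k(B,A)$. By hypothesis $S^{-1}$ is non-empty, downward-closed, and each of its elements is $\mathbb{Z}$-extendable in $S^{-1}$, so Observation \ref{obs:consist_set} again yields the desired cohomological $k$-consistency of $(B,A)$.

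There is no real obstacle in this proof. It is essentially a corollary of the set-based characterisations already established, with the only bookkeeping step being the identification of the two $\mathbb{Z}$-extendability conditions across the inclusion $\sheaf{I}_k \subset \sheaf{H}_k$.
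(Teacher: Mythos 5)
Your proposal is correct and matches the paper's own reasoning: the paper presents this as an immediate consequence of the set-based characterisations (Observations \ref{obs:set_condition} and \ref{obs:consist_set}), noting that the witnessing set $S$ and its inverse $S^{-1}$ directly satisfy the conditions for cohomological $k$-consistency of $(A,B)$ and $(B,A)$. Your additional check that $\mathbb{Z}$-extendability is the same condition whether $S$ is viewed inside $\sheaf{I}_k$ or $\sheaf{H}_k$ is a harmless and accurate piece of bookkeeping that the paper leaves implicit.
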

In the next section we establish how this equivalence relation behaves with respect
to logical interpretations.

\subsection{$\equiv^{\mathbb{Z}}_k$ and interpretations}

There are many different notions of logical interpretation in finite model theory.
The one we consider is defined as follows. A \emph{$\mathcal{C}^l$-interpretation}
$\Phi$ (of order $n$) of signature $\tau$
in signature $\sigma$ is a tuple of $\mathcal{C}^l[\sigma]$ formulas
$\langle \phi_R \rangle_{R \in \tau}$. For each relation symbol $R\in \tau$
of arity $r$, the formula $\phi_R$ has $nr$ free variables and is written as
$\phi_R(\mathbf{x}_1, \ldots , \mathbf{x}_r)$, where the $\mathbf{x}_i$ are $n$-tuples
of variables. Such an interpretation defines a map from $\sigma$-structures to
$\tau$-structures as follows. For any A, $\Phi(A)$ has universe $A^{n}$ and for
each relational symbol $R \in \tau$, the set of related tuples is given by
$$R^{\Phi(A)} := \{(\mathbf{a}_1, \ldots , \mathbf{a}_r) \in (A^n)^r \ \mid \ A, \mathbf{a}_1, \ldots , \mathbf{a}_r \models \phi_R  \}  $$

In the next result, we show that the equivalence $\equiv^{\mathbb{Z}}_k$ is preserved
by $C^l$-interpretations in the following way.

\begin{prop}\label{prop:Z_interpretation}
 For any (finite, relational) signatures $\sigma$ and $\tau$, $\sigma$-structures $A$ and $B$,
 natural numbers $n$ and $k$, and any order $n$ $\mathcal{C}^{nk}$-interpretation
 $\Phi$ of $\tau$ in $\sigma$ we have that
 $$ A \equiv^{\mathbb{Z}}_{nk} B \implies \Phi(A) \equiv^{\mathbb{Z}}_{k} \Phi(B)  $$

\end{prop}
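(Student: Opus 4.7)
The plan is to construct a witnessing set $S' \subseteq \mathbf{Isom}_k(\Phi(A), \Phi(B))$ for $\Phi(A) \equiv^{\mathbb{Z}}_{k} \Phi(B)$ directly from a witnessing set $S \subseteq \mathbf{Isom}_{nk}(A,B)$ for $A \equiv^{\mathbb{Z}}_{nk} B$, then invoke Observation \ref{obs:set_condition} at each end. For each $V \subseteq A^n$ with $\abs{V} \leq k$, let $U_V \subseteq A$ denote the set of components of elements of $V$, so $\abs{U_V} \leq nk$, and for each $s \in S$ with $\mathbf{dom}(s) = U_V$ define $s|_V \colon V \to B^n$ by componentwise application of $s$. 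Then take $S'$ to be the collection of all such $s|_V$, and apply the analogous construction to $S^{-1}$ for $(S')^{-1}$.

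The main obstacle, and the technical heart of the argument, is verifying that each $s|_V$ is genuinely a partial isomorphism in $\mathbf{Isom}_k(\Phi(A), \Phi(B))$, i.e.\ that it preserves and reflects every $R \in \tau$. The plan here is first to establish the standard fact that every $s \in S$ preserves all $\mathcal{C}^{nk}$-formulas via a Hella-style induction: the atomic case uses that $s$ is a partial isomorphism, the Boolean cases are immediate, and the counting-quantifier case uses downward-closure together with the bijective forth property of $S$. Since each $\phi_R$ is a $\mathcal{C}^{nk}$-formula with at most $nr \leq nk$ free variables, preservation of $\phi_R$ by $s$ gives preservation of $R$ by $s|_V$, and the analogous argument applied to $S^{-1}$ yields reflection.

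The remaining verifications are more routine. Downward-closure of $S'$ follows from downward-closure of $S$: for $V' \subseteq V$ we have $U_{V'} \subseteq U_V$, so $s|_{U_{V'}} \in S$ and $(s|_{U_{V'}})|_{V'} = (s|_V)|_{V'}$. For bijective forth at some $s|_V \in S'$ with $\abs{V} < k$, I plan to iterate the bijective forth property of $S$ exactly $n$ times: starting from $s \in S$ with $\abs{\mathbf{dom}(s)} \leq n(k-1)$, successively pick bijections $\beta_1 \colon A \to B$, then $\beta_2^{a_1} \colon A \to B$ for each $a_1 \in A$, and so on, assembling them into a bijection $\beta \colon A^n \to B^n$ defined by $\beta(a_1, \ldots, a_n) = (\beta_1(a_1), \beta_2^{a_1}(a_2), \ldots, \beta_n^{a_1, \ldots, a_{n-1}}(a_n))$, whose bijectivity is unpacked coordinate by coordinate. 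For $\mathbb{Z}$-extendability at $s|_V$, the componentwise restriction $t \mapsto t|_W$ extends $\mathbb{Z}$-linearly to a map $\mathbb{Z}S(U_W) \to \mathbb{Z}S'(W)$ that commutes with the sheaf-restriction maps on each side, so the pushforward of a global $\mathbb{Z}$-linear section of $\mathbb{Z}S$ witnessing $\mathbf{\mathbb{Z}ext}(S, s)$ is a global $\mathbb{Z}$-linear section of $\mathbb{Z}S'$ witnessing $\mathbf{\mathbb{Z}ext}(S', s|_V)$. The constructions applied dually to $S^{-1}$ handle the conditions on $(S')^{-1}$, and Observation \ref{obs:set_condition} then completes the proof.
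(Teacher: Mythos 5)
Your proposal is correct and follows essentially the same route as the paper: the witnessing set is built by applying each $s \in S$ with domain $U_V$ coordinatewise to $V$, partial-isomorphism-hood is secured via preservation and reflection of $\mathcal{C}^{nk}$-formulas (the paper cites Hella's Lemma 5.1 where you re-derive it), bijective forth is obtained by iterating the forth bijection $n$ times coordinate by coordinate, and $\mathbb{Z}$-extendability is obtained by pushing forward the global $\mathbb{Z}$-linear section along the coordinatewise map. Your proposal is in fact slightly more careful than the paper about the dual conditions on $(S')^{-1}$, which the paper leaves implicit.
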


\begin{proof}
By Observation \ref{obs:set_condition}, it suffices to show that there is a set
$S' \subset \mathbf{Isom}_k(\Phi(A),\Phi(B))$ which is downward-closed, satisfies
the bijective forth property and in which every map is $\mathbb{Z}$-extendable.
As $A \equiv^{\mathbb{Z}}_{nk} B$, there is already a set $S \subset \mathbf{Isom}_{nk}(A,B)$
satisfying these properties. For any $Q \subset A$ we use $S_Q$ to mean the elements
of $S$ with domain $Q$. We now show how to construct a suitable $S'$ from $S$.\\

For any $C \subset \Phi(A)$, let $\mathbf{\pi}(C)$ be the set of element in $A$
which appear in some tuple of $C$. As elements of $\Phi(A)$ are $n$-tuples over $A$,
it is clear that $|\mathbf{\pi}(C)| \leq n|C|$. We can now define $S'_{C}$ as the
 set of partial isomorphisms in $S_{\mathbf{\pi}(C)}$ applied coordinatewise to $C$, namely,
$$ \{ (f, \ldots , f)_{\mid_C} \ \mid \ f \in S_{\mathbf{\pi}(C)}  \}  $$
This is well defined for all $C \in (\Phi(A))^{\leq k}$ as $|\mathbf{\pi}(C)| \leq nk$.
That these maps define partial isomorphisms between $\Phi(A)$ and $\Phi(B)$ follows
from Hella's Lemma 5.1 in \cite{lhipt} which states that the elements of $\overline{\overline{\mathbf{Isom}_{nk}(A,B)}}$
are exactly those which preserve and reflect $C^{nk}$ formulas. As the relations
on $\Phi(A)$ and $\Phi(B)$ are defined by $C^{nk}$ formulas they are preserved and
reflected by the members of $S$. We now show that
$S' = \bigcup_{C \in \Phi(A)^{\leq k}} S'_C$ satisfies the required
properties. \\

\textbf{Downward-closure} This follows easily from  downward-closure of $S$. Suppose
$\mathbf{f} = (f, \ldots , f)_{\mid_{C}} \in S'$ and $\mathbf{g} \leq \mathbf{f}$. Then
there is some $C' \subset C$ such that $\mathbf{g} =  \mathbf{f}_{\mid_{C'}}$ and
$\mathbf{g} = (f_{\mid_{\mathbf{\pi}(C')}}, \ldots , f_{\mid_{\mathbf{\pi}(C')}})_{\mid_{C'}}$.
but $f_{\mid_{\mathbf{\pi}(C')}} \leq f$ and so is an element of $S$.\\

\textbf{Bijective forth property}
Let $\mathbf{f}\in S'_C$ with $|C| < k$, with $\mathbf{f}$ given as the coordinatewise
application of some $f \in S_{\mathbf{\pi}(C)}$. To show that $S'$ has the bijective
forth property we must show that there is a bijection $b : \Phi(A) \rightarrow \Phi(B)$
such that for any $\mathbf{a}\in \Phi(A)$ the function $\mathbf{f} \cup \{ (\mathbf{a}, b(\mathbf{a})) \} $
is in $S'_{C\cup\{\mathbf{a}\}}$. For any such $\mathbf{f}$, we can construct
a bijection $b$ whose image on any $\mathbf{a} \in \Phi(A)$ is given as
$$b(\mathbf{a}) = ( b^{\epsilon}(a_1), b^{\mathbf{a}_1}(a_2), \ldots , b^{(\mathbf{a}_{n-1})}(a_n) ) $$
where $\mathbf{a}_i$ is the $i$-tuple of the first $i$ elements in $\mathbf{a}$
and each $b^{\mathbf{a}_i}$ is a bijection $A\rightarrow B$. For any $\mathbf{a}\in \Phi(A)$
we choose the bijections $b^{\mathbf{a}_i}$ using the bijective forth property on
$S$. As $\mathbf{f}$ is a coordinatewise application of some $f \in S_{\mathbf{\pi}(C)}$
and as $|C|<k$ implies $|\mathbf{\pi}(C)| \leq nk - n < nk$, the bijective forth property for
$S$ implies the existence of a $b_1$ such that $f_1 = f \cup \{a_1, b_1(a_1)\} \in
S_{\mathbf{\pi}(C)\cup \{a_1\}}$. Let $b^{\epsilon} := b_1$. Now suppose for any $i < n$ we have
defined the bijections $b^{\epsilon}, b^{\mathbf{a}_1}, \ldots , b^{\mathbf{a}_i}$ and
$f_i = f \cup \{ (a_j, b^{\mathbf{a}_{j-1}}(a_j))\}_{1\leq j \leq i} \in S_{\mathbf{\pi}(C)\cup \{a_1, \ldots , a_i\}}$.
We still have $|\mathbf{\pi}(C)\cup \{a_1, \ldots , a_i\}| < nk$ so can use the
bijective forth property on $S$ again to find a bijection $b^{\mathbf{a}_i}$
such that $f_{i+1} = f_i \cup \{(a_i, b_{\mathbf{a}_i}(a_i))\} \in S_{\mathbf{\pi}(C)\cup \{a_1, \ldots , a_{i+1}\}}$.
This inductive procedure defines all the required bijections and furthermore
shows that $\mathbf{f} \cup \{(\mathbf{a}, b(\mathbf{a})\}$ is the coordinatewise application
of some $f_n \in S_{\mathbf{\pi}(C\cup \{\mathbf{a}\} )}$. This means in particular that
$\mathbf{f} \cup \{(\mathbf{a}, b(\mathbf{a})\}$ is in $S'_{C\cup\{\mathbf{a}\}}$,
as required.

\textbf{$\mathbb{Z}$-extendability} Our choice of $S'$ makes $\mathbb{Z}$-extendability
rather easy. Indeed, we see that any $\mathbf{f} = (f, \ldots , f) \in S'_{C}$ is $\mathbb{Z}$-extendable
because the $\mathbb{Z}$-linear global section extending $f \in S_{\mathbf{\pi}(C)}$
given as $s_C = \sum_{g \in S_C} \alpha_g g$ can be lifted to a $\mathbb{Z}$-linear
extension of $\mathbf{f}$ by defining $s'_{C} = \sum_{g \in S_{\mathbf{\pi}(C)}} \alpha_g (g, \ldots , g)$.
The properties of being a $\mathbb{Z}$-linear extension follow from those properties on $S$.

\end{proof}

\subsection{Deciding the CFI property}

Cai, F\"urer and Immerman\cite{cfi} showed that there is a property of relational structures
which can be decided in polynomial time but which cannot be expressed in infinitary
first-order logic with counting quantifiers for any number of variables. This
construction essentially encodes certain systems of linear equations (over
$\mathbb{Z}_2$) on top of graphs in such a way that isomorphism of the constructed
structures is determined by checking solvability of the systems of equations. In their
seminal paper\cite{cfi}, Cai, F\"urer and Immerman show that the solvable and unsolvable
versions of their construction cannot be distinguished in fixed point logic with counting.
Adaptations of this construction, encoding equations over different finite fields
were used by Dawar, Gr\"adel and Pakusa to show that adding rank quantifiers over
each finite field added distinct expressive power to FPC and a version using
equations over the rings $\mathbb{Z}_{2^q}$ was used by Lichter\cite{lich} to separate rank logic from \texttt{PTIME}.

As cohomological $k$-consistency was shown in the previous section to simultaneously decide
solvability over any finite ring, it is natural to ask whether the related equivalence
$\equiv^{\mathbb{Z}}_k$ can decide these CFI properties which are not definable
in FPC, rank logic or linear algebraic logic. We show in this section that it can.

Following Lichter\cite{lich}, we define the general CFI construction $\mathbf{CFI}_{q}(G, g)$
for $q$ a prime power, $G = (G, <)$ an ordered undirected graph and $g$ a function from the edge
set of $G$ to $\mathbb{Z}_q$. The idea is that the construction encodes a system
of linear equations over $\mathbb{Z}_q$ into $G$ while the function $g$ ``twists'' these
equations in a certain way. For CFI structures, $\mathbf{CFI}_{q}(G, g)$ the property $\sum g = 0$ is sometimes
called the \emph{CFI property}. The following well-known fact (see \cite{pakusa}, for example)
shows that this property is closed under isomorphisms and is useful in our later arguments.
\begin{fact}
  For any prime power, $q$, ordered graph $G$, and functions $g, h$ from the edges
  of $G$ to $\mathbb{Z}_q$,
  $$ \mathbf{CFI}_{q}(G, g) \cong \mathbf{CFI}_{q}(G, h) \iff \sum g = \sum h $$
\end{fact}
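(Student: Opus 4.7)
The plan is to interpret the CFI construction cohomologically and reduce the isomorphism question to a basic cohomology computation on $G$, where the equivalence comes out of a standard ``cohomologous cocycles'' argument and the total edge-sum functional detects the relevant cohomology class.

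First I would unpack $\mathbf{CFI}_q(G,g)$ enough to identify its elementary twist automorphisms. In the usual presentation, each vertex $v$ of $G$ is replaced by a gadget whose elements are the functions $E(v)\to\mathbb{Z}_q$ of zero sum, and each edge $e=\{u,v\}$ of $G$ is realised by a matching between the $u$-gadget and $v$-gadget whose ``offset'' is controlled by $g(e)$. From this one reads off, for each vertex $v$ and each $c\in\mathbb{Z}_q$, an \emph{elementary twist}: relabelling the $v$-gadget by adding $c$ along every incident-edge coordinate extends to an isomorphism $\mathbf{CFI}_q(G,g)\cong \mathbf{CFI}_q(G,g+\delta_v^c)$, where $\delta_v^c$ shifts $g$ in a prescribed way on the edges incident to $v$.

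For the ($\Leftarrow$) direction, compositions of elementary twists realise any change in $g$ lying in the image of the associated coboundary operator $\delta:\mathbb{Z}_q^{V(G)}\to \mathbb{Z}_q^{E(G)}$. A standard cohomological calculation — essentially that, on a connected graph, the image of $\delta$ coincides with the kernel of the total-sum functional on edges — will then show that $\sum g=\sum h$ implies $h-g$ is a coboundary, and so a composition of elementary twists supplies the required isomorphism. For the ($\Rightarrow$) direction, given an isomorphism $\phi:\mathbf{CFI}_q(G,g)\to \mathbf{CFI}_q(G,h)$, I would argue that the partition of the CFI structure into vertex-gadgets is definable from the relational signature, so $\phi$ descends to an automorphism of $G$; since $G$ is ordered, this must be the identity. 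The restriction of $\phi$ to each $v$-gadget is then an automorphism of that gadget, which the gadget's structure forces to be an elementary twist by some $c_v\in\mathbb{Z}_q$. Compatibility of $\phi$ with each edge matching then yields $h-g=\delta(c)$, so $h-g$ is a coboundary and in particular $\sum h=\sum g$.

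The main obstacle is the forward direction, specifically the two structural claims that (i) $\phi$ must respect the vertex-gadget decomposition and (ii) every gadget-level automorphism is an elementary twist. Both reduce to a careful analysis of the particular relational signature used for $\mathbf{CFI}_q(G,g)$, verifying that this signature encodes the gadget structure in a first-order definable way and is rigid enough that no ``non-twist'' automorphism is possible. This analysis is standard in the CFI literature over finite fields (see \cite{pakusa, lich}), but for this paper one would want to pin down the signature precisely before invoking it.
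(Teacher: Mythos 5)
The paper does not prove this statement: it is presented as a well-known fact and cited to Pakusa's thesis~\cite{pakusa} (and then reused in the proof of Lemma~\ref{lem:equation}). So there is no in-paper argument to compare against, and your sketch has to be judged on its own.

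The overall shape of your argument is the standard one and is the right plan: realise twist changes via gadget-level automorphisms, show the realisable changes are exactly the kernel of the total-sum functional, and, for the forward direction, use the order on $G$ to force any isomorphism to act as the identity on the vertex-gadget decomposition and to restrict to such a gadget automorphism on each gadget. However, two of the concrete steps on which the plan rests are false as stated and would not survive being fleshed out. First, your ``elementary twist'' is not well-defined: adding a constant $c$ to every coordinate of an element of $A_v=\{\mathbf{a}\in\mathbb{Z}_q^{N(v)} : \sum_{y\in N(v)}\mathbf{a}(y)=0\}$ shifts the coordinate sum by $|N(v)|\cdot c$, which is not zero in general, so this map does not send $A_v$ to itself. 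The actual gadget automorphisms are translations $\mathbf{a}\mapsto\mathbf{a}+\mathbf{d}$ by some $\mathbf{d}\in A_v$ (a \emph{zero-sum} vector on $N(v)$); such a translation shifts the twist on each incident edge $\{v,y\}$ by $\mathbf{d}(y)$, and so the achievable twist changes are parametrised by $\prod_v A_v$, not by $\mathbb{Z}_q^{V(G)}$.

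Second, and as a consequence, the cohomological identity you invoke --- that the image of the simplicial coboundary $\delta\colon\mathbb{Z}_q^{V(G)}\to\mathbb{Z}_q^{E(G)}$ coincides with the kernel of the total-sum functional --- is false. The image of $\delta$ has size governed by $|V(G)|$ while the kernel of the sum has size governed by $|E(G)|-1$; already for $K_4$ over $\mathbb{Z}_2$ these differ, and with $\delta(f)(\{u,v\})=f(u)+f(v)$ the image need not even lie inside the kernel of the sum, since $\sum_e\delta(f)(e)=\sum_v|N(v)|f(v)$. The operator you actually need is $(\mathbf{d}_v)_v\mapsto\bigl(\{x,y\}\mapsto\mathbf{d}_x(y)+\mathbf{d}_y(x)\bigr)$ on the larger domain $\prod_v A_v$; containment of its image in the kernel of the sum is immediate, and equality (for connected $G$) is the real content of the ($\Leftarrow$) direction. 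The clean proof there is the classical ``move the twist along a path of a spanning tree'' argument, not an appeal to the graph coboundary. Connectedness should be made explicit, since the equivalence as stated fails for disconnected $G$. The structural claims you flag for the ($\Rightarrow$) direction are the right ones, but they too need the corrected notion of gadget automorphism to close.
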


$\mathbf{CFI}_{q}(G, g)$ is built in three steps. First, we define a gadget which replaces each vertex of $x$ with elements that form a ring. Secondly, we define relations between gadgets which impose consistency equations
between gadgets. Finally, the function $g$ is used to insert the important
twists into the consistency equations. We now describe this in detail below, following
a presentation by Lichter\cite{lich}.

\textbf{Vertex gadgets}
For any vertex $x \in G$, let $N(x)$ be the neighbourhood of $x$ in $G$ (i.e.\ those
vertices which share edges with $x$) and let $\mathbb{Z}_q^{N(x)}$ denote the ring of
functions from $N(x)$ to the ring $\mathbb{Z}_q$. We will replace each vertex $x$ of
the base graph with a gadget whose vertices are the following subset of $\mathbb{Z}_q^{N(x)}$,
$$A_x = \{\mathbf{a} \in \mathbb{Z}_q^{N(x)}  \ \mid \ \sum_{y\in N(x)} \mathbf{a}(y) = 0 \} $$
The relations on the gadget are for each $y$ in $N(x)$ a symmetric relation
$$I_{x,y} = \{(\mathbf{a},\mathbf{b}) \ \mid \ \mathbf{a}(y) = \mathbf{b}(y)\} $$
and a directed cycle encoded by the relation
$$C_{x,y} = \{(\mathbf{a},\mathbf{b}) \ \mid \ \mathbf{a}(y) = \mathbf{b}(y) + 1\} $$
Together these impose the ring structure of $\mathbb{Z}_q^{N(x)}$ onto the
vertices of the gadget.

\textbf{Edge equations}
Next define a relation between gadgets for each edge $\{x,y \}$ in G and each
constant $c \in \mathbb{Z}_q$ of the form
$$E_{\{x,y\}, c}  = \{ (\mathbf{a}, \mathbf{b}) \ \mid \ \mathbf{a}
\in A_x,\ \mathbf{b} \in A_y,\ \mathbf{a}(y)+ \mathbf{b}(x) = c  \} $$

\textbf{Putting it together with a twist}
We finally define the structure $\mathbf{CFI}_q(G,g)$ as
$\langle A, \prec, R_I, R_C, R_{E,0}, R_{E,1}, \ldots , R_{E,q-1}  \rangle$ where
the universe is $A = \cup_{x} A_x$ where $\prec$ is the linear pre-order
$$ \prec = \bigcup_{x < y} A_x \times A_y $$
and the edge equations $R_{E,c}$ are interpreted according to the twists in $g$ as
$$ R_{E,c} = \bigcup_{e \in E} E_{e, c+g(e)} $$
where the sum in the subscript is over $\mathbb{Z}_q$
For the relations $R_I$ and $R_C$ we deviate slightly from Lichter's
construction and interpret these as ternary relations of the following form
\begin{align*}
  R_I &= \bigcup_{\{x,y\} \in E } I_{x,y} \times A_y \\
  R_C &= \bigcup_{\{x,y\} \in E } C_{x,y} \times A_y
\end{align*}

We now use recall the two major separation results based on this construction. The
first is a landmark result of descriptive complexity from the early 1990's.

\begin{theor}[Cai, Furer, Immerman\cite{cfi}]
  There is a class of ordered (3-regular) graphs $\mathcal{G} = \{ G_n \}_{n\in \mathbb{N}}$
  such that in the respective class of CFI structures
  $$\mathcal{K} = \{ \mathbf{CFI}_2(G,g) \ \mid \ G \in \mathcal{G}   \} $$
  the CFI property is decidable in polynomial-time but cannot be
  expressed in FPC.
\end{theor}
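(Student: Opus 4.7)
The plan is to prove the two halves of the statement separately: the CFI property is decidable in polynomial time on the given encoding, and no FPC formula can express it on the class $\mathcal{K}$. The second half reduces, via the standard bound of FPC by $\logic{C}^{\omega}$, to showing that for every $k$ there exist pairs $(\mathbf{CFI}_2(G_n, g_0), \mathbf{CFI}_2(G_n, g_1))$ with $\sum g_0 = 0$ and $\sum g_1 = 1$ such that $\mathbf{CFI}_2(G_n, g_0) \equiv_k \mathbf{CFI}_2(G_n, g_1)$, since Hella's theorem then gives $\logic{C}^k$-equivalence for all such $k$.

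For polynomial-time decidability, I would recover from any input $\mathbf{CFI}_2(G, g)$ both the base graph $G$ and a representative twist $g' \colon E(G) \to \mathbb{Z}_2$ with the same parity as $g$. The pre-order $\prec$ segments the universe into vertex gadgets; the ternary relations $R_I, R_C$ identify edges of $G$ and the gadget-local ring structure; and the edge-equation relations $R_{E,0}, R_{E,1}$ reveal $g'$ up to gauge choices within each gadget that preserve $\sum g'$. A linear-time scan then returns $\sum g' \in \mathbb{Z}_2$, which by Fact \ref{fact:cfi} coincides with the CFI property.

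For the inexpressibility half, I would take $\mathcal{G}$ to be a family of connected 3-regular graphs of unbounded treewidth, for instance bounded-degree expanders. For each $k$, choose $G_n \in \mathcal{G}$ with treewidth exceeding some function $f(k)$, and specify a Duplicator strategy for the $k$-pebble bijection game on $\mathbf{CFI}_2(G_n, g_0)$ and $\mathbf{CFI}_2(G_n, g_1)$. The inductive invariant is that after every round there exists a set $F \subseteq E(G_n)$, disjoint from the edges incident to the pebbled gadgets, with $\sum_{e \in F}(g_0(e)-g_1(e)) = 1$, and the current partial isomorphism corresponds to the CFI structure obtained by toggling exactly the twists on $F$. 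The bijection at each round is built from this invariant by acting componentwise on each gadget.

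The main obstacle is the combinatorial lemma sustaining this invariant: whenever at most $k$ gadgets are pebbled and $G_n$ has treewidth much larger than $k$, the graph $G_n$ minus the pebbled vertices remains connected enough that any single odd-parity twist can be rerouted away from the pebbles by a chain of edge-by-edge flips. This is the classical CFI lemma, and it is exactly where unbounded treewidth of $\mathcal{G}$ is used essentially; once it is in hand, fleshing out Duplicator's strategy and invoking Hella's characterisation of $\equiv_k$ is routine.
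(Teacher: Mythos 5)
The paper does not prove this theorem; it is cited directly from Cai, F\"urer and Immerman and used as a black box, so there is no in-paper proof to compare against. Your sketch is a faithful outline of the classical argument as it now appears in the literature: the ordering on the base graph makes the structure's gadget decomposition and a representative twist recoverable in polynomial time, and inexpressibility in FPC reduces via the $\logic{C}^{\omega}$ bound to giving, for each $k$, a Duplicator strategy in the $k$-pebble bijection game between a zero-parity and a one-parity instance over the same base graph. Two small remarks. First, your choice of $3$-regular expanders with large treewidth is a modern and perfectly valid choice of base family, but it is not CFI's original one; they built their $3$-regular graphs by hand to enforce a separator property, and the treewidth/cops-and-robbers reformulation of the rerouting lemma came later (Dawar, Richerby and others). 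Both choices work because the only property used is the one you isolate: removing any $k$ pebbled vertices leaves the twist reroutable through the remainder. Second, your stated inductive invariant, with a set $F$ of discrepancy edges disjoint from the pebbled region and of odd total twist, is the right spirit but is phrased loosely; the standard formulation carries a single "misaligned'' edge together with a path certifying how it can be moved, and the invariant is maintained by flipping twists along a path in $G_n$ avoiding the pebbles. Making that precise is where all the combinatorial work lives, and you are right to flag it as the crux.
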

The second is a recent breakthrough due to Moritz Lichter.
\begin{theor}[Lichter\cite{lich}]
There is a class of ordered graphs $\mathcal{G} = \{ G_n \}_{n\in \mathbb{N}}$
such that in the respective class of CFI structures
$$\mathcal{K} = \{ \mathbf{CFI}_{2^k}(G,g) \ \mid \ G \in \mathcal{G}  \} $$
the CFI property is decidable in polynomial-time (indeed, expressible in
choiceless polynomial time) but cannot be expressed in rank logic.
\end{theor}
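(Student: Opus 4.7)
The plan is to prove the two claims separately: first, that the CFI property is expressible in choiceless polynomial time on $\mathcal{K}$ (which subsumes the \texttt{PTIME} claim), and second, that it is not expressible in rank logic. I would begin by fixing the graph class $\mathcal{G}$: take a family of ordered $3$-regular expander graphs $\{G_n\}$ of linear treewidth and logarithmic girth, and pair each $G_n$ with a parameter $k(n)$ growing slowly but unboundedly in $n$ (say $k(n) \approx \log \log |G_n|$). The role of $\mathcal{G}$ is to provide enough structural complexity to defeat a fixed-quantifier-rank fragment of rank logic, while leaving $k(n)$ free to diverge, which is essential for defeating \emph{all} fragments uniformly.

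For the CPT upper bound, the observation is that on $\mathbf{CFI}_{2^{k(n)}}(G_n,g)$ the CFI property $\sum g = 0$ is equivalent, by Fact~\ref{fact:cfi}, to the solvability of a canonical system of linear equations over $\mathbb{Z}_{2^{k(n)}}$ read off from the edge-equations and vertex gadgets of the construction. I would then invoke the fact that CPT can express solvability of systems of linear equations over any finite commutative ring presented as a hereditarily finite set: the ring $\mathbb{Z}_{2^{k(n)}}$ is canonically reconstructible from the structure (since $k(n)$ is computable from $|G_n|$), and choiceless Gaussian elimination in the Smith normal form sense can be carried out symmetrically, as established in the CPT literature. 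This yields a CPT sentence defining the property on $\mathcal{K}$.

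For the rank-logic lower bound, the strategy is to use a game characterization of rank logic — concretely the invertible-map equivalences $\equiv^{\mathrm{IM},p}_{\ell}$ for each prime $p$, which together bound rank logic in the sense of Dawar and Holm. I would show that for every $\ell$ and every finite set of primes $P$, once $k(n)$ is large enough relative to $\ell$ and $\max P$, the pair $(\mathbf{CFI}_{2^{k(n)}}(G_n,g_0), \mathbf{CFI}_{2^{k(n)}}(G_n,g_1))$ with $\sum g_0 = 0 \neq \sum g_1$ is $\equiv^{\mathrm{IM},p}_{\ell}$-equivalent for every $p \in P$. The argument proceeds by exhibiting a Duplicator strategy in the invertible-map game that pushes the twist around $G_n$ using its cycle structure, combined with matrix equivalences over $\mathbb{F}_p$ that preserve $\mathbb{F}_p$-rank but re-route the $\mathbb{Z}_{2^{k(n)}}$-encoded twist. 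Since any single rank-logic sentence mentions only finitely many primes and has bounded quantifier rank, this suffices.

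The hard part will be establishing the $\mathbb{F}_2$-case of the lower bound: showing that $\mathbb{F}_2$-rank quantifiers cannot detect the higher $2$-adic structure of $\mathbb{Z}_{2^{k(n)}}$ as $k(n)\to\infty$. The subtlety is that $\mathbb{F}_2$-linear algebra sees the quotient $\mathbb{Z}_{2^{k(n)}}/2\mathbb{Z}_{2^{k(n)}}$ trivially, but a rank-$\ell$ formula might chain several rank operators in a way that effectively probes deeper into the $2$-adic filtration. Overcoming this requires constructing, for each level of the invertible-map game, block-diagonal matrix isomorphisms over $\mathbb{F}_2$ that simultaneously preserve all $\ell$-fold compositions of rank operators while shifting the twist by one unit in $\mathbb{Z}_{2^{k(n)}}$; iterating this shift $\sum g_1 - \sum g_0$ times equates the two CFI structures in the game. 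Once this technical core is in place, the translation from $\equiv^{\mathrm{IM},p}_{\ell}$-equivalence to inexpressibility in rank logic, together with the CPT upper bound, completes the proof.
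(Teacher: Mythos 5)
This statement is not something the paper proves at all: it is Lichter's theorem, imported verbatim by citation (\cite{lich}) and used as a black box, so the only ``proof'' the paper offers is the reference. Your proposal instead undertakes to reprove Lichter's separation from scratch, and as written it has a genuine gap: the entire technical core is deferred. The paragraph beginning ``The hard part will be establishing the $\mathbb{F}_2$-case'' is precisely the content of Lichter's paper --- constructing similarity/invertible-map moves over $\mathbb{F}_2$ that are blind to the deeper $2$-adic structure of $\mathbb{Z}_{2^k}$ while respecting arbitrarily nested rank operators --- and announcing a plan to build ``block-diagonal matrix isomorphisms'' that do this is not an argument. Without that construction nothing in the lower bound is established, and the choices you invent for $\mathcal{G}$ (expanders of logarithmic girth, $k(n)\approx\log\log|G_n|$) are unmotivated stand-ins for the specific base graphs and growth of $k$ that the real proof needs.

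There is also a concrete misstep in the reduction to games: you argue that ``any single rank-logic sentence mentions only finitely many primes,'' so that invertible-map equivalence $\equiv^{\mathrm{IM},p}_{\ell}$ for a fixed finite set of primes suffices. That is true only for rank logic with prime-indexed operators; the logic Lichter separates has the \emph{uniform} rank operator, in which the characteristic is a number term evaluated on the input, so a single sentence can access unboundedly many primes across the class. Handling this uniformity (and the interaction of $\mathbb{F}_2$-ranks with the $\mathbb{Z}_{2^k}$ twist as $k$ grows) is exactly why the result was open; the route through invertible-map equivalence is closer to the later Dawar--Gr\"adel--Lichter strengthening to linear-algebraic logic, and even there the matrix-game strategy you gesture at is the hard, unproven part. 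The CPT upper bound sketch (solvability of cyclic linear systems over $\mathbb{Z}_{2^k}$ in choiceless polynomial time) is plausible and known, but for the purposes of this paper the correct and sufficient move is simply to cite Lichter, as the paper does.
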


We now show that in both of these classes there exists a fixed $k$ such that
$\equiv^{\mathbb{Z}}_k$ distinguishes structures which differ on the CFI property. This relies
on two lemmas. The first shows that this property is equivalent to the solvability
of a certain system of equations over $\mathbb{Z}_q$, while the second shows
that this system of equations can be interpreted in on the classes above with a
uniform bound on the number of variables per equation.

The first lemma is an adaptation of Lemma 4.36 from Wied Pakusa's PhD thesis\cite{pakusa}.
We begin by defining for any $\mathbf{CFI}q(G,g)$ a system of linear equations
over $\mathbb{Z}_q$. This system, $\mathbf{Eq}_q(G,g)$, is the following
collection of equations:
\begin{itemize}
  \item $X_{\mathbf{a},u}$ for all $u \in G$ and all $\mathbf{a} \in A_u \subset \mathbf{CFI}_q(G,g)$,
  \item $I_{\mathbf{a},\mathbf{b},v}$ for all $u \in G$ and $\mathbf{a}, \mathbf{b} \in A_u$ such
  that there exists $v \in N(u)$ and $\mathbf{c}\in A_v$ such that $(\mathbf{a}, \mathbf{b}, \mathbf{c}) \in R_I$,
  \item $C_{\mathbf{a},\mathbf{b},v}$ for all $u \in G$ and $\mathbf{a}, \mathbf{b} \in A_u$ such
  that there exists $v \in N(u)$ and $\mathbf{c}\in A_v$ $(\mathbf{a}, \mathbf{b}, \mathbf{c}) \in R_C$, and
  \item $E_{\mathbf{a}, \mathbf{b},c}$ for all $\mathbf{a}\in A_u, \mathbf{b} \in A_{v}$ and
  $(\mathbf{a}, \mathbf{b})\in R_{E,c}$
\end{itemize}
where the variables are $w_{\mathbf{a},v}$ for every $u \in G,$ $\mathbf{a} \in A_u$ and
$v \in N(u)$ and the equations are given as:
\begin{align*}
  X_{\mathbf{a},u} : \quad \sum_{v\in N(u)} w_{\mathbf{a},v} &= 0 \\
  I_{\mathbf{a},\mathbf{b},v} : \quad w_{\mathbf{a},v} - w_{\mathbf{b},v} &= 0 \\
  C_{\mathbf{a},\mathbf{b},v} : \quad w_{\mathbf{a},v} - w_{\mathbf{b},v}  &= 1 \\
  E_{\mathbf{a}, \mathbf{b},c} : \quad w_{\mathbf{a},v} + w_{\mathbf{b},u}  &=  c
\end{align*}

Then we have the following lemma.

\begin{lem}\label{lem:equation}
  $\mathbf{CFI}_q(G,g)$ a CFI structure, has $\sum g = 0$ if and only if $\mathbf{Eq}_q(G,g)$
  is solvable in $\mathbb{Z}_q$
\end{lem}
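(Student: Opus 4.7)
The plan is to reformulate a solution of $\mathbf{Eq}_q(G,g)$ as a family of ``half-edge twists'' $t_{u,v} \in \mathbb{Z}_q$, one for each ordered pair $(u,v)$ with $\{u,v\} \in E(G)$, subject to two linear conditions, and then to show that such twists exist if and only if $\sum g = 0$.

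First I would extract twists from a given solution $(w_{\mathbf{a},v})$. Using the $I$- and $C$-equations I would show that the quantity $\mathbf{a}(v) - w_{\mathbf{a},v}$ depends only on the vertex $u$ with $\mathbf{a} \in A_u$ and the neighbour $v$, not on the choice of $\mathbf{a}$. For $\mathbf{a}, \mathbf{b} \in A_u$ with $\mathbf{a}(v) = \mathbf{b}(v)$ this is immediate from $I_{\mathbf{a},\mathbf{b},v}$, and for $\mathbf{a}(v) = \mathbf{b}(v)+1$ from $C_{\mathbf{a},\mathbf{b},v}$; the general case follows by chaining through elements $\mathbf{c}_i \in A_u$ with $\mathbf{c}_i(v) = \mathbf{b}(v)+i$, which exist for any $|N(u)| \geq 2$ since one can put the compensating mass $-\mathbf{c}_i(v)$ on a second coordinate of $A_u$. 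Defining $t_{u,v} := \mathbf{a}(v) - w_{\mathbf{a},v}$, the $X_{\mathbf{a},u}$-equation combined with $\sum_{v\in N(u)} \mathbf{a}(v) = 0$ yields $\sum_{v\in N(u)} t_{u,v} = 0$, and the $E_{\mathbf{a},\mathbf{b},c}$-equation together with the identity $(\mathbf{a},\mathbf{b}) \in R_{E,c} \iff \mathbf{a}(v)+\mathbf{b}(u) = c + g(\{u,v\})$ yields $t_{u,v} + t_{v,u} = g(\{u,v\})$. Conversely, any family $(t_{u,v})$ satisfying these two conditions defines a solution of $\mathbf{Eq}_q(G,g)$ via $w_{\mathbf{a},v} := \mathbf{a}(v) - t_{u,v}$, as routine substitution confirms. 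So $\mathbf{Eq}_q(G,g)$ is solvable if and only if such twists exist.

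It remains to show that such twists exist if and only if $\sum g = 0$. Necessity is a one-line double count: $\sum_u \sum_{v\in N(u)} t_{u,v}$ visits each half-edge once, so the total equals $\sum_{e=\{u,v\}} (t_{u,v} + t_{v,u}) = \sum_e g(e) = \sum g$, forcing $\sum g = 0$ whenever each inner sum vanishes. For sufficiency, fix an orientation of each edge, treat $t_{u,v}$ for edges oriented $u \to v$ as free variables $\tau_e$, and substitute $t_{v,u} = g(e) - \tau_e$; the vertex conditions become an inhomogeneous linear system in the $\tau_e$ whose linear part is the standard graph coboundary operator $\mathbb{Z}_q^E \to \mathbb{Z}_q^V$, whose image is exactly the subspace of vectors with coordinate sum zero on each connected component, and whose right-hand side has total sum $-\sum g$; solvability is thus equivalent to $\sum g = 0$. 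The main obstacle is the first step, establishing well-definedness of $t_{u,v}$ from the $I$ and $C$ equations; the subsequent translation and the final linear-algebraic argument are routine.
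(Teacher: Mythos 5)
Your proof is correct, but it takes a genuinely different route from the one in the paper. The paper first invokes Fact~\ref{fact:cfi} to reduce the lemma to the equivalence between $\sum g = 0$ and the existence of an isomorphism $\mathbf{CFI}_q(G,g) \cong \mathbf{CFI}_q(G,\mathbf{0})$, and then exhibits a direct bijection between solutions of $\mathbf{Eq}_q(G,g)$ and such isomorphisms by reading $w_{\mathbf{a},v}$ as $f(\mathbf{a})(v)$ and checking each equation against the corresponding relation-preservation property. You instead bypass the isomorphism characterisation entirely: from a solution you extract a single twist $t_{u,v}=\mathbf{a}(v)-w_{\mathbf{a},v}$ per ordered edge, which the $I$- and $C$-equations make well-defined, and you collapse the whole system to the two conditions $\sum_{v\in N(u)}t_{u,v}=0$ and $t_{u,v}+t_{v,u}=g(\{u,v\})$; solvability is then a one-line double count in one direction and a standard boundary-map computation over $\mathbb{Z}_q$ in the other. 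The paper's route is shorter once Fact~\ref{fact:cfi} is available and makes transparent that a solution of $\mathbf{Eq}_q(G,g)$ is precisely an isomorphism to the untwisted structure; your route is self-contained (it neither uses nor presupposes the isomorphism classification of CFI structures) and isolates the linear-algebraic content very cleanly, in effect re-deriving the relevant direction of the Fact. Two small remarks: your sufficiency step, like Fact~\ref{fact:cfi} itself, implicitly assumes $G$ is connected --- the standard CFI hypothesis, left unstated here; and in the chaining argument you should note that the case $|N(u)|\leq 1$ is vacuous because $A_u$ is then a singleton.
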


\begin{proof}
 Firstly we recall Fact \ref{fact:cfi} that $\sum g = 0$ if and only if there is
 an isomorphism $f: \mathbf{CFI}_q(G,g) \rightarrow \mathbf{CFI}_q(G,\mathbf{0})$,
 where $\mathbf{0}$ is the constant $0$ function. We now show that there is such an
 isomorphism if and only if there is a solution to $\mathbf{Eq}_q(G,g)$.

 For the forward direction, suppose that we have an isomorphism $f: \mathbf{CFI}_q(G,g) \rightarrow \mathbf{CFI}_q(G,\mathbf{0})$. Now as $f$ is a bijection and preserves the
 pre-order $\prec$, we have that for any $u \in G$, $f$ maps $A_u$ to $A_u$. This means
 that for any $\mathbf{a} \in A_u$ $f(\mathbf{a})$ is a function in $\mathbb{Z}_q^{N(u)}$.
This means that the assignment $w_{\mathbf{a}, v} \mapsto f(\mathbf{a})(v)$ is well-defined
for all the variables in  $\mathbf{Eq}_q(G,g)$. We now show that this assigment
satisfies the system of equations. The $X$ equations in $\mathbf{Eq}_q(G,g)$ become
the statement that for all $u \in G$ and $\mathbf{a} \in A_u$
$$\sum_{v \in N(u)} f(\mathbf{a})(v) = 0 $$
which follows directly from the fact that $f(\mathbf{a}) \in A_u$. For the $I$ and $C$ equations,
we note that as $f$ preserves all relations from $\mathbf{CFI}_q(G,g)$. So for any $\mathbf{a}, \mathbf{b} \in A_u$
and $\mathbf{c} \in A_v$ such that $(\mathbf{a}, \mathbf{b}, \mathbf{c})$ is related by $R_I$ or
$R_C$ in $\mathbf{CFI}_q(G,g)$ then $(f(\mathbf{a}), f(\mathbf{b}), f(\mathbf{c}))$
is similarly related in $\mathbf{CFI}_q(G,\mathbf{0})$. The definitions of these relations
imply that $f(\mathbf{a})(v) - f(\mathbf{b})(v)$ is $0$ or $1$ respectively, which implies
that our assignment to the variables $w_{\mathbf{a},v}$ and $w_{\mathbf{b},v}$ satisfies the
relevant $I$ or $C$ equation. A similar argument applies to the $E$ equations except that
the conclusion from $(f(\mathbf{a}), f(\mathbf{b})) \in R_{E, c}$ in $\mathbf{CFI}_q(G,\mathbf{0})$ that the
relevant $E$ equation is satisfied follows from the fact that there is no twisting
of the $R_{E,c}$ relation in $\mathbf{CFI}_q(G,\mathbf{0})$.

The reverse direction is the observation that any satisfying assignment
to the variables $w_{\mathbf{a},v}$ in $\mathbf{Eq}_q(G,g)$ defines an isomorphism from $\mathbf{CFI}_q(G,g)$
to $\mathbf{CFI}_q(G,\mathbf{0})$ where $f(\mathbf{a})(v) = w_{\mathbf{a},v}$.
Satisfying the $X$ equation guarantees that for $\mathbf{a} \in A_u$ its image
$f(\mathbf{a})$ is also in $A_u$. Satisfying the $I$ and $C$ equations ensures that the $R_I$
and $R_C$ relations are preserved. So,
the additive structure of $\mathbb{Z}_q^{N(u)}$ is preserved in $A_u$ and thus
$f$ is bijective. Finally the $E$ equations define the $R_{E,c}$ relation in
$\mathbf{CFI}_q(G,\mathbf{0})$ and so satisfying these ensures that $f$ preserves the
$R_{E,c}$ relation.
\end{proof}

It is not hard to see that the system $\mathbf{Eq}_q(G,g)$ is first order interpretable
in $\mathbf{CFI}_q(G,g)$. However, Theorem \ref{thm:rings} shows that cohomological
$k$-consistency decides satisfiability of systems of equations over
any ring in with up to $k$ variables per equation. Thus to show that cohomological
$k$-equivalence distinguishes positive and negative instances of the CFI property for some fixed $k$ we need to show that
an equivalent system of equations can be interpreted which fixes the number of variables per equation.
This is the content of the following lemma.

\begin{lem}\label{lem:interp}
  For any prime power $q$, there is an interpretation $\Phi_q$ from the signature of the
  CFI structures $\mathbf{CFI}_q(G, g)$ to the signature of the ring $\mathbb{Z}_q$
  with relations of arity at most $3$ such that
      $$ \Phi_q(\mathbf{CFI}_q(G,g)) \rightarrow \mathbb{Z}_q \iff \sum g = 0 $$
\end{lem}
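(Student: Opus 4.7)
The plan is to invoke Lemma \ref{lem:equation} and reduce the statement to producing an interpretation $\Phi_q$ with the property that $\Phi_q(\mathbf{CFI}_q(G, g)) \to \mathbb{Z}_q$, viewed as a CSP, iff $\mathbf{Eq}_q(G, g)$ is solvable over $\mathbb{Z}_q$. Since Lemma \ref{lem:equation} already gives $\sum g = 0$ iff $\mathbf{Eq}_q(G, g)$ is solvable, the only remaining issue is to realise this system as the image of an interpretation whose output relations have arity at most three.

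The equations $I_{\mathbf{a}, \mathbf{b}, v}$, $C_{\mathbf{a}, \mathbf{b}, v}$ and $E_{\mathbf{a}, \mathbf{b}, c}$ already involve at most three variables, so the obstacle is the equation $X_{\mathbf{a}, u} \colon \sum_{v \in N(u)} w_{\mathbf{a}, v} = 0$, whose arity equals the degree of $u$ in $G$. I would rewrite each such equation as a chain of ternary partial-sum equations: letting $v_1 \prec v_2 \prec \cdots \prec v_d$ denote the neighbours of $u$ ordered via the pre-order $\prec$ on the gadgets, introduce auxiliary variables $s^{\mathbf{a}}_i$ for $2 \leq i \leq d-1$ and replace $X_{\mathbf{a}, u}$ by $w_{\mathbf{a}, v_1} + w_{\mathbf{a}, v_2} = s^{\mathbf{a}}_2$, then $s^{\mathbf{a}}_{i-1} + w_{\mathbf{a}, v_i} = s^{\mathbf{a}}_i$ for $3 \leq i \leq d-1$, and finally $s^{\mathbf{a}}_{d-1} + w_{\mathbf{a}, v_d} = 0$. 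The resulting system $\mathbf{Eq}'_q(G, g)$ is equisatisfiable with $\mathbf{Eq}_q(G, g)$ and every equation has at most three variables.

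I would then construct $\Phi_q$ as a $\mathcal{C}^l$-interpretation of some small fixed order whose universe is a power of the universe of $\mathbf{CFI}_q(G, g)$, using $\prec$ and counting quantifiers to pick out canonical tuples representing each primary variable $w_{\mathbf{a}, v}$ and each auxiliary variable $s^{\mathbf{a}}_i$. For each equation in $\mathbf{Eq}'_q(G, g)$ the corresponding relation symbol $E^m_{\mathbf{a}, b}$ (with $m \leq 3$) in the signature of $\mathbb{Z}_q$ is then defined as the set of canonical-representative tuples linked by that equation; all remaining tuples of the universe carry no constraint and can be sent to any element of $\mathbb{Z}_q$. A homomorphism $\Phi_q(\mathbf{CFI}_q(G, g)) \to \mathbb{Z}_q$ thereby corresponds to a solution of $\mathbf{Eq}'_q(G, g)$, hence of $\mathbf{Eq}_q(G, g)$, and so to $\sum g = 0$.

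The main obstacle is the bookkeeping needed to choose distinct canonical representatives for all primary and auxiliary variables within a fixed-order universe and to write the defining formulas in $\mathcal{C}^l$ for a single $l$ depending only on $q$. Edge cases where a gadget $A_v$ is too small to supply enough distinct representatives (for example when $v$ has degree one, so $|A_v| = 1$) are handled either by raising the order of the interpretation or by a $\mathcal{C}^l$-definable preprocessing of $G$; the linear pre-order $\prec$ supplied by the CFI construction is essential throughout, both for ordering neighbours in the partial-sum chains and for breaking symmetries in the choice of canonical tuples.
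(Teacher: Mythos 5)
Your high-level plan matches the paper's: invoke Lemma~\ref{lem:equation}, observe that only the $X_{\mathbf{a},u}$ equations have unbounded arity, break each into a chain of ternary partial-sum equations using the order on neighbours, and realise the resulting bounded-arity system as the image of a fixed-order interpretation. The paper proceeds identically up to this point.

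The gap is exactly where you flag it yourself: how the auxiliary running-total variables are named inside the interpreted universe. You introduce $s^{\mathbf{a}}_i$ indexed by a tuple $\mathbf{a}$ and an integer $i$ ranging up to the (unbounded) degree $d$, and then you must hunt for ``canonical tuples'' to stand for each $s^{\mathbf{a}}_i$ inside $A^n$ for some fixed $n$, using counting quantifiers and worrying about whether a small gadget $A_v$ supplies enough distinct representatives. That worry is a symptom of the wrong indexing: you are trying to encode the integer $i$ into the universe. The paper sidesteps the whole issue by indexing the running total by the neighbour itself. Concretely, it works over $A^3$ and names $w_{\mathbf{a},\mathbf{b}}$ by the triple $(\mathbf{a},\mathbf{a},\mathbf{b})$ and the running-total variable $z_{\mathbf{a},\mathbf{b}}$ by $(\mathbf{a},\mathbf{b},\mathbf{b})$, where $\mathbf{b}$ ranges over elements of gadgets adjacent to $\mathbf{a}$'s gadget. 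It then defines, purely in first-order logic from $\prec$ and the CFI relations, a \emph{local predecessor} relation $\mathbf{b}\vdash_{\mathbf{a}}\mathbf{b}'$ picking out consecutive neighbours of $\mathbf{a}$'s gadget, and writes the chain as: $w_{\mathbf{a},\mathbf{b}}-z_{\mathbf{a},\mathbf{b}}=0$ when $\mathbf{b}$ has no $\vdash_{\mathbf{a}}$-predecessor, $z_{\mathbf{a},\mathbf{b}'}+w_{\mathbf{a},\mathbf{b}}-z_{\mathbf{a},\mathbf{b}}=0$ when $\mathbf{b}'\vdash_{\mathbf{a}}\mathbf{b}$, and $z_{\mathbf{a},\mathbf{b}}=0$ when $\mathbf{b}$ has no $\vdash_{\mathbf{a}}$-successor. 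With this encoding there is no representation problem at all (each variable already \emph{is} a triple over $A$), no counting quantifiers are needed (the interpretation is first-order), and the ``degree-one gadget'' edge case simply does not arise. A small additional step the paper needs, and your sketch omits, is the variable-reduction: for $\mathbf{b}\sim\mathbf{b}'$ in the same gadget one must also impose $w_{\mathbf{a},\mathbf{b}}=w_{\mathbf{a},\mathbf{b}'}$ and $z_{\mathbf{a},\mathbf{b}}=z_{\mathbf{a},\mathbf{b}'}$, so that the triples naming a variable really represent a single unknown. So: your decomposition of the $X$-equations is right, but the proposal as written does not complete the construction of $\Phi_q$; the missing idea is to index the slack variable by the neighbour rather than by an integer, which makes the canonicity problem vanish.
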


\begin{proof}
  From Lemma \ref{lem:equation}, we know that interpreting the system of equations
  $\mathbf{Eq}_q(G,g)$ would suffice for this purpose. However, the $X$ equations
  in $\mathbf{Eq}_q(G,g)$ contain a number of variables which grows with the
  size of the maximum degree of a vertex in $G$. As this is, in general, unbounded
  - and in particular is unbounded in Lichter's class - we need to introduce some
  equivalent equations in a bounded number of variables. To do this we will introduce
  some slack variables and utilise the ordering on $G$ to turn any such equation
  in $n$ variables into a series of equations in $3$ variables. We now describe the
  interpretation $\Phi_q$ as follows.\\
  Let $3$-$\mathbb{Z}_q$ denote the relational structure which contains a relation
  $T_{\mathbf{\alpha}, \beta}$ for each $\mathbf{\alpha}$ a tuple of elements of
  $\mathbb{Z}_q$ size up to 3 and $\beta \in \mathbb{Z}_q$. Each related tuple
  $(x,y,z) \in T_{\mathbf{\alpha}, \beta}$ in a $3$-$\mathbb{Z}_q$ structure is an
  equation
  $$ \alpha_1 x + \alpha_2 y + \alpha_3 z = \beta $$
  To help define the interpretation we introduce some shorthand for some easily interpretable
  relations on CFI structures $A$. For $\mathbf{a}, \mathbf{b} \in A$ write
  $\mathbf{a} \sim \mathbf{b}$ if the two elements belong to the same gadget in
  $A$ and $\mathbf{a} \frown \mathbf{b}$ if they belong to adjacent gadgets.
  Both of these relations are easily first-order definable as $\mathbf{a} \sim \mathbf{b}$
  if and only if they are incomparable in the $\prec$ relation and $\mathbf{a}\frown \mathbf{b}$
  if and only if $(\mathbf{a}, \mathbf{b})\in R_{E,c}$ for some $c$.
For $\mathbf{a} \frown \mathbf{b}$ in $A$ we will refer to the elements $(\mathbf{a}, \mathbf{a},\mathbf{b})$ and
  $(\mathbf{a}, \mathbf{b},\mathbf{b})$ as $w_{\mathbf{a},\mathbf{b}}$ and $z_{\mathbf{a}, \mathbf{b}}$.
  These will be the variables in the interpreted system of equations. As $A$ comes with a linear pre-order $\prec$ inherited from the order on $G$, we
  can also define a local predecessor relation in the neighbourhood of any $\mathbf{a} \in A$.
  We say that $\mathbf{b}$ is a local predecessor of $\mathbf{b}^{\prime}$ at $\mathbf{a}$
  and write $\mathbf{b} \vdash_{\mathbf{a}} \mathbf{b}^{\prime}$ if $\mathbf{a}\frown \mathbf{b}$
  and $\mathbf{a} \frown \mathbf{b}^{\prime}$ and there is no $\mathbf{b}^{\prime\prime}$
  with $\mathbf{a} \frown \mathbf{b}^{\prime\prime}$ such that $\mathbf{b} \prec
  \mathbf{b}^{\prime\prime} \prec \mathbf{b}^{\prime}$.\\

  Now we define the interpretation  on $A^{3}$ in three steps, resulting
  in a system of equations which is solvable if and only if $\mathbf{Eq}_q(G,g)$
  is solvable.
  \textbf{Step 1: Reducing variables}
  We note that in $\mathbf{Eq}_q(G,g)$ there are only variables $w_{\mathbf{a},y}$
  for $\mathbf{a} \in A_x$ and $y \in N(x)$, whereas the shorthand above describes
  variables $w_{\mathbf{a},\mathbf{b}}$ and $z_{\mathbf{a},\mathbf{b}}$ for all $\mathbf{a}\in A_x$ and $\mathbf{b}\in A_y$. To reduce the number of variables we want to interpret,
  for all $\mathbf{a}\frown\mathbf{b}$ and $\mathbf{b} \sim \mathbf{b}^{\prime}$,
  the equations $w_{\mathbf{a},\mathbf{b}} = w_{\mathbf{a},\mathbf{b}^{\prime}}$
  and $z_{\mathbf{a},\mathbf{b}} = z_{\mathbf{a},\mathbf{b}^{\prime}}$. This is done
  by add the pairs $(w_{\mathbf{a},\mathbf{b}},w_{\mathbf{a},\mathbf{b}^{\prime}})$ and $(z_{\mathbf{a},\mathbf{b}},z_{\mathbf{a},\mathbf{b}^{\prime}})$ to the relation  $T_{(1,-1),0}$
  which can be done as $\frown$ and $\sim$ are definable.

  \textbf{Step 2: Interpreting $I, C $ and $E$ equations}
   Defining these equations in $\Phi(A)$ is straightforward as they all have fewer
   than $3$ variables. In particular we want to add equations
   $$w_{\mathbf{a},\mathbf{b}} - w_{\mathbf{a}^{\prime},\mathbf{b}} = 0$$ for any
   $(\mathbf{a}, \mathbf{a}^{\prime}, \mathbf{b})\in R_{I}$,
   $$w_{\mathbf{a},\mathbf{b}} - w_{\mathbf{a}^{\prime},\mathbf{b}} = 1$$
   for any $(\mathbf{a}, \mathbf{a}^{\prime}, \mathbf{b})\in R_{C}$, and
   $$ w_{\mathbf{a},\mathbf{b}} + w_{\mathbf{b}, \mathbf{a}} = c $$
   for any $(\mathbf{a}, \mathbf{b}) \in R_{E,c}$. These are all easily first-order
   definable in the $\mathbf{CFI}_q$ signature.

  \textbf{Step 3: Interpreting $X$ equations}
  To interpret the equations for each $u\in G$ and $\mathbf{a} \in A_u$
  $$ \sum_{v \in N(u)} w_{\mathbf{a}, v} = 0 $$
  in $\Phi(A)$, we first note that the linear order on $G$ restricts to a
  linear order on $N(u)$ which we can write as $\{v_1, \ldots , v_n\}$ where $i<j$
  if and only if $v_i < v_j$. To do this it suffices to impose the equations
  $$ w_{\mathbf{a},\mathbf{b}_1} + \dots + w_{\mathbf{a},\mathbf{b}_n} = 0$$
  for each sequence of elements $\mathbf{b}_1 \vdash_{\mathbf{a}}  \ldots \vdash_{\mathbf{a}}
  \mathbf{b}_n$ with $\mathbf{b}_i \in A_{v_i}$. To do this in equations
  with at most three variables we employ the auxiliary $z$ variables in the following
  way. For any $\mathbf{a} \mathbf{b} \in A$ such that $\mathbf{a}\frown \mathbf{b}$,
  if there is no $\mathbf{b}^{\prime}$
  such that $\mathbf{b}^{\prime} \vdash_{\mathbf{a}} \mathbf{b}$, then we interpret the equation
  $$ w_{\mathbf{a}, \mathbf{b}} - z_{\mathbf{a},\mathbf{b}} = 0 $$
  if there is $\mathbf{b}^{\prime}$ such that $\mathbf{b}^{\prime} \vdash_{\mathbf{a}} \mathbf{b}$
  then interpret for all such $\mathbf{b}^{\prime}$ the equation
  $$ z_{\mathbf{a}, \mathbf{b}^{\prime}} + w_{\mathbf{a},\mathbf{b}} - z_{\mathbf{a}, \mathbf{b}} = 0 $$
  and if there is no $\mathbf{b}^{\prime}$ such that $\mathbf{b} \vdash_{\mathbf{a}} \mathbf{b}^{\prime}$
  then interpret the equation
  $$ z_{\mathbf{a}, \mathbf{b}} = 0 $$
   In this system of equations the $z_{\mathbf{a},\mathbf{b}}$ variables act as
   running totals for the sum $\sum w_{\mathbf{a}, \mathbf{b}_i}$ and so it is not
   hard to see that solutions to these equations are precisely solutions to the
   equations $\sum w_{\mathbf{a}, \mathbf{b}_i} = 0$. Furthermore, as the relation
   $\vdash_{\mathbf{a}}$ is definable in the signature of the $\mathbf{CFI}_q$
   structures so too are these equations.

   To conclude, we have interpreted in $\Phi(\mathbf{CFI}_q(G,g))$ a system of linear
   equations with three variables per equation which is solvable over $\mathbb{Z}_q$
   if and only if $\mathbf{Eq}_q(G,g)$ is solvable. Thus there is a homomorphism
   $\Phi(\mathbf{CFI}_q(G,g)) \rightarrow \mathbb{Z}_q$ (as $3$-$\mathbb{Z}_q$ structures)
   if and only if $\sum g = 0$.
\end{proof}

We can now conclude with the proof of Theorem \ref{thm:main}.

\begin{proof}[Proof of Theorem \ref{thm:main}]
  By Fact \ref{fact:cfi}, the reverse implication is easy as $\sum h = 0$ implies that
  $\mathbf{CFI}_{q}(G,g) \cong \mathbf{CFI}_{q}(G,h)$ and so the structures
  are cohomologically $k$-equivalent for any $k$. \\
  The converse follows from the series of lemmas we have just presented. If
  $\sum h \neq 0$ then by Lemma \ref{lem:interp} there is an interpretation $\Phi_q$
  of order $3$ such that $\Phi_q(\mathbf{CFI}_{q}(G,g)) \rightarrow \mathbb{Z}_q$
  but $\Phi_q(\mathbf{CFI}_{q}(G,h)) \not\rightarrow \mathbb{Z}_q$. By Theorem \ref{thm:rings},
  This is means that $\Phi_q(\mathbf{CFI}_{q}(G,g)) \rightarrow^{\mathbb{Z}}_3 \mathbb{Z}_q$
  but $\Phi_q(\mathbf{CFI}_{q}(G,h)) \not\rightarrow^{\mathbb{Z}}_3 \mathbb{Z}_q$. So by
  Observation \ref{obs:backforth}, we must have that $\Phi_q(\mathbf{CFI}_{q}(G,g)) \not\equiv^{\mathbb{Z}}_3 \Phi_q(\mathbf{CFI}_{q}(G,h))$. Then noting that the number of variables used in
  the interpretation $\Phi_q$ is some constant $c$ not depending on $q$ and assuming
  without loss of generality that $k$ is greater than $3c$ then Proposition \ref{prop:Z_interpretation}
  implies that $\mathbf{CFI}_{q}(G,g) \not\equiv^{\mathbb{Z}}_k \mathbf{CFI}_{q}(G,h)$, as required.
\end{proof}

\end{document}